\newcommand{\setA}{\mathcal{A}}
\newcommand{\setB}{\mathcal{B}}
\newcommand{\setC}{\mathcal{C}}
\newcommand{\setD}{\mathcal{D}}
\newcommand{\setI}{\mathcal{I}}
\newcommand{\setK}{\mathcal{K}}
\newcommand{\setM}{\mathcal{M}}
\newcommand{\setQ}{\mathcal{Q}}
\newcommand{\setR}{\mathcal{R}}
\newcommand{\setU}{\mathcal{U}}
\newcommand{\setV}{\mathcal{V}}
\newcommand{\matA}{{\boldsymbol{A}}}
\newcommand{\matB}{{\boldsymbol{B}}}
\newcommand{\matC}{{\boldsymbol{C}}}
\newcommand{\matD}{{\boldsymbol{D}}}
\newcommand{\matI}{{\boldsymbol{I}}}
\newcommand{\matJ}{{\boldsymbol{J}}}
\newcommand{\matP}{{\boldsymbol{P}}}
\newcommand{\matQ}{{\boldsymbol{Q}}}
\newcommand{\matR}{{\boldsymbol{R}}}
\newcommand{\matS}{{\boldsymbol{S}}}
\newcommand{\matU}{{\boldsymbol{U}}}
\newcommand{\matV}{{\boldsymbol{V}}}
\newcommand{\matX}{{\boldsymbol{X}}}
\newcommand{\matY}{{\boldsymbol{Y}}}
\newcommand{\matzero}{{\boldsymbol{0}}}
\newcommand{\matSigma}{\Upsigma}
\newcommand{\veca}{{\boldsymbol{a}}}
\newcommand{\vecb}{{\boldsymbol{b}}}
\newcommand{\vecu}{{\boldsymbol{u}}}
\newcommand{\vecv}{{\boldsymbol{v}}}
\newcommand{\vecw}{{\boldsymbol{w}}}
\newcommand{\vecx}{{\boldsymbol{x}}}
\newcommand{\vecy}{{\boldsymbol{y}}}
\newcommand{\vecz}{{\boldsymbol{z}}}
\newcommand{\veczero}{{\boldsymbol{0}}}
\newcommand{\rmatA}{{\boldsymbol{\mathsf{A}}}}
\newcommand{\rmatB}{{\boldsymbol{\mathsf{B}}}}
\newcommand{\rmatX}{{\boldsymbol{\mathsf{X}}}}
\newcommand{\rveca}{{\boldsymbol{\mathsf{a}}}}
\newcommand{\rvecb}{{\boldsymbol{\mathsf{b}}}}
\newcommand{\rvecx}{{\boldsymbol{\mathsf{x}}}}
\newcommand{\naturals}{\mathbb N}
\newcommand{\reals}{\mathbb R}
\newcommand{\diag}{diag}			
\newcommand{\rank}{rank}	
\newcommand{\tp}[1]{\ensuremath{#1^{\mathsf{T}}}} 
\newcommand{\tr}{tr}
\newcommand{\opP}{\operatorname{P}}
\newcommand{\opE}{\operatorname{E}}
\newcommand{\lebmeasure}{\lambda}
\newcommand{\ind}[1]{\chi_{#1}}	
\crefname{equation}{}{}
\newtheorem{theorem}{Theorem}[section]
\newtheorem{proposition}{Proposition}[section]
\newtheorem{definition}{Definition}[section]
\newtheorem{lemma}{Lemma}[section]
\newtheorem{example}{Example}[section]
\newtheorem{corollary}{Corollary}[section]
\numberwithin{equation}{section}
\begin{document}

\title{Completion of Matrices with Low Description Complexity}
\author{E. Riegler,  G. Koliander, D. Stotz, and H. B\"olcskei}
\date{}



\allowdisplaybreaks[3]



\maketitle

\begin{abstract}
We propose a theory for matrix completion that goes beyond the low-rank structure commonly considered in the literature and applies to general matrices of low description complexity. Specifically, complexity of the sets of matrices encompassed by the theory is measured in terms of Hausdorff and upper Minkowski dimensions.
Our goal is the characterization of the number of linear measurements, with an emphasis on  rank-$1$ measurements, needed for the existence of an algorithm that yields
reconstruction, either perfect, with probability 1, or with arbitrarily small probability of error, depending on the setup.
Concretely, we show that matrices taken from a set $\setU$ such that $\setU-\setU$ has Hausdorff dimension $s$ 
can be recovered from $k>s$ measurements, and random matrices supported on a set $\setU$ of Hausdorff dimension $s$ 
can be recovered with probability 1 from $k>s$ measurements. What is more, we establish the existence of 
recovery mappings that are robust against additive perturbations or noise in the measurements. Concretely, we show that there are 
 $\beta$-H\"older continuous mappings recovering
matrices taken from a set of upper Minkowski dimension $s$ from $k>2s/(1-\beta)$ measurements and, with arbitrarily small probability of error,
random matrices supported on a set of upper Minkowski dimension $s$ from $k>s/(1-\beta)$ measurements.
The numerous concrete examples we consider include low-rank matrices, sparse matrices, QR decompositions with sparse R-components, and matrices 
of fractal nature.
\end{abstract}



\section{Introduction} 

Matrix completion refers to the recovery of a low-rank matrix from a (small) subset of its entries or a (small) number of linear combinations thereof.
This problem arises in a wide range of applications including quantum state tomography, face recognition, recommender systems, and sensor localization (see, e.g., \cite{capl11,dogamo13} and references therein).  

The formal problem statement is as follows. Suppose we have $k$ linear measurements of the matrix $\matX\in\reals^{m\times n}$  with $\rank(\matX)\leq r$ in the form of
\begin{align}\label{eq:smodel}
\vecy
&=\tp{(\langle\matA_1,\matX\rangle\; \dots \;\langle\matA_k,\matX\rangle)},  
\end{align}
where $\langle \matA_i,\matX \rangle = \tr (\tp{\matA_i} \matX)$ is the standard trace inner product between matrices; 
the
$\matA_i\in \reals^{m\times n}$ are referred to as measurement matrices. 
The number of measurements, $k$, is  typically much smaller than the total number of entries, $mn$, of $\matX$. 
Depending on the  $\matA_i$, the measurements can simply be individual entries of $\matX$ or general linear combinations thereof. Can we recover $\matX$ from $\vecy$?

The vast literature on matrix completion (for a highly incomplete list see \cite{gr11,care09,cata10,refapa10,capl10,kemooh10,kemooh10a,capl11, re11,cazh14,cazh15,tabadr12,dogamo13,klopp2019structured}) provides thresholds on the number of measurements needed for successful recovery of the unknown low-rank matrix $\matX$, under various assumptions on the measurement matrices $\matA_i$ and the low-rank models generating $\matX$. 
For instance, 
in \cite{gr11} the  $\matA_i$ are chosen randomly from a fixed  orthonormal (w.r.t.\ the trace inner product) basis for $\reals^{n\times n}$ and it is shown that an unknown $n\times n$ matrix $\matX$ of rank no more than $r$ can be recovered with high probability if $k\geq\mathcal{O}(nr\nu\ln^2 n)$. 
Here, $\nu$ denotes the  coherence \cite[Def.~1]{gr11} between the unknown matrix $\matX$ and the orthonormal basis the $\matA_i$ are drawn from. 

The setting in \cite{care09} assumes random%
\footnote{We indicate random quantities by roman sans-serif letters such as  $\rmatA$.}
 measurement matrices $\rmatA_i$  with the position of the only nonzero entry, which is equal to one, chosen uniformly at random.  
It is shown that almost all (a.a.)  matrices $\matX$ (where a.a. is with respect to the random orthogonal model \cite[Def. 2.1]{care09}) of rank no more than $r$ can be recovered with high probability (with respect to the measurement matrices) provided that the number of measurements satisfies $k\geq Cn^{1.25}r\ln n$, where $C$ is a constant. 

In  \cite{capl11} it is shown that for random measurement matrices $\rmatA_i$  satisfying a certain concentration property, matrices $\matX$ of rank no more than $r$ can be recovered with high probability from $k\geq C(m+n)r$ measurements, where $C$ is a constant. 

The results on recovery thresholds reviewed so far as well as those in   \cite{cata10,refapa10, re11,cazh14,cazh15} all pertain to recovery through nuclear norm minimization. 

In \cite{elnepl12} measurement matrices $\rmatA_i$ containing i.i.d.\  entries drawn from an absolutely continuous (with respect to Lebesgue measure) distribution are considered.  
It is shown that rank minimization (which is NP-hard, in general) recovers $n\times n$ matrices $\matX$ of rank no more than $r$ with probability 1 if $k> (2n-r)r$. 
Furthermore, it is established that all matrices $\matX$ of rank no more than $n/2$ can be recovered, again through rank minimization and with probability 1, provided that  $k\geq 4nr-4r^2$.  

The recovery thresholds in  \cite{elnepl12},\cite{capl11} do not exhibit a $(\log n)$-factor, but assume significant  richness in the random measurement matrices $\rmatA_i$. 
Storing and applying the realizations of such measurement matrices is costly in terms of memory and computation time, respectively.  
To alleviate this problem, \cite{cazh15} considers rank-1 measurement matrices of the form $\rmatA_i=\rveca_i\tp{\rvecb_i}$, where the random vectors $\rveca_i \in \reals^m$ and $\rvecb_i \in \reals^n$ 
are independent with i.i.d.\  Gaussian or sub-Gaussian entries;  it is shown that nuclear norm minimization succeeds under the same recovery threshold  as in \cite{capl11}, namely $k\geq C(m+n)r$ for some constant $C$. 

The recovery of matrices that, along with the measurement matrices, belong to algebraic varieties was studied in  \cite{rong2021almost,xu18,waxu19}. As a byproduct, it is shown that
almost all rank-$r$ matrices in $\reals^{m\times n}$ can  be recovered from $k> (m+n-r)r$ measurements taken with measurement matrices of arbitrary rank.

Finally, the application of recent results on analog signal compression \cite{wuve10,striagbo15,albole19,bagusp19} to matrix completion yields recovery thresholds for a.a. measurement matrices $\matA_i$ and random matrices  $\rmatX$ that have low description complexity in the sense of \cite{albole19}. Specifically, the results in \cite{albole19} can be transferred to matrix completion by writing the trace inner product $\langle \matA_i,\matX \rangle$  as the standard inner product between the vectorized matrices $\matA_i$ and $\matX$ (obtained by stacking the columns). 
The definition of ``low description complexity'' as put forward in \cite{albole19} goes beyond the usual assumption of $\rmatX$ having low rank. 
It essentially says that the matrix takes value in some low-dimensional%
\footnote{The precise dimension measures used in the definition of low description complexity in  \cite{wuve10,striagbo15,albole19,bagusp19}  are, depending on the context, lower modified Minkowski dimension, upper Minkowski dimension, or Hausdorff dimension.} 
set $\setU\subseteq\reals^{m\times n}$ with probability 1. 
This set  $\setU$ can, for example, be the set of all matrices with rank no more than $r$, but much more general structures are possible.

\emph{Contributions.} 
The purpose of this paper is to  establish fundamental recovery thresholds (i.e., thresholds not restricted to a certain recovery scheme) for 
rank-1 measurement matrices $\matA_i=\veca_i\tp{\vecb_i}$ applied to data matrices $\matX$ taking value in low-dimensional  sets $\setU\subseteq\reals^{m\times n}$.
Rank-1 measurement matrices are practically relevant due to reduced storage requirements and lower computational complexity in the evaluation of the trace inner product  
$\langle \veca_i\tp{\vecb_i}, \matX\rangle= \tp{\veca_i}\matX\vecb_i$.
We consider both deterministic data matrices $\matX$ with associated recovery guarantees for all $\matX \in \setU$ and random $\rmatX$ accompanied by recovery guarantees either 
with probability 1 or with arbitrarily small probability of error. 
The recovery thresholds we obtain are in terms of the Hausdorff dimension of the support set $\setU$ of the data matrices. 
Furthermore, we establish bounds--in terms of the upper Minkowski dimension of $\setU$--on the number of measurements needed to guarantee 
H\"older continuous recovery, and hence robustness  against additive perturbations or noise. 
Hausdorff and upper Minkowski dimension are particularly easy to characterize for countably rectifiable and rectifiable sets \cite{fed69}, respectively.
These concepts comprise most practically relevant data structures such as low rank and sparsity in terms of the number of nonzero entries as well as
the Kronecker product, matrix product, or sum of any such matrices. 
As an example of sets that do not fall into the rich class of rectifiable sets, but our theory still applies to, we consider  sets of fractal nature. Specifically, we investigate 
 attractor sets of  recurrent iterated function systems \cite{baelha89}. 
 
Recovery thresholds for general (as opposed to rank-1) measurement matrices follow in a relatively straightforward manner through vectorization from the theory of lossless analog compression as developed in  \cite{albole19}. 
For the reader's convenience, we shall describe these extensions in brief wherever appropriate.
We finally note that a preliminary version of part of the work reported in the present paper, specifically weaker results for more restrictive sets $\setU$ of bounded Minkowski dimension and without  statements on H\"older-continuous recovery, was presented in \cite{ristbo15} by a subset of the authors.

\emph{Organization of the paper.} 
In \cref{sec:mainres}, we present recovery thresholds for matrices $\matX$ taking value in a general set $\setU\subseteq\reals^{m\times n}$.
These results are formulated in terms of Hausdorff and upper Minkowski dimension of $\setU$.
 In \cref{sec:1}, we introduce the concept of rectifiable and countably rectifiable sets from geometric measure theory \cite{fed69} and we characterize the upper Minkowski and Hausdorff dimensions of such sets.
Furthermore, it is shown that many practically relevant sets of structured matrices are (countably) rectifiable. The particularization of our general recovery thresholds to the rectifiable case concludes this section. In \cref{sec:RIFS}, we particularize our general recovery thresholds to  
 attractor sets of  recurrent iterated function systems.  
The proofs of our main results, \cref{th1,th2} and \cref{prp:ns,prp:nsreg} are contained in \crefrange{sec:proofth1}{nsregproof}.

\emph{Notation.} 
Capitalized boldface letters $\matA,\matB,\dots$  designate deterministic matrices and lowercase boldface letters $\veca,\vecb,\dots$ stand for deterministic vectors. 
We use roman sans-serif letters for random quantities (e.g., $\rvecx$ for a random vector and $\rmatA$ for a random matrix). 
Random quantities are assumed to be defined on the Borel $\sigma$-algebra of the underlying space. $\opP[\rmatX\in\setU]$ denotes the probability of $\rmatX$ being in the Borel set $\setU$.
We write $\lebmeasure$ for  Lebesgue measure.  
The superscript  $\tp{}$ stands for transposition. 
The  ordered singular values of a matrix $\matA$ are denoted by $\sigma_1(\matA)\geq\dots\geq\sigma_n(\matA)$.
We write $\matA\otimes\matB$ for  the Kronecker product of the matrices $\matA$ and $\matB$, denote the trace of $\matA$ by  
$\tr(\matA)$, 
and let   
$\langle\matA,\matB\rangle=\tr(\tp{\matA}\matB)$ be the trace inner product of $\matA$ and $\matB$.  
Further, 
$\lVert\matA\rVert_2=\sqrt{\langle\matA,\matA\rangle}$ and   $\lVert\matA\rVert_0$ refers to the number of  nonzero entries of $\matA$.  
For the Euclidean space $(\reals^k,\lVert\,\cdot\,\rVert_2)$, we denote the open ball of radius $s$ centered at $\vecu\in \reals^k$ by $\setB_k(\vecu,s)$, $V(k,s)$ stands for its volume.  
Similarly, forthe Euclidean space $(\reals^{m\times n},\lVert\,\cdot\,\rVert_2)$, the open ball of radius $s$  centered at $\matU\in \reals^{m\times n}$ is  $\setB_{m\times n}(\matU,s)$. 
We set $\setM_r^{m\times n}=\{\matX\in\reals^{m\times n}:\rank(\matX)\leq r\}$ 
and let $\setA_s^{m\times n}=\{\matX\in\reals^{m\times n}:\lVert\matX\rVert_0\leq s\}$.  
The closure of the set $\setU$ is denoted by $\overline{\setU}$. 
The Cartesian product of the sets $\setA$ and $\setB$ in Euclidean space
is written as $\setA\times \setB$ and their Minkowski difference is   designated by $\setA-\setB$. 
The indicator function of a set $\setU$ is designated by $\chi_{\setU}$.
For a bounded set $\setU\subseteq\reals^{m\times n}$ and $\delta>0$, we denote the covering number of $\setU$ by  
\begin{align}
N_{\delta}(\setU)=\min\Bigg\{N\in\naturals: \exists \matY_1,\dots,\matY_N\in\setU \textrm{ s.t. } \setU\subseteq\bigcup_{i=1}^N\setB_{m\times n}(\matY_i,\delta)\Bigg\}. 
\end{align}
For $\setU,\setV\subseteq\reals^{m\times n}$, we set $\setU-\setV=\{\matU-\matV:\matU\in\setU,\matV\in\setV\}$. 
We let $\dim_\mathrm{H}(\cdot)$ denote the Hausdorff dimension  \cite[Equation (3.10)]{fa14}, defined by 
\begin{equation}
  \dim_\mathrm{H}(\setU) = \inf\{s \geq 0: \mathscr{H}^s(\setU) = 0\},
\end{equation} 
where $\mathscr{H}^s(\setU) = \lim_{\delta \to 0} \inf \big\{\sum_{i=1}^\infty \operatorname{diam}^s(\setU_i): \{\setU_i\} \text{ is a $\delta$-cover of $\setU$} \big\}$ is the $s$-di\-men\-sion\-al Hausdorff measure of $\setU$ \cite[Equation (3.2)]{fa14}.

Furthermore, $\overline{\dim}_\mathrm{B}(\cdot)$  and $\underline{\dim}_\mathrm{B}(\cdot)$ refer to the upper and lower Minkowski dimension \cite[Definition 2.1]{fa14} defined as 
\begin{equation}\label{eq:dimbupper}
  \overline{\dim}_\mathrm{B}(\setU) = \operatorname{lim\,sup}_{\delta \to 0} \frac{\log N_{\delta}(\setU)}{\log (1/\delta)}
\end{equation} 
and 
\begin{equation}\label{eq:dimblower}
  \underline{\dim}_\mathrm{B}(\setU) = \operatorname{lim\,inf}_{\delta \to 0} \frac{\log N_{\delta}(\setU)}{\log (1/\delta)},
\end{equation} 
respectively.
Finally, we  note that  \cite[Proposition 3.4]{fa14}
\begin{align}\label{eq:HMB}
\dim_\mathrm{H}(\setU) \leq  \underline{\dim}_\mathrm{B}(\setU)  \leq  \overline{\dim}_\mathrm{B}(\setU) 
\end{align}
for all nonempty  subsets in Euclidean spaces. 

\section{Main Results}
\label{sec:mainres}

Our first main result provides a threshold for recovery of matrices $\matX$ from rank-1 measurements  in a very general setting. Specifically, the matrices $\matX$ are assumed to 
take value in some set $\setU$ and the recovery threshold is in terms of either $\dim_\mathrm{H}(\setU)$ or
$\dim_\mathrm{H}(\setU-\setU)$.

\begin{theorem}\label{th1}
For every nonempty set $\setU\subseteq\reals^{m\times n}$,  the following holds:
\begin{enumerate}[label=\roman*)]
\item\label{th1item1}
The mapping 
\begin{align}
\setU&\to \reals^{k}\label{eq:linearity1}\\
\matX&\mapsto \tp{(\tp{\veca_1}\matX\vecb_1\ \dots\ \tp{\veca_k}\matX\vecb_k)}\label{eq:linearity2}
\end{align}
is one-to-one for 
Lebesgue a.a. $((\veca_1 \dots \veca_k), (\vecb_1 \dots \vecb_k))  \in\reals^{m\times k}\times \reals^{n\times k}$
provided that $\dim_\mathrm{H}(\setU-\setU)<k$.
\item \label{th1item2}
Suppose that $\setU$  is Borel and 
consider an $m\times n$ random matrix  $\rmatX$ 
 satisfying $\opP[\rmatX\in\setU]=1$.     
Then, for Lebesgue a.a. $((\veca_1 \dots \veca_k), (\vecb_1 \dots \vecb_k))  \in\reals^{m\times k}\times \reals^{n\times k}$,  
there exists  
a  Borel-measurable mapping $g\colon \reals^k\to\reals^{m\times n}$ satisfying  
\begin{align}\label{eq:decoder}
\opP\Big[g\Big( \tp{(\tp{\veca_1}\rmatX\vecb_1\ \dots\ \tp{\veca_k}\rmatX\vecb_k)}\Big)\neq\rmatX\Big]=0  
\end{align}
 provided that $\dim_\mathrm{H}(\setU)<k$. 
\end{enumerate}
\end{theorem}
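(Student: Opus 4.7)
The plan is to reduce both parts of \cref{th1} to the following kernel-avoidance lemma: for every Borel set $\setD \subseteq \reals^{m \times n}\setminus\{\matzero\}$ with $\dim_\mathrm{H}(\setD) < k$,
\begin{equation}\label{eq:proposal-keylemma}
\lebmeasure\bigl(\bigl\{((\veca_i)_{i=1}^k, (\vecb_i)_{i=1}^k) : \exists\, \matZ \in \setD \text{ with } \tp{\veca_i}\matZ\vecb_i = 0 \text{ for all } i\bigr\}\bigr) = 0.
\end{equation}
Setting $L_{\veca,\vecb}(\matX) := \tp{(\tp{\veca_1}\matX\vecb_1\ \dots\ \tp{\veca_k}\matX\vecb_k)}$, linearity in $\matX$ implies that $L_{\veca,\vecb}$ fails to be injective on $\setU$ if and only if $\ker L_{\veca,\vecb}$ meets $(\setU - \setU) \setminus\{\matzero\}$. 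Thus part (i) follows immediately from \eqref{eq:proposal-keylemma} applied to $\setD = (\setU - \setU)\setminus\{\matzero\}$, using the hypothesis $\dim_\mathrm{H}(\setU-\setU) < k$.

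\textbf{Part (ii) from the lemma.} Define the collision set
\begin{equation*}
\mathcal{N}_{\veca,\vecb} := \{\matX \in \setU : \exists\, \matY \in \setU\setminus\{\matX\} \text{ with } L_{\veca,\vecb}(\matY) = L_{\veca,\vecb}(\matX)\}.
\end{equation*}
For every fixed $\matX\in\setU$, $\dim_\mathrm{H}((\setU-\matX)\setminus\{\matzero\}) = \dim_\mathrm{H}(\setU) < k$, so \eqref{eq:proposal-keylemma} applied to that translate yields $\lebmeasure(\{(\veca,\vecb) : \matX \in \mathcal{N}_{\veca,\vecb}\}) = 0$. A Fubini--Tonelli argument on the jointly (analytically) measurable indicator of $\{(\veca,\vecb,\matX) : \matX\in\mathcal{N}_{\veca,\vecb}\}$ then gives $\opP[\rmatX \in \mathcal{N}_{\veca,\vecb}] = 0$ for Lebesgue-a.a.~$(\veca,\vecb)$. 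For such good parameters, $L_{\veca,\vecb}$ is injective on $\setU\setminus\mathcal{N}_{\veca,\vecb}$; the Kuratowski--Ryll-Nardzewski measurable selection theorem furnishes a Borel inverse on the image, and an arbitrary Borel extension to $\reals^k$ yields the required map $g$ satisfying \eqref{eq:decoder}.

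\textbf{Proof of the key lemma.} I would establish \eqref{eq:proposal-keylemma} by induction on $k$. For $k=1$ and $\dim_\mathrm{H}(\setD)<1$, fix $\vecb$ outside a Lebesgue null set so that the Lipschitz image $\{\matZ\vecb:\matZ\in\setD\} \subseteq \reals^m$ still has Hausdorff dimension strictly less than $1$; then for Lebesgue-a.a.~$\veca$, the hyperplane $\veca^\perp$ misses this image, giving $\tp{\veca}\matZ\vecb \neq 0$ for every $\matZ \in \setD$. The inductive step requires, for Lebesgue-a.a.~$(\veca_k,\vecb_k)$, that the slice $\setD \cap \{\matZ : \tp{\veca_k}\matZ\vecb_k = 0\}$ has Hausdorff dimension strictly less than $k-1$; combined with the induction hypothesis for the remaining $k-1$ measurements, this closes the argument.

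\textbf{Main obstacle.} The hard part is precisely this slicing step. The parameters $(\veca,\vecb)\in\reals^{m+n}$ pick out a linear functional $\matZ\mapsto\langle\veca\tp{\vecb},\matZ\rangle$ from the $(m+n-1)$-dimensional subvariety of rank-1 matrices inside $(\reals^{m\times n})^*$, so classical Marstrand--Mattila projection/slicing theorems on $(\reals^{mn})^k$---as used in \cite{albole19} for unrestricted measurement matrices---cannot be invoked off the shelf. I expect to overcome this by exploiting bilinearity: after conditioning on $\vecb$, the rank-1 constraint $\tp{\veca}\matZ\vecb = 0$ reduces to an ordinary linear condition on $\veca\in\reals^m$ applied to the Lipschitz image $\setD\vecb\subseteq\reals^m$, for which the standard projection/slicing machinery does apply and can be iterated across all $k$ measurements.
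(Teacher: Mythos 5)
Your reduction of both parts to a kernel-avoidance statement is exactly the paper's architecture: your key lemma is the paper's \cref{prp:ns}, part (i) follows by linearity applied to $\setU-\setU$, and part (ii) follows by applying it to the translates $\setU-\matX$, Fubini, and a measurable-selection argument (the paper handles the selection by first decomposing the support into countably many compact sets, which also sidesteps the fact that your collision set $\mathcal{N}_{\veca,\vecb}$ is only analytic, not obviously Borel). The genuine gap is in the proof of the key lemma itself, which is where all the difficulty of the theorem lives. The inductive slicing step you flag as the ``main obstacle'' is not just unproven but, as sketched, does not go through. First, the hyperplanes $\{\matX:\tp{\veca}\matX\vecb=0\}$ all pass through the origin and their normals $\veca\tp{\vecb}$ range over only an $(m+n-1)$-dimensional subvariety of the full Grassmannian of hyperplanes in $\reals^{mn}$, so no off-the-shelf Marstrand--Mattila slicing theorem delivers ``dimension drops by one for a.e.\ $(\veca,\vecb)$''. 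Second, your proposed fix---conditioning on $\vecb$ and slicing the Lipschitz image $\setD\vecb\subseteq\reals^m$ by the hyperplanes $\veca^\perp$---controls $\dim_\mathrm{H}(\setD\vecb\cap\veca^\perp)$, but what the induction needs is $\dim_\mathrm{H}(\setD\cap\{\matX:\matX\vecb\in\veca^\perp\})$, and the linear map $\matX\mapsto\matX\vecb$ can collapse a high-dimensional subset of $\setD$ onto a single point of $\veca^\perp$; the dimension of the slice upstairs is not bounded by the dimension of the slice of the image. (Your base case has a related defect: $\setD\vecb$ may contain $\veczero$ whenever some nonzero $\matX\in\setD$ has $\vecb\in\ker\matX$, in which case every $\veca$ fails, so one must first prove an avoidance statement for the kernels themselves.)

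The paper's proof of \cref{prp:ns} takes a non-inductive route that treats all $k$ measurements at once: it covers a truncation $\setU_L=\{\matX\in\setU: 1/L<\sigma_1(\matX)<L\}$ by balls $\setB_{m\times n}(\matY_i,\varepsilon_i)$ with $\sum_i\varepsilon_i^{\kappa}<\varepsilon$ for a $\kappa$ strictly between $\dim_\mathrm{H}(\setU)$ and $k$ (possible because $\mathscr{H}^{\kappa}(\setU)=0$), and bounds the probability that the measurement vector nearly vanishes at each center via an anti-concentration inequality for the bilinear form: for $\rveca,\rvecb$ uniform on balls, $\opP[\lvert\tp{\rveca}\matX\rvecb\rvert\leq\delta]\lesssim(\delta/\sigma_1(\matX))(1+\log(\sigma_1(\matX)/\delta))$ (\cref{lem:com}), raised to the power $k$ by independence (\cref{lem:com1}). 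The logarithmic loss---absent in the linear, unrestricted-measurement setting---is exactly the price of the rank-one structure, and the slack $\kappa<k$ absorbs it. This estimate is what you would need to substitute for the missing slicing theorem to complete your program.
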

\begin{proof}
See  \cref{sec:proofth1}.
\end{proof}
The first part of the theorem states that in the deterministic case, $k>\dim_\mathrm{H}(\setU-\setU)$ rank-1 measurements suffice for unique recovery of $\matX\in\setU$ (except for measurement vectors $\veca_i,\vecb_i$ supported on a Lebesgue null-set). In the probabilistic setting of the second part, $k>\dim_\mathrm{H}(\setU)$ measurements suffice for the existence of a Borel-measurable recovery mapping achieving zero error. While these are the most general versions of our recovery results, it can be difficult to evaluate $\dim_\mathrm{H}(\setU)$ and $\dim_\mathrm{H}(\setU -\setU)$ for sets $\setU$ with interesting
structural properties.
In \cref{sec:1}, we shall see, however, that these dimensions are easily computed for rectifiable sets, which, in turn, encompass many structures of practical relevance such as low rank or sparsity.

A  version of  \cref{th1} in terms of lower Minkowski dimension instead of Hausdorff dimension 
was found by a subset of the authors of the present paper in \cite[Theorem~2]{ristbo15} and was subsequently extended by Li et al.\ to the complex-valued case in \cite[Theorem 8]{lilebr17}. 
As lower Minkowski dimension is always greater than or equal to Hausdorff dimension (see \cref{eq:HMB}), \cref{th1} strengthens \cite[Theorem~2]{ristbo15}.
In addition, lower Minkowski dimension is defined for bounded sets $\setU$ only, a restriction not shared by Hausdorff dimension.   

A vectorization argument, concretely, stacking the columns of the data matrix $\matX$ and the measurement matrices $\matA_i$, shows that \cite[Theorem~3.7]{bagusp19}
implies \cref{th1item1} in \cref{th1} and \cite[Corollary~3.4]{bagusp19} implies \cref{th1item2}, in both cases, however, for the rank-1 measurement matrices $\veca_i\tp{\vecb_i}$ replaced by generic measurement matrices $\matA_i\in \reals^{m\times n}$.
As the set of rank-1 matrices is a null-set when viewed as a subset of $\reals^{m\times n}$, these results do not imply  our \cref{th1}.
Furthermore, the technical challenges 
 in establishing \cref{th1} are quite different from those encountered in \cite{bagusp19}, which, in turn, builds on \cite{albole19}.
In particular, here we need a stronger concentration of measure inequality (see  \cref{lem:com1} in  \cref{prp:nsproof}).  

The proof of  \cref{th1} detailed in \cref{sec:proofth1} is based on  the following result, which is similar in spirit to the null-space property in compressed sensing theory \cite[Theorem~2.13]{fora13}:
\begin{proposition}\label{prp:ns} 
Consider a nonempty set $\setU\subseteq\reals^{m\times n}$ with $\dim_\mathrm{H}(\setU)<k$.
Then, 
\begin{align}\label{eq:proptoshow}
\{\matX\in\setU\!\setminus\!\{\matzero\}: \tp{(\tp{\veca_1}\matX\vecb_1\ \dots\ \tp{\veca_k}\matX\vecb_k)}=\veczero\}=\emptyset,
\end{align}
for Lebesgue a.a.  $((\veca_1 \dots \veca_k), (\vecb_1 \dots \vecb_k))  \in\reals^{m\times k}\times\reals^{n\times k}$.  
\end{proposition}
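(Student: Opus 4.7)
The plan is to recast the statement as the assertion that a certain Lebesgue-measurable set of bad measurement-vector tuples has measure zero, and to prove this by combining a single-measurement concentration-of-measure bound with a Hausdorff $s$-covering of $\setU$ for some $s$ strictly between $\dim_\mathrm{H}(\setU)$ and $k$. To remove unboundedness, I would exhaust: write $\setU\setminus\{\matzero\}=\bigcup_{\ell\ge 1}\setU_\ell$ with $\setU_\ell=\setU\cap\{\matX:1/\ell\le\lVert\matX\rVert_2\le\ell\}$, and restrict $\veca_i,\vecb_i$ to balls of radius $R$. Since a countable union of null sets is null, it suffices to show that for every fixed $R,\ell$ the set
\[
B(R,\ell)=\bigl\{((\veca_i)_{i=1}^k,(\vecb_i)_{i=1}^k):\lVert\veca_i\rVert_2,\lVert\vecb_i\rVert_2\le R,\ \exists\,\matX\in\setU_\ell,\ \tp{\veca_i}\matX\vecb_i=0\ \forall\, i\bigr\}
\]
has Lebesgue measure zero, noting that $\dim_\mathrm{H}(\setU_\ell)\le\dim_\mathrm{H}(\setU)<k$.

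The analytic core is a single-measurement concentration estimate
\[
\lebmeasure\bigl\{(\veca,\vecb)\in\setB_m(\veczero,R)\times\setB_n(\veczero,R):|\tp{\veca}\matY\vecb|\le\epsilon\bigr\}\le C(R,\alpha)\bigl(\epsilon/\lVert\matY\rVert_2\bigr)^{\alpha}
\]
valid for every $\matY\neq\matzero$ and every $\alpha\in(0,1)$. I would establish this through the SVD of $\matY$: after an orthogonal change of variables, $\tp{\veca}\matY\vecb=\sum_{j=1}^{\rank(\matY)}\sigma_j X_j Y_j$ with $(X_j),(Y_j)$ independent and of bounded density on $[-R,R]$. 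Conditioning on all coordinates except $(X_1,Y_1)$ reduces the estimate to the area of a neighborhood of a hyperbola in $(X_1,Y_1)$-space, which is of order $(\epsilon/\sigma_1)\log(R^2\sigma_1/\epsilon)$ uniformly in the shift; since $\sigma_1\ge\lVert\matY\rVert_2/\sqrt{\min(m,n)}$, the logarithm is absorbable into $(\epsilon/\lVert\matY\rVert_2)^{1-\gamma}$ for any $\gamma>0$.

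Granted this lemma, the argument closes by covering. Fix $s\in(\dim_\mathrm{H}(\setU),k)$ and $\alpha\in(0,1)$ with $k\alpha>s$; since $\mathscr{H}^s(\setU)=0$, for every $\eta,\delta>0$ there is a $\delta$-cover $\{\setU_q\}$ of $\setU_\ell$ with $\sum_q\mathrm{diam}(\setU_q)^s<\eta$. Pick $\matY_q\in\setU_q\cap\setU_\ell$; for any $\matX\in\setU_q\cap\setU_\ell$ with $\tp{\veca_i}\matX\vecb_i=0$, Cauchy--Schwarz gives $|\tp{\veca_i}\matY_q\vecb_i|\le R^2\mathrm{diam}(\setU_q)$. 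Hence
\[
B(R,\ell)\subseteq\bigcup_q\bigcap_{i=1}^k\bigl\{(\veca_i,\vecb_i):|\tp{\veca_i}\matY_q\vecb_i|\le R^2\mathrm{diam}(\setU_q)\bigr\},
\]
and independence across $i$ combined with the concentration bound (using $\lVert\matY_q\rVert_2\ge 1/\ell$) gives $\lebmeasure(B(R,\ell))\le C^{k}\sum_q(\ell R^{2}\mathrm{diam}(\setU_q))^{k\alpha}\le C'\delta^{k\alpha-s}\eta\to 0$ as $\delta,\eta\to 0$.

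The main obstacle will be the concentration bound. The rank-$1$ constraint forces the ``effective'' measurement vector $\veca\otimes\vecb$ onto a thin subvariety of $\reals^{mn}$, so one cannot invoke ordinary linear concentration as in the generic-measurement case of \cite{bagusp19}; the bilinear form must be handled directly through the SVD, and the loss of a logarithmic factor is essentially tight for rank-$1$ $\matY$. This is why the hypothesis must be the strict inequality $\dim_\mathrm{H}(\setU)<k$, and it is precisely the sharper concentration inequality that the authors allude to immediately after \cref{th1}.
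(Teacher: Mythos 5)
Your proposal is correct and follows essentially the same route as the paper's proof: a countable exhaustion to bounded measurement vectors and matrices with norm bounded away from zero, an SVD-based anti-concentration bound for the bilinear form $\tp{\veca}\matY\vecb$ carrying a logarithmic factor (the paper's \cref{lem:com,lem:com1}), and a covering argument exploiting $\mathscr{H}^{s}(\setU)=0$ for some $s$ strictly between $\dim_\mathrm{H}(\setU)$ and $k$, with the zero-measurement condition transferred to the covering centers by Cauchy--Schwarz and the single-measurement bound raised to the $k$-th power via the product structure. The only cosmetic deviations are that you absorb the logarithm into an exponent $\alpha<1$ early rather than carrying it to the final summation, and you use Hausdorff covers directly where the paper passes through the centered covering measure $\mathscr{M}^{\kappa}$; neither changes the substance.
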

\begin{proof}
See \cref{prp:nsproof}.
\end{proof}

Our second main result establishes  thresholds for H\"older-continuous recovery,  that is, recovery which exhibits robustness against additive perturbations or noise. 
Here, we have to impose the stricter technical condition of bounded upper Minkowski dimension $\overline{\dim}_\mathrm{B}(\setU)$ and, in turn, can only consider bounded sets $\setU$.  
\begin{theorem}\label{th2}
For every nonempty and bounded set $\setU\subseteq\reals^{m\times n}$ and $\beta\in(0,1)$,  the following holds:
\begin{enumerate}[label=\roman*)]
\item\label{th2item1}
Suppose that 
\begin{align}\label{eq:assdimB2a}
\frac{\overline{\dim}_\mathrm{B}(\setU-\setU)}{1-\beta}<k.  
\end{align}
Then, for Lebesgue a.a. $((\veca_1 \dots \veca_k), (\vecb_1 \dots \vecb_k))  \in\reals^{m\times k}\times\reals^{n\times k}$,  there exists a $\beta$-H\"older continuous mapping $g\colon \reals^{k}\to\reals^{m\times n}$ satisfying 
\begin{align}
g\Big(\tp{(\tp{\veca_1}\matX\vecb_1\ \dots\ \tp{\veca_k}\matX\vecb_k)}\Big)=\matX, \quad \text{for all $\matX\in\setU$.} 
\end{align}
\item \label{th2item2}
Suppose that $\setU$ is Borel with 
\begin{align}\label{eq:assdimB2}
\frac{\overline{\dim}_\mathrm{B}(\setU)}{1-\beta}<k. 
\end{align}
Fix  $\varepsilon>0$ arbitrarily and consider an $m\times n$ random matrix  $\rmatX$ with  $\opP[\rmatX\in\setU]=1$. 
Then,  for Lebesgue a.a.  $((\veca_1 \dots \veca_k), (\vecb_1 \dots \vecb_k))  \in\reals^{m\times k}\times\reals^{n\times k}$,  
there exists a $\beta$-H\"older continuous mapping $g\colon \reals^{k}\to\reals^{m\times n}$ satisfying 
\begin{align}
\opP\Big[g\Big(\tp{(\tp{\veca_1}\rmatX\vecb_1\ \dots\ \tp{\veca_k}\rmatX\vecb_k)}\Big)\neq \rmatX\Big]\leq\varepsilon. 
\end{align}
\end{enumerate}
\end{theorem}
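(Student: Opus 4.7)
The plan is to upgrade the qualitative injectivity of \cref{th1} to a quantitative bi-H\"older bound, and then extend the resulting partial inverse to all of $\reals^k$ via a H\"older-preserving extension theorem. Writing $\phi_{\veca,\vecb}(\matX)=\tp{(\tp{\veca_1}\matX\vecb_1\ \dots\ \tp{\veca_k}\matX\vecb_k)}$, the target for \cref{th2item1} is to show that for Lebesgue a.a.\ $((\veca_1\ \dots\ \veca_k),(\vecb_1\ \dots\ \vecb_k))\in\reals^{m\times k}\times\reals^{n\times k}$ there is a constant $C=C(\veca,\vecb)>0$ such that
\begin{align}
\lVert\phi_{\veca,\vecb}(\matD)\rVert_2\geq C\lVert\matD\rVert_2^{1/\beta}\quad\text{for every }\matD\in\setU-\setU.
\end{align}
This is equivalent to $\phi_{\veca,\vecb}|_\setU$ having a $\beta$-H\"older continuous inverse on $\phi_{\veca,\vecb}(\setU)\subseteq\reals^k$, and a coordinate-wise McShane-type extension for real-valued H\"older functions then produces a $\beta$-H\"older $g\colon\reals^k\to\reals^{m\times n}$ with the required property.

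To prove the quantitative injectivity I would use a dyadic-shell decomposition combined with a covering argument. Fix $\varepsilon>0$ with $(s+\varepsilon)/(1-\beta)<k$, where $s=\overline{\dim}_\mathrm{B}(\setU-\setU)$, and restrict measurement vectors to a compact cube $Q$. For each $j\geq 0$, let $A_j=\{\matD\in\setU-\setU:2^{-j-1}\leq\lVert\matD\rVert_2<2^{-j}\}$ and cover it by $N_j\lesssim\delta_j^{-(s+\varepsilon)}$ balls of radius $\delta_j\sim 2^{-j/\beta}$ centered at points $\matD^{(j,i)}\in A_j$. The strengthened concentration-of-measure inequality \cref{lem:com1}, which controls $\lebmeasure\{(\veca,\vecb)\in Q:\lVert\phi_{\veca,\vecb}(\matD)\rVert_2\leq\tau\}$ by a constant times $(\tau/\lVert\matD\rVert_2)^k$ for each fixed nonzero $\matD$, combined with the uniform Lipschitz estimate $\lVert\phi_{\veca,\vecb}(\matD)-\phi_{\veca,\vecb}(\matD^{(j,i)})\rVert_2\lesssim\delta_j$ on $Q$ that transfers smallness from each center to its $\delta_j$-neighborhood, yields via a union bound over $i$ that
\begin{align}
\lebmeasure\{(\veca,\vecb)\in Q:\exists\,\matD\in A_j\text{ with }\lVert\phi_{\veca,\vecb}(\matD)\rVert_2<2^{-(j+1)/\beta}/C_0\}\leq C\cdot 2^{-\gamma j}
\end{align}
for some $\gamma>0$; a short exponent calculation gives $\gamma=(k(1-\beta)-(s+\varepsilon))/\beta$, which is positive precisely by the hypothesis. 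Summing over $j$ and applying Borel--Cantelli on $Q$, and then exhausting $\reals^{m\times k}\times\reals^{n\times k}$ by a countable increasing union of such cubes, completes the proof of \cref{th2item1}.

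For \cref{th2item2}, inner regularity of the Borel probability law of $\rmatX$ on $\setU$ produces a compact $\setK\subseteq\setU$ with $\opP[\rmatX\in\setK]\geq 1-\varepsilon$ and $\overline{\dim}_\mathrm{B}(\setK)\leq\overline{\dim}_\mathrm{B}(\setU)$. I would then construct a $\beta$-H\"older $g\colon\reals^k\to\reals^{m\times n}$ that recovers every $\matX\in\setK$; correct recovery of $\rmatX$ then holds on $\{\rmatX\in\setK\}$, as required. The naive route -- invoking \cref{th2item1} on $\setK$ -- would demand control of $\overline{\dim}_\mathrm{B}(\setK-\setK)$, which in general is as large as $2\overline{\dim}_\mathrm{B}(\setK)$, whereas the hypothesis here controls only $\overline{\dim}_\mathrm{B}(\setU)$. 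Bridging this factor-of-$2$ gap is the subtle point and the main obstacle; I expect it to be handled by integrating the single-point concentration bound \cref{lem:com1} against $\opP_{\rmatX}$ itself (rather than applying it pairwise along a deterministic cover of $\setK-\setK$), so that the effective complexity becomes the dimension of the support of $\opP_{\rmatX}$. Beyond this, the principal technical obstacle throughout is \cref{lem:com1} itself: the bilinear structure of $\tp{\veca_i}\matD\vecb_i$ prevents classical single-measurement concentration from delivering the full exponent $k$ on the joint variable $\phi_{\veca,\vecb}(\matD)$, yet this full exponent is exactly what permits the dyadic sum to close at precisely the asserted threshold $(1-\beta)k>s$.
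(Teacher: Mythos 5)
Your argument for \cref{th2item1} is essentially the paper's: one proves the quantitative null-space property (\cref{prp:nsreg} applied to $\setU-\setU$) by a dyadic decomposition, a covering at the finer scale $2^{-j/\beta}$, the concentration inequality \cref{lem:com1}, a union bound, and Borel--Cantelli; your exponent count reproduces the threshold $k(1-\beta)>\overline{\dim}_\mathrm{B}(\setU-\setU)$, and the extension step is the same. Two harmless imprecisions: \cref{lem:com1} carries a logarithmic correction (absorbed by the strictly positive geometric exponent), and the finitely many shells left exceptional by Borel--Cantelli still need the qualitative non-vanishing of \cref{prp:ns} on their compact closures to make the infimum positive; both are routine.

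For \cref{th2item2} there is a genuine gap. You correctly name the factor-of-two obstruction, but neither of your suggestions closes it. Choosing a compact $\setK$ with $\opP[\rmatX\in\setK]\geq 1-\varepsilon$ \emph{before} the measurement vectors are drawn puts you right back to needing a uniform bound over $\setK-\setK$, whose upper Minkowski dimension you cannot control by better than $2\,\overline{\dim}_\mathrm{B}(\setU)$. The paper's resolution is an \emph{anchored} null-space property: for each fixed $\matX$ one applies \cref{prp:nsreg} to the translate $\setU_\matX=\setU-\matX$, which has the \emph{same} upper Minkowski dimension as $\setU$ --- no doubling occurs. Fubini with respect to $\lebmeasure\otimes\opP_{\rmatX}$ shows that for Lebesgue a.a.\ $(\matA,\matB)$ the infimum in \cref{eq:proptoshowreg} taken over $\setU_{\rmatX}\setminus\{\matzero\}$ is positive with probability one; writing the complement of the bad event as the increasing union of the sets where this infimum exceeds $1/j$ and invoking continuity of measure produces a $J=J(\matA,\matB)$ and a \emph{measurement-dependent} good set $\setU_{\matA,\matB}$ of probability at least $1-\varepsilon$ on which the constant is uniformly $1/J$. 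The essential final observation is one-sided anchoring: for $\matU,\matV\in\setU_{\matA,\matB}$ the difference $\matU-\matV$ lies in $\setU_\matV\setminus\{\matzero\}$, so the bi-H\"older bound for the pair follows from the anchor $\matV$ alone being good; injectivity with $\beta$-H\"older inverse on $\setU_{\matA,\matB}$ and the extension theorem then conclude. Your phrase about ``integrating the concentration bound against $\opP_{\rmatX}$'' gestures in this direction but is not an argument: the missing ingredients are the translation invariance of $\overline{\dim}_\mathrm{B}$ applied to $\setU-\matX$, the Fubini/continuity-of-measure extraction of a uniform constant on a set of probability $1-\varepsilon$, and the fact that a single good endpoint suffices to control each difference.
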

\begin{proof}
See  \cref{sec:proofth2}.
\end{proof}
Again, the first part of the theorem concerns deterministic data matrices $\matX$ for which  $k>\overline{\dim}_\mathrm{B}(\setU-\setU)/(1-\beta)$  rank-1 measurements (except for measurement vectors supported on a Lebesgue null-set) guarantee  $\beta$-H\"older continuous recovery. The higher the desired H\"older exponent $\beta$, the larger the number of measurements has to be. 
In the probabilistic setting of the second part of the theorem,  $k>\overline{\dim}_\mathrm{B}(\setU)/(1-\beta)$  measurements suffice for $\beta$-H\"older continuous recovery.  
 We hasten to add that recovery 
is  only with probability $1-\varepsilon$, where, however, $\varepsilon$ can be arbitrarily small. Also note that the number of measurements, $k$, is independent of $\varepsilon$. 
We shall evaluate $\overline{\dim}_\mathrm{B}(\setU)$ for several rectifiable sets with interesting structural properties in \cref{sec:1} and for attractor sets of  recurrent iterated function systems in \cref{sec:RIFS}.

A version of \cref{th2} for the rank-1 measurements replaced by measurements taken with general matrices  
can be obtained from results available in the literature. Specifically, the equivalent of \Cref{th2item1} in \cref{th2} follows from  \cite[Theorem 4.3]{ro11}, that of \Cref{th2item2} is obtained from 
\cite[Theorem 2]{striagbo15},
in both cases by vectorization.

The proof of  \cref{th2} is again based on a variant of the null-space property as used in compressed sensing theory, concretely on the following result: 
\begin{proposition}\label{prp:nsreg} 
Consider a  nonempty and bounded set $\setU\subseteq\reals^{m\times n}$, 
and suppose that there exists a $\beta\in (0,1)$ such that 
\begin{align}\label{eq:assdimB}
\frac{\overline{\dim}_\mathrm{B}(\setU)}{1-\beta}<k.
\end{align}
Then, 
\begin{align}\label{eq:proptoshowreg}
\inf \Bigg\{\frac{\lVert\tp{(\tp{\veca_1}\matX\vecb_1\ \dots\ \tp{\veca_k}\matX\vecb_k)}\rVert_2}{\lVert\matX\rVert_2^{1/\beta}}: \matX\in\setU\!\setminus\!\{\matzero\} \Bigg\}>0,
\end{align}
for Lebesgue a.a.  $((\veca_1 \dots \veca_k), (\vecb_1 \dots \vecb_k))  \in\reals^{m\times k}\times\reals^{n\times k}$.   
\end{proposition}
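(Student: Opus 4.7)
The plan is to establish \cref{eq:proptoshowreg} by ruling out, up to a Lebesgue null set of measurement tuples, the existence of a sequence $\matX_n \in \setU \setminus \{\matzero\}$ along which the ratio tends to zero. Since Lebesgue null sets and Gaussian null sets coincide after restriction to any bounded box $\{\|\veca_i\|_2,\|\vecb_i\|_2 \leq R\}$, I would work probabilistically inside such a box and then take a countable union over $R \in \naturals$. On this box the linear map $\matX \mapsto \mathrm{meas}(\matX) := \tp{(\tp{\veca_1}\matX\vecb_1\ \dots\ \tp{\veca_k}\matX\vecb_k)}$ is Lipschitz with constant at most $L = \sqrt{k}\, R^2$.

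The second step is a dyadic decomposition of $\setU \setminus \{\matzero\}$ by norm scale: $\setU_j = \{\matX \in \setU: 2^{-j} \leq \lVert\matX\rVert_2 < 2^{-j+1}\}$ for $j \geq 1$, together with $\setU_\infty = \{\matX \in \setU: \lVert\matX\rVert_2 \geq 1\}$ (normalizing by boundedness of $\setU$). For any fixed $\eta>0$, the function $\matX \mapsto \lVert\mathrm{meas}(\matX)\rVert_2/\lVert\matX\rVert_2^{1/\beta}$ is continuous on the compact set $\overline{\setU} \cap \{\lVert\matX\rVert_2 \geq \eta\}$, and \cref{prp:ns} applied to $\overline{\setU}$ (whose Hausdorff dimension is bounded by $\overline{\dim}_\mathrm{B}(\setU) < k$ thanks to \cref{eq:HMB}) makes it strictly positive there, almost surely; compactness then yields a positive infimum. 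This disposes of $\setU_\infty$ and of any finite collection of small-scale exceptions.

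For the small-scale regime I would combine a covering argument with an anti-concentration inequality. Fix $\delta > 0$ with $s + \delta < k(1-\beta)$, where $s = \overline{\dim}_\mathrm{B}(\setU)$. By definition of upper Minkowski dimension, $\setU_j$ is covered by $N_j \lesssim 2^{j(s+\delta)/\beta}$ balls of radius $r_j = c_1 2^{-j/\beta}$ centered at points $\matX_i^{(j)} \in \setU$. Applying the concentration inequality \cref{lem:com1} at each center yields
\begin{align*}
\opP\bigl[\lVert\mathrm{meas}(\matX_i^{(j)})\rVert_2 \leq 2 L r_j\bigr] \lesssim \bigl(L r_j/\lVert\matX_i^{(j)}\rVert_2\bigr)^k \lesssim 2^{jk(1 - 1/\beta)},
\end{align*}
and a union bound gives a scale-$j$ failure probability of order $2^{j[(s+\delta)/\beta + k(1-1/\beta)]}$. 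The exponent is negative exactly because $s + \delta < k(1-\beta)$, so the geometric series converges and Borel--Cantelli ensures that almost surely only finitely many scales are bad. On each good scale every $\matX \in \setU_j$ is within $r_j$ of a good center, so the Lipschitz estimate propagates the lower bound to yield $\lVert\mathrm{meas}(\matX)\rVert_2 \geq L r_j \gtrsim \lVert\matX\rVert_2^{1/\beta}$, as required.

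The hard part is \cref{lem:com1} itself: unlike the linear measurements treated in \cite{albole19,bagusp19}, the rank-$1$ measurements $\tp{\veca}\matX\vecb$ are bilinear forms in $(\veca,\vecb)$, and the resulting quadratic functional of Gaussians can have density unbounded near zero (already the product of two independent scalar Gaussians exhibits a logarithmic singularity at the origin). Extracting a polynomial-in-$t$ decay of the form $t^k$ once the $k$ independent copies are pooled, uniformly over $\matX$, is the technical heart of the argument and is precisely where the vectorization-then-cite-\cite{albole19} route fails for rank-$1$ measurements.
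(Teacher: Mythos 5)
Your proposal is correct and follows essentially the same route as the paper's proof: reduction to uniform measurement vectors on bounded balls, a dyadic decomposition of $\setU$ by norm scale, a covering of each scale whose cardinality is controlled via $\overline{\dim}_\mathrm{B}(\setU)$, the concentration bound of \cref{lem:com1} at each covering center combined with a Lipschitz/triangle-inequality transfer to nearby points, a union bound whose exponent is negative precisely because $\overline{\dim}_\mathrm{B}(\setU)<k(1-\beta)$, and Borel--Cantelli. The only (harmless) deviations are a reparametrization of the scales (annuli of width $2^{-j}$ with covering radius $2^{-j/\beta}$ instead of the paper's $\setU\setminus\setB_{m\times n}(\matzero,2^{-\beta j})$ with radius $2^{-j}$) and the way the finitely many exceptional scales are absorbed (you use \cref{prp:ns} plus compactness of $\overline{\setU}\cap\{\lVert\matX\rVert_2\geq\eta\}$, whereas the paper invokes the vectorized \cite[Lemma 3]{striagbo15}).
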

\begin{proof}
See \cref{prp:nsproof}.
\end{proof}

Regarding converse statements, i.e., the question of whether too few rank-1 measurements of a given random matrix $\rmatX$ necessarily render unique reconstruction impossible, 
we note that \cite[Corollary IV.2]{albole19} allows a partial answer. Specifically, the 
simple characterization of the support set $\setU$ of $\rmatX$ through its dimension $\overline{\dim}_\mathrm{B}(\setU)$
does not enable a general impossibility result.
If one assumes, however, that the vectorized version of $\rmatX$ is $k$-analytic according to \cite[Definition~IV.2]{albole19}, then we can conclude that fewer than $k$ measurements necessarily lead to
reconstruction of $\rmatX$ being impossible, with probability 1.
This statement holds for arbitrary measurement matrices, so in particular also for rank-1 matrices.

\section{Rectifiable Sets}\label{sec:1}
To illustrate the practical applicability of the general recovery thresholds obtained in \cref{th1,th2} and expressed in terms of Hausdorff and  upper  Minkowski dimension, we 
first introduce the concept of rectifiable sets, a central element of 
geometric measure theory \cite{fed69}. The relevance of rectifiability derives itself from the fact that a broad class of  structured data matrix support sets  we are interested in  turns  out to be rectifiable. In addition,  
Hausdorff and upper Minkowski dimensions of rectifiable sets have been characterized in significant detail in the literature.

We start with the formal definition of rectifiable sets. 
\begin{definition}\cite[Definition 3.2.14]{fed69}\label{Dfn:recset}
For  $s\in\naturals$,  the set $\setU\subseteq\reals^{m\times n}$ is 
\begin{enumerate}[label=\roman*)]
\item $s$-rectifiable if there exist a nonempty and compact set $\setA\subseteq\reals^s$ and a Lip\-schitz  mapping $\varphi\colon\setA\to \reals^{m\times n}$  such that $\setU=\varphi(\setA)$; \label{recd1}
\item countably $s$-rectifiable if it is the countable union of $s$-rectifiable sets;\label{recd2}
\item countably $\mathscr{H}^s$-rectifiable if it is $\mathscr{H}^s$-measurable and there exists a countably $s$-rectifiable set $\setV\subseteq\reals^{m\times n}$  such that $\mathscr{H}^s(\setU\setminus \setV)=0$. \label{recd3}
\end{enumerate}
\end{definition} 
We have the following obvious chain of implications: 

\begin{center} 
$s$-rectifiable $\Rightarrow$  countably $s$-rectifiable $\Rightarrow$ countably $\mathscr{H}^{s}$-rectifiable.
\end{center} 

Countably $\mathscr{H}^{s}$-rectifiable sets  thus constitute the most general class. 

We proceed to state preparatory results, which will be used later to establish that many practically relevant sets of structured matrices are (countably) $s$-rectifiable and to quantify the associated rectifiability parameter $s$.

\begin{lemma} \label{LemmaExarec2}(Properties of $s$-rectifiable sets)
\begin{enumerate}[label=\roman*)]
\item \label{exarec}
If $\setU\subseteq\reals^{m\times n}$ is $s$-rectifiable, then it is $t$-rectifiable for all $t\in\naturals$  with $t>s$. 
\item \label{exasum2}
For  $\setU_i \subseteq\reals^{m\times n}$ $s_i$-rectifiable with $s_i\leq s$, $i=1,\dots,N$,  the set
\begin{align}
\setU=\bigcup_{i=1}^N\setU_i 
\end{align}
is $s$-rectifiable.
In particular, the finite union of  $s$-rectifiable sets is $s$-rec\-ti\-fi\-able.
\item \label{exaProdrec2}
If $\setU\subseteq\reals^{m_1\times n_1}$ is $s$-rectifiable and  $\setV\subseteq\reals^{m_2\times n_2}$ is $t$-rectifiable, then 
$\setU\times\setV$ 
is  $(s+t)$-rectifiable. 
\item \label{exaC12}
Every compact subset of an $s$-dimensional $C^1$-submanifold \cite[Definition 5.3.1]{krpa08} of $\reals^{m\times n}$ is $s$-rectifiable.
\end{enumerate}
\end{lemma}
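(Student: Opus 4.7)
The plan is to handle the four parts in order, each reducing to the basic definition of $s$-rectifiability.

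For (i), write $\setU = \varphi(\setA)$ with $\setA \subseteq \reals^s$ compact and $\varphi$ Lipschitz. Embed $\reals^s$ into $\reals^t$ as the first $s$ coordinates, set $\setA' = \setA \times \{0\}^{t-s} \subseteq \reals^t$ (still compact), and define $\varphi'(x_1,\ldots,x_t) = \varphi(x_1,\ldots,x_s)$. The map $\varphi'$ inherits the Lipschitz constant of $\varphi$ because it depends only on the first $s$ coordinates, and its image is $\setU$.

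For (ii), using (i) I may assume $s_i = s$ for all $i$, and write $\setU_i = \varphi_i(\setA_i)$ with $\setA_i \subseteq \reals^s$ compact and $\varphi_i$ Lipschitz with constant $L_i$. The trick is to place disjoint translates of the $\setA_i$ at mutually large distance in $\reals^s$. Pick $r > 0$ with $\setA_i \subseteq \setB_s(\veczero, r)$ for every $i$, let $\tilde{\setA}_i$ denote the translate of $\setA_i$ obtained by shifting the first coordinate by $4ir$, and set $\setA = \bigcup_{i=1}^N \tilde{\setA}_i$. The $\tilde{\setA}_i$ are pairwise separated in Euclidean distance by at least $2r$, so $\setA$ is a disjoint union of finitely many compact sets, and hence compact. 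Define $\varphi \colon \setA \to \reals^{m\times n}$ piecewise by translating back and applying the appropriate $\varphi_i$. The main subtlety, which I expect to be the most delicate step of the lemma, is verifying Lipschitz continuity across pieces: within each $\tilde{\setA}_i$ the bound $L_i$ is inherited, while for $x \in \tilde{\setA}_i$ and $y \in \tilde{\setA}_j$ with $i \neq j$ one controls $\lVert\varphi(x) - \varphi(y)\rVert_2 \leq 2M$ with $M := \max_i \sup_{\matA \in \setU_i}\lVert\matA\rVert_2 < \infty$ (each $\setU_i$ is a continuous image of a compact set, hence compact) and $\lVert x - y\rVert_2 \geq 2r$, giving ratio at most $M/r$. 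An overall Lipschitz constant is therefore $\max\{L_1,\ldots,L_N,\, M/r\}$.

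For (iii), write $\setU = \varphi(\setA)$ and $\setV = \psi(\setB)$ with $\setA \subseteq \reals^s$ and $\setB \subseteq \reals^t$ compact. The Cartesian product $\setA \times \setB \subseteq \reals^{s+t}$ is compact, and the map $\Phi(a,b) := (\varphi(a), \psi(b))$ from $\setA \times \setB$ into $\reals^{m_1 \times n_1} \times \reals^{m_2 \times n_2}$, naturally identified as a Euclidean space of dimension $m_1 n_1 + m_2 n_2$, is Lipschitz with constant at most $L_\varphi + L_\psi$ and has image $\setU \times \setV$.

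For (iv), let $K$ be a compact subset of an $s$-dimensional $C^1$-submanifold $\setM \subseteq \reals^{m\times n}$. At each $p \in K$ there is a $C^1$ chart $\varphi_p \colon U_p \to \setM$ with $U_p \subseteq \reals^s$ open and $\varphi_p$ a homeomorphism onto a relatively open subset of $\setM$. Shrink $U_p$ to a compact neighborhood $W_p$ of $\varphi_p^{-1}(p)$ on which $\varphi_p$ remains $C^1$, hence Lipschitz. The sets $\varphi_p(\mathrm{int}(W_p))$ form an open cover of $K$; by compactness I extract a finite subcover indexed $p_1,\ldots,p_N$. Then $K_i := \varphi_{p_i}^{-1}(K) \cap W_{p_i}$ is compact, $\varphi_{p_i}|_{K_i}$ is Lipschitz, and $K = \bigcup_{i=1}^N \varphi_{p_i}(K_i)$ exhibits $K$ as a finite union of $s$-rectifiable sets. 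Invoking (ii) concludes the argument.
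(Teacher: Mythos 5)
Your proof is correct and follows essentially the same route as the paper's: separating the chart domains by translation to build a single Lipschitz parametrization of a finite union (the paper does two sets plus induction, you do all $N$ at once), the product map for Cartesian products, and a finite subcover of compact chart pieces reduced to part (ii) for the submanifold case. The only real difference is that for part (i) you give a direct zero-padding argument where the paper cites an external lemma, and your treatment of part (iv) (shrinking to compact chart neighborhoods before extracting the subcover) is if anything slightly cleaner than the paper's closure-taking step.
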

\begin{proof}See \cref{ProofLemmaExarec2}.\end{proof}

\begin{lemma} \label{LemmaExarec}(Properties of countably $s$-rectifiable sets)
\begin{enumerate}[label=\roman*)]
\item \label{exaProdrec}
If $\setU\subseteq\reals^{m_1\times n_1}$ is countably  $s$-rectifiable and  $\setV\subseteq\reals^{m_2\times n_2}$ is countably $t$-rectifiable, then 
$\setU\times\setV$ 
is countably $(s+t)$-rectifiable.  
\item \label{exasum}
For  $\setU_i \subseteq\reals^{m\times n}$ countably $s_i$-rectifiable with $s_i\leq s$, $i\in\naturals$,  the set
\begin{align}
\setU=\bigcup_{i\in\naturals}\setU_i 
\end{align}
is  countably $s$-rectifiable.  
\item \label{exaC1}
Every $s$-dimensional $C^1$-submanifold \cite[Definition 5.3.1]{krpa08} of $\reals^{m\times n}$ is countably  $s$-rectifiable. 
In particular, every $s$-dimensional affine subspace of $\reals^{m\times n}$ is countably $s$-rectifiable. 
\end{enumerate}
\end{lemma}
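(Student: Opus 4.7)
The plan is to prove each of the three parts in turn, reducing to the $s$-rectifiable case already handled in \cref{LemmaExarec2}.

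For \ref{exaProdrec}, I would write $\setU=\bigcup_{i\in\naturals}\setU_i$ and $\setV=\bigcup_{j\in\naturals}\setV_j$ with $\setU_i$ $s$-rectifiable and $\setV_j$ $t$-rectifiable, and observe that
\begin{align}
\setU\times\setV=\bigcup_{(i,j)\in\naturals\times\naturals}\setU_i\times\setV_j.
\end{align}
By \cref{LemmaExarec2}\ref{exaProdrec2}, every $\setU_i\times\setV_j$ is $(s+t)$-rectifiable; since $\naturals\times\naturals$ is countable, the displayed union is a countable union of $(s+t)$-rectifiable sets, i.e., countably $(s+t)$-rectifiable.

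For \ref{exasum}, for each $i\in\naturals$ decompose $\setU_i=\bigcup_{j\in\naturals}\setU_{i,j}$ with $\setU_{i,j}$ being $s_i$-rectifiable. Since $s_i\leq s$, \cref{LemmaExarec2}\ref{exarec} promotes each $\setU_{i,j}$ to an $s$-rectifiable set. Then $\setU=\bigcup_{(i,j)\in\naturals\times\naturals}\setU_{i,j}$ is a countable union of $s$-rectifiable sets, proving countable $s$-rectifiability.

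The main content lies in \ref{exaC1}. Let $\setM\subseteq\reals^{m\times n}$ be an $s$-dimensional $C^1$-submanifold. As a subspace of $\reals^{m\times n}$, the set $\setM$ is Hausdorff and second countable, and by the local-chart property of submanifolds it is locally Euclidean of dimension $s$, hence locally compact. A standard topological fact (true for every locally compact Hausdorff second countable space) then yields that $\setM$ is $\sigma$-compact, i.e., there exist compact sets $\setK_i\subseteq\setM$ with $\setM=\bigcup_{i\in\naturals}\setK_i$. By \cref{LemmaExarec2}\ref{exaC12}, each $\setK_i$ is $s$-rectifiable, so $\setM$ is countably $s$-rectifiable. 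For the in-particular statement, an $s$-dimensional affine subspace is itself an $s$-dimensional $C^1$-submanifold of $\reals^{m\times n}$, so the general result applies. The only step I expect to need a short justification is $\sigma$-compactness; aside from invoking the topological lemma, one can give it concretely by exhausting each coordinate chart with a countable family of compact sets (e.g.\ preimages of closed Euclidean balls under the chart homeomorphism) and collecting the countable union over a countable atlas.
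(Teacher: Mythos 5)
Your proof is correct, but it is not the route the paper takes: the paper disposes of this lemma with a one-line citation to \cite[Lemma III.1]{albole19} (whose statements transfer by vectorization), whereas you give a self-contained derivation by reducing everything to the compact, $s$-rectifiable case already established in \cref{LemmaExarec2}. Your parts \ref{exaProdrec} and \ref{exasum} are immediate and correct once one notes that $\naturals\times\naturals$ is countable and that \cref{exarec} of \cref{LemmaExarec2} promotes $s_i$-rectifiable to $s$-rectifiable when $s_i<s$ (the case $s_i=s$ needing no promotion). For \ref{exaC1}, the $\sigma$-compactness step is the only point requiring care, and your concrete fallback is the right one: a $C^1$-submanifold of $\reals^{m\times n}$ is second countable as a subspace of Euclidean space, hence admits a countable atlas, and each chart domain $\setU_\matX\subseteq\reals^s$ is open and therefore exhausted by countably many compact sets whose images under the continuous chart map are compact subsets of $\setM$; \cref{exaC12} of \cref{LemmaExarec2} then applies to each. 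What your approach buys is independence from the external reference and an argument that runs entirely through machinery the paper proves in its own appendix; what the paper's citation buys is brevity, at the cost of the reader having to unwind the vectorization. Both are valid.
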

\begin{proof}Follows from  \cite[Lemma III.1]{albole19}.\end{proof}

To establish  the rectifiability of structured matrices obtained as products or sums of structured matrices, we need to understand the impact of continuous mappings on rectifiability. 
Specifically, we shall need the following result from  \cite[Lemma III.~3]{albole19} for locally-Lipschitz mappings, i.e., functions that are Lipschitz continuous on all compact subsets:

\begin{lemma}\label{lem:recsetlip}
Let $\setU\subseteq\reals^{m_1\times n_1}$ and  let $f\colon\reals^{m_1\times n_1}\to\reals^{m_2\times n_2}$ be a locally-Lipschitz mapping.  
\begin{enumerate}[label=\roman*)]
\item 
If $\setU$ is $s$-rectifiable, then $f(\setU)$ is $s$-rectifiable. \label{recf1}
\item If $\setU$ is 
countably $s$-rectifiable, then $f(\setU)$ is countably $s$-rectifiable. \label{recf2}
\end{enumerate}
\end{lemma}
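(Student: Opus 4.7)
\textbf{Proof plan for \cref{lem:recsetlip}.} The plan is to reduce the statement to the fact that the composition of a Lipschitz map with a Lipschitz map is Lipschitz, extracting a genuine (global) Lipschitz constant for $f$ on the relevant compact set by using the locally-Lipschitz assumption.

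For part \ref{recf1}, I would unfold the definition of $s$-rectifiability to obtain a nonempty compact set $\setA\subseteq\reals^s$ and a Lipschitz mapping $\varphi\colon\setA\to\reals^{m_1\times n_1}$ with $\setU=\varphi(\setA)$. Since $\varphi$ is continuous and $\setA$ is compact, the image $\setU=\varphi(\setA)$ is itself compact in $\reals^{m_1\times n_1}$. The assumption that $f$ is locally-Lipschitz then yields a (global) Lipschitz constant for $f$ on the compact set $\setU$. Consequently, $f\circ\varphi\colon\setA\to\reals^{m_2\times n_2}$ is Lipschitz, being a composition of two Lipschitz maps with matching domains. Since $f(\setU)=(f\circ\varphi)(\setA)$, we have exhibited $f(\setU)$ as the Lipschitz image of the nonempty compact set $\setA\subseteq\reals^s$, which is precisely the definition of $s$-rectifiability.

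For part \ref{recf2}, I would simply leverage part \ref{recf1} together with countable unions. Writing $\setU=\bigcup_{i\in\naturals}\setU_i$ with each $\setU_i$ being $s$-rectifiable, we get
\begin{equation}
f(\setU) \;=\; f\Bigl(\bigcup_{i\in\naturals}\setU_i\Bigr) \;=\; \bigcup_{i\in\naturals}f(\setU_i),
\end{equation}
and part \ref{recf1} ensures each $f(\setU_i)$ is $s$-rectifiable, so the union is countably $s$-rectifiable by \cref{Dfn:recset}\ref{recd2}.

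The only substantive step, and therefore the main ``obstacle,'' is passing from the hypothesis that $f$ is merely locally-Lipschitz to an honest Lipschitz bound on $f\circ\varphi$; this is exactly where the compactness of $\setA$ (and hence of $\varphi(\setA)$) in the definition of $s$-rectifiability is essential. Everything else is bookkeeping: compositions and countable unions of rectifiable sets behave in the obvious way.
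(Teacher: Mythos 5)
Your proof is correct. The paper itself does not prove this lemma but simply defers to \cite[Lemma III.3]{albole19}; your argument is the standard self-contained one that the citation encapsulates, and it handles the one genuinely substantive point correctly: since $\setA$ is compact and $\varphi$ is continuous, $\setU=\varphi(\setA)$ is compact, so the locally-Lipschitz hypothesis (Lipschitz on every compact subset) yields a global Lipschitz constant for $f$ on $\setU$, making $f\circ\varphi$ Lipschitz on $\setA$ and exhibiting $f(\setU)=(f\circ\varphi)(\setA)$ as $s$-rectifiable per \cref{Dfn:recset}. The reduction of \cref{recf2} to \cref{recf1} via $f\bigl(\bigcup_i\setU_i\bigr)=\bigcup_i f(\setU_i)$ is likewise sound, so nothing is missing.
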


We will mainly use the following generalization of \cref{lem:recsetlip}:
\begin{lemma}\label{lem:forexa}
Consider a locally-Lipschitz mapping  $f\colon \bigtimes_{i=1}^N \reals^{m_i\times n_i}\to \reals^{m\times n}$,  and suppose that $\setU_i\subseteq\reals^{m_i\times n_i}$, for $i=1,\dots,N$. 
\begin{enumerate}[label=\roman*)]
\item If $\setU_i$ is $s_i$-rectifiable, for $i=1,\dots,N$,   then $f(\setU_1\times\dots\times\setU_N)$ is $s$-rectifiable with $s=\sum_{i=1}^Ns_i$. \label{itemrec}
\item 
If $\setU_i$ is countably $s_i$-rectifiable, for $i=1,\dots,N$,  then $f(\setU_1\times\dots\times\setU_N)$ is countably $s$-rectifiable with  $s=\sum_{i=1}^Ns_i$. \label{itemcrec}
\end{enumerate}
\end{lemma}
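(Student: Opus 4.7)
The plan is to combine two ingredients already available in the excerpt: the product-rectifiability results \cref{LemmaExarec2}\ref{exaProdrec2} and \cref{LemmaExarec}\ref{exaProdrec}, and the locally-Lipschitz image results in \cref{lem:recsetlip}. Concretely, I would first form the Cartesian product $\setU_1\times\dots\times\setU_N\subseteq\bigtimes_{i=1}^N\reals^{m_i\times n_i}$, recognize it as (countably) $s$-rectifiable with $s=\sum_{i=1}^N s_i$, and then push it forward through the locally-Lipschitz mapping $f$.

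For part \ref{itemrec}, I would argue by induction on $N$. The base case $N=1$ is exactly \cref{lem:recsetlip}\ref{recf1}. For the inductive step, iterating \cref{LemmaExarec2}\ref{exaProdrec2} shows that $\setU_1\times\dots\times\setU_N$ is $s$-rectifiable in $\bigtimes_{i=1}^N\reals^{m_i\times n_i}$ with $s=\sum_{i=1}^N s_i$. Viewing the product space as a single Euclidean matrix space via the obvious isomorphism, \cref{lem:recsetlip}\ref{recf1} applied to $f$ then yields that $f(\setU_1\times\dots\times\setU_N)$ is $s$-rectifiable.

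For part \ref{itemcrec}, I would use the definition of countable $s_i$-rectifiability to write each $\setU_i=\bigcup_{j\in\naturals}\setU_{i,j}$ with every $\setU_{i,j}$ $s_i$-rectifiable. Distributing the product over the unions gives
\begin{align}
\setU_1\times\dots\times\setU_N=\bigcup_{(j_1,\dots,j_N)\in\naturals^N}\setU_{1,j_1}\times\dots\times\setU_{N,j_N},
\end{align}
so that, since the image of a union under $f$ is the union of the images,
\begin{align}
f(\setU_1\times\dots\times\setU_N)=\bigcup_{(j_1,\dots,j_N)\in\naturals^N}f(\setU_{1,j_1}\times\dots\times\setU_{N,j_N}).
\end{align}
Each summand on the right-hand side is $s$-rectifiable by part \ref{itemrec}, and the union is countable since $\naturals^N$ is countable, so $f(\setU_1\times\dots\times\setU_N)$ is countably $s$-rectifiable by definition.

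The only point requiring a moment of care is that \cref{lem:recsetlip} is stated for mappings between single matrix spaces $\reals^{m_1\times n_1}\to\reals^{m_2\times n_2}$, whereas $f$ here is defined on a product of such spaces. Since $\bigtimes_{i=1}^N\reals^{m_i\times n_i}$ is a finite-dimensional real vector space on which all norms are equivalent, identifying it with a single Euclidean matrix space preserves local Lipschitzness of $f$ and the application of \cref{lem:recsetlip} is legitimate; this step is therefore purely notational and I do not anticipate any substantive obstacle in executing the plan.
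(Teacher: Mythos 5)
Your proposal is correct. For part \ref{itemrec} it coincides with the paper's proof, which is exactly ``iterate \cref{exaProdrec2} of \cref{LemmaExarec2}, then apply \cref{recf1} of \cref{lem:recsetlip}''; the identification of $\bigtimes_{i=1}^N \reals^{m_i\times n_i}$ with a single Euclidean matrix space that you flag is indeed the only (purely notational) point to note, and the paper glosses over it as well. For part \ref{itemcrec} you take a mildly different route: the paper simply cites \cref{exaProdrec} of \cref{LemmaExarec} (the product of countably rectifiable sets is countably rectifiable) together with \cref{recf2} of \cref{lem:recsetlip}, whereas you bypass both of those statements by writing each $\setU_i$ as a countable union of $s_i$-rectifiable pieces, distributing the Cartesian product over the unions, applying part \ref{itemrec} to each of the countably many products $\setU_{1,j_1}\times\dots\times\setU_{N,j_N}$, and concluding from the definition of countable rectifiability. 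This is essentially the proof of the two cited lemmas inlined; it is more self-contained (it needs only the rectifiable-case machinery plus the definition) at the cost of a little bookkeeping over $\naturals^N$, while the paper's version is shorter because it leans on results already established elsewhere. Both arguments are sound.
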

\begin{proof}
\Cref{itemrec} follows from \cref{exaProdrec2} of  \cref{LemmaExarec2} and \cref{recf1} of \cref{lem:recsetlip}, and 
\cref{itemcrec} follows from \cref{exaProdrec} of  \cref{LemmaExarec} and \cref{recf2} of \cref{lem:recsetlip}. 
\end{proof}
  
Before we can particularize our results in \cref{th1,th2}, it remains to characterize the Hausdorff dimension and the upper Minkowski dimension of 
 rectifiable sets in terms of their rectifiability parameters.   

 \begin{lemma}\label{lem:recdim}
Let   $\setU \subseteq\reals^{m\times n}$ be nonempty. Then, the following properties hold: 
\begin{enumerate}[label=\roman*)]
\item If  $\setU$ is  countably $\mathscr{H}^s$-rectifiable,  then 
\begin{align}
\dim_{\mathrm{H}}(\setU)\leq s.
\end{align}\label{item1rec}
\item 
If  $\setU\subseteq \setV$  with $\setV\subseteq\reals^{m\times n}$ $s$-rectifiable,  then 
  \begin{align}
  \overline{\dim}_\mathrm{B}(\setU)\leq s.
  \end{align}
  \label{item2rec}
\end{enumerate}
\end{lemma}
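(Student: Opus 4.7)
The plan is to reduce both parts to the well-known fact that Lipschitz images of subsets of $\reals^s$ have Hausdorff dimension at most $s$, together with the countable stability of Hausdorff dimension and the monotonicity of upper Minkowski dimension.

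For part \ref{item1rec}, I would first unpack the definition: since $\setU$ is countably $\mathscr{H}^s$-rectifiable, there exists a countably $s$-rectifiable set $\setV \subseteq \reals^{m\times n}$ with $\mathscr{H}^s(\setU \setminus \setV) = 0$, and $\setV = \bigcup_{i\in\naturals}\setV_i$ where each $\setV_i = \varphi_i(\setA_i)$ for some compact $\setA_i \subseteq \reals^s$ and some Lipschitz map $\varphi_i$. For each $i$ the scaling property of Hausdorff measure under Lipschitz maps gives $\mathscr{H}^s(\setV_i) \leq L_i^s\, \mathscr{H}^s(\setA_i) < \infty$, so $\dim_\mathrm{H}(\setV_i) \leq s$. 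Countable stability of Hausdorff dimension then yields $\dim_\mathrm{H}(\setV) \leq s$. Since $\mathscr{H}^s(\setU\setminus\setV)=0$ implies $\dim_\mathrm{H}(\setU\setminus\setV)\leq s$, the decomposition $\setU = (\setU\cap\setV)\cup(\setU\setminus\setV)$ gives $\dim_\mathrm{H}(\setU)\leq s$.

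For part \ref{item2rec}, I would first establish monotonicity of $\overline{\dim}_\mathrm{B}$ with respect to the covering number used in the paper's definition. If $\setU\subseteq\setV$, then given an optimal covering of $\setV$ by $N_\delta(\setV)$ balls centered in $\setV$, for each such ball meeting $\setU$ I pick a point in the intersection as a new center; the corresponding ball of radius $2\delta$ centered in $\setU$ still covers that intersection, so $N_{2\delta}(\setU) \leq N_\delta(\setV)$, which upon inserting into \cref{eq:dimbupper} gives $\overline{\dim}_\mathrm{B}(\setU)\leq \overline{\dim}_\mathrm{B}(\setV)$. It therefore suffices to bound $\overline{\dim}_\mathrm{B}(\setV)$ for an $s$-rectifiable $\setV = \varphi(\setA)$. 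Because $\setA\subseteq\reals^s$ is compact, a standard volume argument yields $N_\delta(\setA)\leq C\delta^{-s}$ for all sufficiently small $\delta$, and because $\varphi$ is Lipschitz with some constant $L$, a $\delta$-cover of $\setA$ by balls centered in $\setA$ is mapped to an $L\delta$-cover of $\setV$ by balls centered in $\setV$, hence $N_{L\delta}(\setV)\leq N_\delta(\setA)\leq C\delta^{-s}$, which plugged into \cref{eq:dimbupper} gives $\overline{\dim}_\mathrm{B}(\setV)\leq s$.

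The main obstacle is really only a notational one: the covering number $N_\delta$ in the paper requires ball centers to lie in the set, so strict monotonicity $\setU\subseteq\setV\Rightarrow N_\delta(\setU)\leq N_\delta(\setV)$ does not hold verbatim, and one must absorb a constant factor (here $2$) in the radius. Since this factor disappears in the limit defining $\overline{\dim}_\mathrm{B}$, the argument goes through without difficulty, and the rest of the proof is a direct application of standard Lipschitz-image bounds for Hausdorff measure and covering numbers.
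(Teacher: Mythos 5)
Your proof is correct, and its overall skeleton coincides with the paper's: the same decomposition $\setU=(\setU\cap\setV)\cup(\setU\setminus\setV)$ with countable stability and monotonicity of $\dim_\mathrm{H}$ in part \ref{item1rec}, and the same chain $\overline{\dim}_\mathrm{B}(\setU)\leq\overline{\dim}_\mathrm{B}(\setV)\leq\overline{\dim}_\mathrm{B}(\setA)\leq s$ in part \ref{item2rec}. The one substantive divergence is in how you bound $\dim_\mathrm{H}(\setV)$ for the countably $s$-rectifiable set $\setV$: the paper routes this through the upper modified Minkowski dimension (citing an external lemma that bounds it by $s$ for countably $s$-rectifiable sets, together with the fact that it dominates Hausdorff dimension), whereas you argue directly that each Lipschitz piece $\varphi_i(\setA_i)$ has $\mathscr{H}^s(\varphi_i(\setA_i))\leq L_i^s\,\mathscr{H}^s(\setA_i)<\infty$ and hence Hausdorff dimension at most $s$. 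Your route is more elementary and self-contained; the paper's buys brevity by reusing machinery already established elsewhere. In part \ref{item2rec} you prove by hand (with the radius-doubling trick for monotonicity of the center-constrained covering number, and the Lipschitz-image bound $N_{L\delta}(\setV)\leq N_\delta(\setA)$) what the paper obtains by citation; your observation that the factor $2$ (resp.\ $L$) washes out in the $\log$-ratio defining $\overline{\dim}_\mathrm{B}$ is exactly the point that makes this work. The only cosmetic caveat is the degenerate case of a constant $\varphi$ (Lipschitz constant $0$), which you can dispose of by replacing $L$ with $\max\{L,1\}$.
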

\begin{proof}
We first prove \cref{item1rec}. Since $\setU$ is countably $\mathscr{H}^s$-rectifiable,  
by \cref{Dfn:recset}, there exists a  countably $s$-rectifiable set $\setV\subseteq\reals^{m\times n}$  with 
$\mathscr{H}^s(\setU\setminus \setV)=0$.  
By \cite[Lemma III.2]{albole19}, the upper modified Minkowski dimension of a countably $s$-rectifiable set $\setV$ is upper-bounded by $s$. 
Combined with \cite[Equation (3.27)]{fa14}, which states that the Hausdorff dimension of $\setV$ is upper-bounded by the upper modified Minkowski dimension, this yields 
\begin{align}\label{eq:dimHV}
\dim_\mathrm{H}(\setV)\leq s. 
\end{align}
Since  $\mathscr{H}^s(\setU\setminus \setV)=0$, the definition of Hausdorff dimension implies  
\begin{align}\label{eq:bounds}
\dim_{\mathrm{H}}(\setU\setminus \setV)\leq s
\end{align} so that 
\begin{align}
\dim_{\mathrm{H}}(\setU) 
&=\max\{ \dim_{\mathrm{H}}(\setU\cap\setV), \dim_{\mathrm{H}}(\setU\setminus\setV) \}\label{eq:stabilityI}\\
&\leq \max\{ \dim_{\mathrm{H}}(\setV), \dim_{\mathrm{H}}(\setU\setminus\setV) \}\label{eq:monotonicityI}\\
&\leq s,\label{eq:usecountrect} 
\end{align}
where  \cref{eq:stabilityI} follows from countable  stability of Hausdorff dimension \cite[Section 3.2]{fa14}, in  \cref{eq:monotonicityI} we used  monotonicity of Hausdorff dimension \cite[Section 3.2]{fa14}, and \cref{eq:usecountrect} is by \cref{eq:dimHV} and \cref{eq:bounds}.

To establish \cref{item2rec}, we note that, by \cref{Dfn:recset}, a nonempty $s$-rectifiable set $\setV$ can be written as  $\setV=\varphi(\setA)$ for a Lip\-schitz  mapping $\varphi\colon\setA\to \reals^{m\times n}$  and a  nonempty compact set $\setA\subseteq \reals^s$. 
We thus have 
\begin{align}
\overline{\dim}_\mathrm{B}(\setU)
&\leq \overline{\dim}_\mathrm{B}(\setV)\label{eq:chain1}\\
&\leq  \overline{\dim}_\mathrm{B}(\setA)\label{eq:chain2}\\
&\leq s,\label{eq:chain3}
\end{align}
where \cref{eq:chain1} and \cref{eq:chain3} follow from the monotonicity of upper Minkowski dimension \cite[Section 2.2]{fa14}   upon noting that the compact set $\setA$ is a subset of an open ball in $\reals^s$
of sufficiently large radius, which has upper Minkowski dimension $s$, and in \cref{eq:chain2} we applied 
\cite[Proposition 2.5, Item (a)]{fa14}. 
\end{proof}

The following result will be useful in particularizing our deterministic recovery thresholds in \Cref{th1item1} of  \Cref{th1} and \Cref{th2item1} of  \Cref{th2}
to rectifiable sets.

\begin{lemma}\label{lem:rect2U}
Let   $\setU\subseteq\reals^{m\times n}$ be nonempty. Then,   
 \begin{align}\label{eq:nottight0}
\dim_{H}(\setU-\setU)\leq \overline{\dim}_{\mathrm B}(\setU-\setU)\leq 2\, \overline{\dim}_{\mathrm B}(\setU).
\end{align}
If, in addition, $\setU$ is (countably) $s$-rectifiable, then $\setU-\setU$ is (countably) $2s$-rectifiable with 
\begin{align}\label{eq:nottight1}
\dim_{\mathrm H}(\setU-\setU)\leq 2s.
\end{align}
\end{lemma}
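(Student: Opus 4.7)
The plan is to view $\setU-\setU$ as the image of the product set $\setU\times\setU$ under the subtraction map $f\colon(\matX,\matY)\mapsto\matX-\matY$, which is linear and hence globally Lipschitz. Each of the two inequalities in \cref{eq:nottight0} then reduces to a standard property of dimension, and the rectifiability statement reduces to the two ``transport'' lemmas already proved in the paper.

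The first inequality $\dim_{\mathrm H}(\setU-\setU)\le\overline{\dim}_{\mathrm B}(\setU-\setU)$ is nothing but \cref{eq:HMB} applied to the set $\setU-\setU$, so there is nothing to do. For the second inequality I would combine two well-known facts about upper Minkowski dimension: that Lipschitz images do not increase it (a direct covering argument, since a $\delta$-cover of $\setU\times\setU$ is pushed by $f$ to an $L\delta$-cover of its image), and that $\overline{\dim}_{\mathrm B}(\setA\times\setB)\le\overline{\dim}_{\mathrm B}(\setA)+\overline{\dim}_{\mathrm B}(\setB)$, which is a standard consequence of multiplying covering numbers. Chaining these gives
\begin{equation*}
\overline{\dim}_{\mathrm B}(\setU-\setU)=\overline{\dim}_{\mathrm B}(f(\setU\times\setU))\le\overline{\dim}_{\mathrm B}(\setU\times\setU)\le 2\,\overline{\dim}_{\mathrm B}(\setU).
\end{equation*}
If $\setU$ is unbounded the right-hand side is understood as $+\infty$ and the claim is vacuous.

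For the rectifiability part, I would apply \cref{lem:forexa} with $N=2$, $\setU_1=\setU_2=\setU$, and the locally Lipschitz mapping $f(\matX,\matY)=\matX-\matY$: \Cref{itemrec} of that lemma yields that $f(\setU\times\setU)=\setU-\setU$ is $(s+s)=2s$-rectifiable when $\setU$ is $s$-rectifiable, and \Cref{itemcrec} gives the countable version. The bound \cref{eq:nottight1} is then immediate from \Cref{item1rec} of \cref{lem:recdim}: a (countably) $2s$-rectifiable set is, in particular, countably $\mathscr{H}^{2s}$-rectifiable (via the implication chain recorded right after \cref{Dfn:recset}), so its Hausdorff dimension is at most~$2s$.

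I do not anticipate any real obstacle; every ingredient is already packaged in the preceding lemmas. The only mildly delicate point is to confirm that both inequalities in \cref{eq:nottight0} remain meaningful (and trivially true) in the unbounded case, where $\overline{\dim}_{\mathrm B}(\setU)$ is conventionally $+\infty$, and to recall explicitly that the linear map $f$ is globally Lipschitz with constant $\sqrt{2}$ in the Frobenius norm, so that \cref{lem:forexa} applies without any compactness assumption being needed beyond what is already built into \cref{Dfn:recset}.
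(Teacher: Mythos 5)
Your proposal is correct and follows essentially the same route as the paper: the first inequality is \cref{eq:HMB}, the second combines the Lipschitz-invariance of upper Minkowski dimension with the product formula (the paper cites \cite[Proposition 2.5, Item (a)]{fa14} and \cite[Equation (7.9)]{fa14} for exactly these two facts), and the rectifiability claim plus \cref{eq:nottight1} follow from \cref{lem:forexa} and \cref{item1rec} of \cref{lem:recdim} applied to $f(\matA_1,\matA_2)=\matA_1-\matA_2$, just as you describe.
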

\begin{proof}
The first inequality  in \cref{eq:nottight0} is by  \cref{eq:HMB} and 
the second inequality  in \cref{eq:nottight0} follows from \cite[Proposition 2.5, Item (a)]{fa14} with $f(\matA_1, \matA_2)=\matA_1 - \matA_2$ and the product formula \cite[Equation (7.9)]{fa14}. 
The set $\setU-\setU$  is (countably) $2s$-rectifiable  owing to \cref{itemrec}  (\cref{itemcrec}) in \cref{lem:forexa}  with $f(\matA_1, \matA_2)=\matA_1 - \matA_2$. Together with \cref{item1rec} in \Cref{lem:recdim} this yields \cref{eq:nottight1}. 
\end{proof}

We are now in a position to particularize the results in \cref{th1,th2} to rectifiable sets.

\begin{theorem}\label{threct}(Recovery for rectifiable sets)
\begin{enumerate}[label=\roman*)]
\item\label{en:rect1}
Let $\setU\subseteq \reals^{m\times n}$  be nonempty with $\setU-\setU$ countably $\mathscr{H}^s$-rectifiable. 
 Then, for $k > s$ and 
Lebesgue a.a. 
$((\veca_1 \dots \veca_k), (\vecb_1 \dots \vecb_k))$ 
$\in \reals^{m\times k}\times \reals^{n\times k}$, 
every  $\matX\in \setU$  can be recovered uniquely from the rank-1 measurements
\begin{align}
\tp{(\tp{\veca_1}\matX\vecb_1\ \dots\ \tp{\veca_k}\matX\vecb_k)}.
\end{align}
\item\label{en:rect2}
Let $\setU,\setV\subseteq \reals^{m\times n}$ be nonempty with  $\setV$  $s$-rectifiable and $\setU-\setU \subseteq \setV$.  
Fix  $\beta \in (0,1-s/k)$  with $k>s$. Then, 
  for   
Lebesgue a.a. 
$((\veca_1 \dots \veca_k), (\vecb_1 \dots \vecb_k))$ 
$\in \reals^{m\times k}\times \reals^{n\times k}$, 
every  $\matX\in \setU$   can be recovered uniquely from the rank-1 measurements
\begin{align}
\tp{(\tp{\veca_1}\matX\vecb_1\ \dots\ \tp{\veca_k}\matX\vecb_k)}
\end{align}
by a $\beta$-H\"older continuous mapping $g$. 
\item\label{en:rect3} 
Let $\setU\subseteq \reals^{m\times n}$ be nonempty, Borel, and countably $\mathscr{H}^s$-rectifiable.  
Suppose that the random matrix  $\rmatX$ satisfies  $\opP[\rmatX\in\setU]=1$.
Then, for Lebesgue a.a.  
$((\veca_1 \dots \veca_k), (\vecb_1 \dots \vecb_k))  \in\reals^{m\times k}\times \reals^{n\times k}$,  
there exists 
a Borel-measurable mapping $g\colon\reals^k\to\reals^{m\times n}$, satisfying  
\begin{align}
\opP\Big[g\Big(\tp{(\tp{\veca_1}\rmatX\vecb_1\ \dots\ \tp{\veca_k}\rmatX\vecb_k)}\Big)\neq\rmatX\Big]=0
\end{align}
 provided that  $k > s$. 
\item\label{en:rect4}
Let $\setU,\setV\subseteq \reals^{m\times n}$ be nonempty with $\setU$ Borel, $\setV$  $s$-rectifiable, and $\setU\subseteq \setV$. Suppose that the random matrix 
 $\rmatX$ satisfies  $\opP[\rmatX\in\setU]=1$.  Fix $\varepsilon>0$ and let  $\beta \in (0,1-s/k)$  with $k > s$. Then, for
Lebesgue a.a. 
$((\veca_1 \dots \veca_k), (\vecb_1 \dots \vecb_k))  \in\reals^{m\times k}\times \reals^{n\times k}$,  
there exists 
a $\beta$-H\"older continuous mapping $g\colon \reals^k\to\reals^{m\times n}$ satisfying  
\begin{align}
\opP\Big[g\Big((\tp{{\tp{\veca_1}}\rmatX\vecb_1\ \dots\ \tp{\veca_k}\rmatX\vecb_k)}\Big)\neq\rmatX\Big] \leq \varepsilon.
\end{align}
\end{enumerate}
\end{theorem}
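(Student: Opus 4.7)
The plan is to derive each of the four items by combining the general recovery guarantees in \cref{th1,th2} with the dimension estimates for rectifiable sets collected in \cref{lem:recdim,lem:rect2U}. The arguments reduce to verifying that the hypotheses on $\dim_\mathrm{H}$ or $\overline{\dim}_\mathrm{B}$ required by \cref{th1,th2} are implied by the (countable) $\mathscr{H}^s$-rectifiability assumptions made here.

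First I would handle \cref{en:rect1}. Since $\setU-\setU$ is countably $\mathscr{H}^s$-rectifiable, \cref{item1rec} of \cref{lem:recdim} yields $\dim_\mathrm{H}(\setU-\setU)\leq s < k$. Invoking \cref{th1item1} of \cref{th1} directly gives injectivity of the rank-$1$ measurement mapping on $\setU$ for Lebesgue a.a.\ choices of $((\veca_1\dots\veca_k),(\vecb_1\dots\vecb_k))$. For \cref{en:rect3}, the analogous bound $\dim_\mathrm{H}(\setU)\leq s < k$ follows from the same lemma applied to $\setU$ itself, so \cref{th1item2} of \cref{th1} produces the desired Borel-measurable zero-error recovery mapping.

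For \cref{en:rect2}, I would use monotonicity of upper Minkowski dimension together with \cref{item2rec} of \cref{lem:recdim}: since $\setU-\setU\subseteq\setV$ and $\setV$ is $s$-rectifiable, we have $\overline{\dim}_\mathrm{B}(\setU-\setU)\leq s$. The constraint $\beta<1-s/k$ rearranges to $s/(1-\beta)<k$, hence
\begin{align}
\frac{\overline{\dim}_\mathrm{B}(\setU-\setU)}{1-\beta}\leq \frac{s}{1-\beta}<k,
\end{align}
which is precisely the hypothesis of \cref{th2item1} in \cref{th2}; applying that item delivers the $\beta$-H\"older continuous recovery map. Item \cref{en:rect4} is analogous: $\setU\subseteq\setV$ with $\setV$ $s$-rectifiable gives $\overline{\dim}_\mathrm{B}(\setU)\leq s$ by \cref{item2rec} of \cref{lem:recdim}, whence $\overline{\dim}_\mathrm{B}(\setU)/(1-\beta)<k$, and \cref{th2item2} of \cref{th2} yields the claimed $\beta$-H\"older continuous recovery with probability at least $1-\varepsilon$.

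There is no real obstacle here beyond bookkeeping: the proof is a straightforward specialization of the main theorems once the rectifiability hypotheses are translated into dimension bounds via \cref{lem:recdim}. The only point that warrants care is the asymmetric formulation of items \cref{en:rect2} and \cref{en:rect4}, where the rectifiable set $\setV$ is introduced as an ambient container (for $\setU-\setU$ and $\setU$, respectively) rather than assuming $\setU$ itself to be rectifiable; this is harmless because \cref{item2rec} of \cref{lem:recdim} only requires inclusion in an $s$-rectifiable set to bound the upper Minkowski dimension by $s$.
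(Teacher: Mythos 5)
Your proposal is correct and follows exactly the paper's own argument: each item is obtained by translating the rectifiability hypothesis into a dimension bound via \cref{lem:recdim} (item \ref{item1rec} for Hausdorff dimension in items \ref{en:rect1} and \ref{en:rect3}, item \ref{item2rec} for upper Minkowski dimension in items \ref{en:rect2} and \ref{en:rect4}) and then invoking the corresponding part of \cref{th1} or \cref{th2}. The rearrangement $\beta<1-s/k\Leftrightarrow s/(1-\beta)<k$ and the observation about $\setV$ serving as an ambient $s$-rectifiable container are exactly the bookkeeping the paper relies on.
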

\begin{proof}
The proof is a straightforward combination of results already established:   
\begin{itemize}
\item
\Cref{en:rect1} is by   
\cref{item1rec} in \Cref{lem:recdim}  combined with  \cref{th1item1} in \cref{th1}.  
\item 
\Cref{en:rect2} is by   \cref{item2rec} in \Cref{lem:recdim}  combined with  \cref{th2item1} in \cref{th2}.   
 \item
\Cref{en:rect3} is by    
\cref{item1rec} in \Cref{lem:recdim}  combined with \cref{th1item2} in \cref{th1}.  
\item
\Cref{en:rect4} is by \cref{item2rec} in \Cref{lem:recdim} combined with  \cref{th2item2} in \cref{th2}.$\ $  
\end{itemize}
\end{proof}

We now apply \Cref{threct} to various interesting  structured sets and start with sparse matrices.   

\begin{example}\label{exa:sparse}
Let $\setA_s^{m\times n}$ be the set of $s$-sparse matrices in $\reals^{m\times n}$, i.e., the set of   matrices with at most $s$ nonzero entries.
Further, let $\setA_{\setI}^{m\times n}$ denote the set of matrices that have their nonzero entries indexed by  $\setI \subseteq \{1, \dots, m\} \times \{1, \dots, n\}$. 
Obviously, $\setA_{\setI}^{m\times n}$ is a linear subspace of $\reals^{m\times n}$ of dimension  $\lvert \setI\rvert$. 
By \cref{exaC1} of \cref{LemmaExarec}, the set $\setA_{\setI}^{m\times n}$ is hence countably $\lvert \setI\rvert$-rectifiable.
As $\setA_s^{m\times n} = \bigcup_{\setI: \lvert \setI\rvert=s}\setA_{\setI}^{m\times n}$, it follows from \cref{exasum} in \cref{LemmaExarec} that $\setA_s^{m\times n}$ is countably $s$-rectifiable. Also note that $\setA_s^{m\times n}-\setA_s^{m\times n}=\setA_{2s}^{m\times n}$ is countably $2s$-rectifiable. 

Similarly, for every bounded subset $\setU \subseteq \setA^{m\times n}_s$, $\overline{\setU}$ is $s$-rectifiable.
This follows by first noting that  $\overline{\setU}$ is compact in $\reals^{m\times n}$  
and, therefore, for given $\setI$, $\setA_{\setI}^{m\times n} \cap \, \overline{\setU}$ is a compact subset of the linear subspace $\setA_{\setI}^{m\times n}$. Hence, by  \cref{exaC12,exasum2} of \cref{LemmaExarec2},  $\overline{\setU} =\bigcup_{\setI: \lvert \setI\rvert=s}(\setA_{\setI}^{m\times n} \cap \overline{\setU})$  is $s$-rectifiable.

We can therefore  apply the corresponding items of 
 \Cref{threct}  to obtain recovery thresholds for Lebesgue a.a. 
$((\veca_1 \dots \veca_k), (\vecb_1 \dots \vecb_k)) \in\reals^{m\times k}\times\reals^{n\times k}$ for the following sets: 
\begin{enumerate}[label=\roman*)]
\item If $\setU\subseteq  \setA_s^{m\times n}$ is nonempty, then every $\matX\in\setU$  can be recovered uniquely from $k>2s$ measurements since 
$\setU-\setU\subseteq \setA_{2s}^{m\times n}$ is countably $\mathscr{H}^{2s}$-rectifiable. 
  \item    If  $\setU\subseteq \setA_s^{m\times n}$ is nonempty and bounded, then every  $\matX\in\setU$  can be recovered uniquely from $k>2s$ measurements by a $\beta$-H\"older continuous mapping with $\beta\in(0,1-2s/k)$  since  $\setV=\overline{\setU-\setU}\subseteq \setA_{2s}^{m\times n}$ 
   is $(2s)$-rectifiable. 
\item If $\setU\subseteq \reals^{m\times n}$  is nonempty, Borel,  and satisfies  $\mathscr{H}^s(\setU\setminus\setA_s^{m\times n})=0$, then every $\matX$ with $\opP[\matX\in\setU]=1$ can be recovered from $k>s$ measurements with zero error probability since $\setU$ is countably $\mathscr{H}^{s}$-rectifiable.
\item  If  $\setU \subseteq \setA^{m\times n}_s$  is nonempty, Borel, and bounded, then every $\matX$ with $\opP[\matX\in\setU]=1$ can be recovered from $k>s$ measurements with arbitrarily small error probability by a $\beta$-H\"older continuous mapping with $\beta\in(0,1-s/k)$   since   $\setV=\overline{\setU}\subseteq \setA_s^{m\times n}$   is    $s$-rectifiable. 
 \end{enumerate}
\end{example}

We  proceed to particularizing our recovery thresholds for low-rank matrices. 

\begin{example}\label{exa:lowrank}
The set $\setM_r^{m\times n}$ of matrices in $\reals^{m\times n}$ that have rank no more than $r$ is a finite union of $\{\matzero\}$ and $C^1$-submanifolds of $\reals^{m\times n}$ of dimensions no more than $(m+n-r)r$. This follows by noting that the set of matrices in $\reals^{m\times n}$ of fixed rank $k$ is a $C^1$-submanifold of $\reals^{m\times n}$ of dimension $(m+n-k)k$
\cite[Ex. 5.30]{le00}, \cite[Ex. 1.7]{wells-differential}.
Application of \Cref{exasum,exaC1} in  \cref{LemmaExarec} therefore yields that  $\setM_r^{m\times n}$ is countably $(m+n-r)r$-rectifiable. 
Also note that $\setM_r^{m\times n}-\setM_r^{m\times n}= \setM_{2r}^{m\times n}$.

Similarly, for every  bounded subset  $\setU\subseteq \setM_r^{m\times n}$, $\overline{\setU}$ is $(m+n-r)r$-rectifiable.
This follows by first noting that $\overline{\setU}$  is compact in $\reals^{m\times n}$ 
and, therefore, 
the intersection of $\overline{\setU}$ with any of the finitely many $C^1$-submanifolds participating in $\setM_r^{m\times n}$ is a compact  subset of a $C^1$-submanifold. Hence, by   \cref{exaC12,exasum2} of \cref{LemmaExarec2},
$\overline{\setU}$  is $(m+n-r)r$-rectifiable.

We can therefore  apply the corresponding items of 
 \Cref{threct}  to obtain recovery thresholds for Lebesgue a.a.  
$((\veca_1 \dots \veca_k), (\vecb_1 \dots \vecb_k)) \in\reals^{m\times k}\times\reals^{n\times k}$ for the following sets: 
\begin{enumerate}[label=\roman*)]
\item If $\setU\subseteq  \setM_r^{m\times n}$ is nonempty, then every $\matX\in\setU$  can be recovered uniquely from $k>2(m+n-2r)r$ measurements since 
$\setU-\setU\subseteq  \setM_{2r}^{m\times n}  $ is countably $\mathscr{H}^{2(m+n-2r)r}$-rectifiable. 
  \item    If  $\setU\subseteq \setM_r^{m\times n}$ is nonempty and bounded, then every  $\matX\in\setU$  can be recovered uniquely from $k>2(m+n-2r)r$ measurements by a $\beta$-H\"older continuous mapping with $\beta\in(0,1-2(m+n-2r)r/k)$  since  $\setV=\overline{\setU-\setU}\subseteq \setM_{2r}^{m\times n}$ 
   is $(2(m+n-2r)r)$-rectifiable. 
\item If $\setU\subseteq \reals^{m\times n}$  is nonempty, Borel,  and satisfies  $\mathscr{H}^s(\setU\setminus\setM_r^{m\times n})=0$, then every $\matX$ with $\opP[\matX\in\setU]=1$ can be recovered from $k>(m+n-r)r$ measurements with zero error probability since $\setU$ is countably $\mathscr{H}^{(m+n-r)r}$-rectifiable.
\item  If  $\setU \subseteq \setM_r^{m\times n}$  is nonempty, Borel, and bounded, then every $\matX$ with $\opP[\matX\in\setU]=1$ can be recovered from $k>(m+n-r)r$ measurements with arbitrarily small error probability by a $\beta$-H\"older continuous mapping with $\beta\in(0,1-(m+n-r)r/k)$   since   $\setV=\overline{\setU}\subseteq \setM_r^{m\times n}$   is    $((m+n-r)r)$-rectifiable. 
 \end{enumerate}
\end{example}

We proceed with the development of our general theory by demonstrating that simple, albeit  relevant algebraic manipulations preserve rectifiability and hence allow
the direct statement of recovery thresholds in the spirit of \cref{threct} through application of the approach just described.

\begin{lemma}\label{lem:lipexm}
Let $\setU_i\subseteq \reals^{m\times n}$, for $i= 1,2$, 
and define 
\begin{enumerate}[label=\roman*)]
  \item $\setA =\{\matX\tp{\matX}:\matX\in\setU_1\}$,
  \item $\setA_{\times}=\{\matX_1\tp{\matX_2}:\matX_1\in \setU_1, \matX_2\in \setU_2\}$, \label{it:product}
  \item $\setA_{+}=\{\matX_1+\matX_2:\matX_1\in \setU_1, \matX_2\in \setU_2\}$,  \label{it:sum}
  \item $\setA_{\otimes}=\{\matX_1\otimes\matX_2:\matX_1\in \setU_1, \matX_2\in \setU_2\}$.
 \end{enumerate}
 If the sets $\setU_i$ are (countably) $s_i$-rectifiable, for $i= 1,2$,
then $\setA$ is (countably) $s_1$-rectifiable
and $\setA_{\times}, \setA_{+}$, and $\setA_{\otimes}$ are (countably) $(s_1+s_2)$-rectifiable.  
\end{lemma}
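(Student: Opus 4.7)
The strategy is to identify each of the four sets as the image of an appropriate product of the given rectifiable sets under a locally-Lipschitz mapping, and then invoke \cref{lem:forexa} (or \cref{lem:recsetlip} in the single-variable case). The key observation is that in all four cases the defining operation is polynomial in the entries of the matrices involved, so local Lipschitz continuity will come essentially for free.

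Concretely, I would argue as follows. Define the mappings
\begin{align}
&f_0\colon \reals^{m\times n}\to \reals^{m\times m},\quad f_0(\matX)=\matX\tp{\matX},\\
&f_\times\colon \reals^{m\times n}\times\reals^{m\times n}\to \reals^{m\times m},\quad f_\times(\matX_1,\matX_2)=\matX_1\tp{\matX_2},\\
&f_+\colon \reals^{m\times n}\times\reals^{m\times n}\to \reals^{m\times n},\quad f_+(\matX_1,\matX_2)=\matX_1+\matX_2,\\
&f_\otimes\colon \reals^{m\times n}\times\reals^{m\times n}\to \reals^{m^2\times n^2},\quad f_\otimes(\matX_1,\matX_2)=\matX_1\otimes\matX_2.
\end{align}
Then $\setA=f_0(\setU_1)$, $\setA_\times=f_\times(\setU_1\times\setU_2)$, $\setA_+=f_+(\setU_1\times\setU_2)$, and $\setA_\otimes=f_\otimes(\setU_1\times\setU_2)$.

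The main (minor) technical step is to verify that each $f_\bullet$ is locally Lipschitz. The map $f_+$ is linear and hence globally Lipschitz. For $f_0$, $f_\times$, and $f_\otimes$, each component of the output is a polynomial (of degree two) in the entries of the inputs, so the partial derivatives are themselves polynomials and thus bounded on every compact set. By the mean value theorem applied coordinate-wise this gives Lipschitz continuity on any compact subset of the respective domain, i.e., local Lipschitz continuity.

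The claim then follows immediately: \cref{recf1} (respectively \cref{recf2}) of \cref{lem:recsetlip} applied to $f_0$ shows that $\setA$ is (countably) $s_1$-rectifiable, and \cref{itemrec} (respectively \cref{itemcrec}) of \cref{lem:forexa} applied to $f_\times$, $f_+$, and $f_\otimes$ with $N=2$ shows that $\setA_\times$, $\setA_+$, and $\setA_\otimes$ are each (countably) $(s_1+s_2)$-rectifiable. There is no serious obstacle here; the content of the lemma is that the (countable) rectifiability hypothesis on the $\setU_i$ can be passed through the polynomial operations of taking products, sums, and Kronecker products by the general machinery already developed in \cref{lem:recsetlip,lem:forexa}.
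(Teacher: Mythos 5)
Your proposal is correct and follows essentially the same route as the paper: identify each set as the image of $\setU_1$ or $\setU_1\times\setU_2$ under the relevant bilinear/linear map, note that these maps are locally Lipschitz (the paper says ``continuously differentiable, hence locally Lipschitz,'' which is your polynomial-derivative argument in compressed form), and invoke \cref{lem:forexa} (or \cref{lem:recsetlip} for the single-argument case). No gaps.
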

\begin{proof}
The mapping $\matX \mapsto \matX\tp{\matX}$ is continuously differentiable, and hence locally Lipschitz. 
Thus, by \cref{itemcrec} in \cref{lem:forexa},  the set $\setA$ is countably $s_1$-rectifiable for $\setU_1$ countably $s_1$-rectifiable,
and,  by \cref{itemrec} in \cref{lem:forexa},  the set $\setA$ is $s_1$-rectifiable for $s_1$-rectifiable $\setU_1$.
Similarly, all of the mappings 
$f_{\times}(\matX_1,\matX_2)  \mapsto \matX_1\tp{\matX_2}$,
$f_{+}(\matX_1,\matX_2)  \mapsto \matX_1+\matX_2$,
and 
$f_{\otimes}(\matX_1,\matX_2)  \mapsto \matX_1\otimes \matX_2$ 
are continuously differentiable, and thus locally Lipschitz.
Hence, by \cref{itemcrec} in \cref{lem:forexa},  the sets
$\setA_{\times}$,
$\setA_{+}$, and 
$\setA_{\otimes}$
are countably $(s_1+s_2)$-rectifiable
when the sets $\setU_i$ are countably $s_i$-rectifiable, for $i= 1,2$.
Likewise, by \cref{itemrec} in \cref{lem:forexa},  the sets
$\setA_{\times}$,
$\setA_{+}$, and 
$\setA_{\otimes}$
are $(s_1+s_2)$-rectifiable
when the sets $\setU_i$ are $s_i$-rectifiable, for $i= 1,2$.
\end{proof}

\cref{lem:lipexm} in combination with \cref{exa:sparse,exa:lowrank} 
immediately yields recovery thresholds for sums, products, and Kronecker products of sparse and low-rank matrices and covers, e.g.,   the structured matrices discussed in \cite{klopp2019structured}. A  concrete  example making use of    \cref{it:product} in \Cref{lem:lipexm} is the QR-decomposition of matrices with sparse R-components. 

\begin{example}\label{ex:qr}
  Let $m,n\in\naturals$ with $m\geq n$ and denote by $\setC_s^{m\times n}$  the set of all matrices  in $\reals^{m\times n}$ with $s$-sparse upper triangular matrix in their QR-decomposition, i.e, 
  \begin{align}
   \setC_s^{m\times n}=\{\matQ\matR: \matQ\in \setQ^{m\times m}, \matR\in \setR_s^{m\times n}\}, 
   \end{align}   
   where 
    $\setR_s^{m\times n}\subseteq\setA_s^{m\times n}$ designates  the set of all $s$-sparse upper triangular matrices  and 
   $\setQ^{m\times m}$ stands for   the set of orthogonal matrices in $\reals^{m\times m}$. Employing the same reasoning   as in  
   \cref{exa:sparse}, it follows that $\setR_s^{m\times n}$ is countably $s$-rectifiable. 
   Further, $\setQ^{m\times m}$ is a  compact $m(m-1/2)$-dimensional $C^1$-submanifold of $\reals^{m\times m}$ \cite[Section 1.3.1]{ch03} and thus, by \cref{exaC12} in \cref{LemmaExarec2}, $(m(m-1)/2)$-rectifiable. 
We therefore conclude that,  by \cref{it:product} in \cref{lem:lipexm}, $\setC_s^{m\times n}$ is countably $(m(m-1)/2+s)$-rectifiable. Further, thanks to \cref{lem:rect2U},   
  $\setC_s^{m\times n}-\setC_s^{m\times n}$ is countably $(m(m-1)+2s)$-rectifiable.   
  
  Now, consider a bounded subset  $\setU\subseteq \setC_s^{m\times n}$. 
  Then, \begin{align}
  \setU_2 = \big\{\matR \in \setR_s^{m\times n}: \exists \matQ\in \setQ^{m\times m}\ \text{with}\ \matQ\matR\in \setU \big\}\end{align} 
  is bounded because multiplication by $\matQ$ does not change the $2$-norm and 
   $\setU\subseteq \tilde{\setU} := \{\matQ\matR: \matQ\in \setQ^{m\times m}, \matR\in\overline{\setU_2}\}$.  
  Now,  $\setQ^{m\times m}$ is  $m(m-1)/2$-rectifiable, and using the same argumentation as in  
   \cref{exa:sparse} with  $\setR_s^{m\times n}$ in place of  $\setA^{m\times n}_s$, it follows  that $\overline{\setU_2}$ is $s$-rectifiable.  
  Thus,    $\tilde{\setU}$ is  $(m(m-1)/2+s)$-rectifiable owing to \cref{it:product} in \cref{lem:lipexm}. 
  Further, thanks to  \cref{lem:rect2U}, $\tilde{\setU}-\tilde{\setU}$ is  $(m(m-1)+2s)$-rectifiable.

  We can therefore  apply the corresponding items of 
 \Cref{threct}  to obtain recovery thresholds for Lebesgue a.a. 
$((\veca_1 \dots \veca_k), (\vecb_1 \dots \vecb_k)) \in\reals^{m\times k}\times\reals^{n\times k}$ for the following sets: 
\begin{enumerate}[label=\roman*)]
\item If $\setU\subseteq  \setC_s^{m\times n}$ is nonempty, then every $\matX\in\setU$  can be recovered uniquely from $k>m(m-1)+2s$ measurements since 
$\setU-\setU\subseteq \setC_s^{m\times n}-\setC_s^{m\times n}$ is countably $\mathscr{H}^{m(m-1)+2s}$-rectifiable. 
  \item    If  $\setU\subseteq \setC_s^{m\times n}$ is nonempty and bounded, then every  $\matX\in\setU$  can be recovered uniquely from $k>m(m-1)+2s$ measurements by a $\beta$-H\"older continuous mapping with $\beta\in(0,1-(m(m-1)+2s)/k)$  since  $\setV=\tilde{\setU}-\tilde{\setU}\subseteq \setC_s^{m\times n}-\setC_s^{m\times n}$ 
   is $(m(m-1)+2s)$-rectifiable. 
\item If $\setU\subseteq \reals^{m\times n}$  is nonempty, Borel,  and satisfies  $\mathscr{H}^s(\setU\setminus\setC_s^{m\times n})=0$, then every $\matX$ with $\opP[\matX\in\setU]=1$ can be recovered from $k>m(m-1)/2+s$ measurements with zero error probability since $\setU$ is countably $\mathscr{H}^{m(m-1)/2+s}$-rectifiable.
\item  If  $\setU \subseteq \setA^{m\times n}_s$  is nonempty, Borel, and bounded, then every $\matX$ with $\opP[\matX\in\setU]=1$ can be recovered from $k>m(m-1)/2+s$ measurements with arbitrarily small error probability by a $\beta$-H\"older continuous mapping with $\beta\in(0,1-(m(m-1)/2+s)/k)$   since   $\setV=\tilde {\setU}\subseteq \setC_s^{m\times n}$   is    $(m(m-1)/2+s)$-rectifiable. 
 \end{enumerate}
\end{example}

We finally note that since \cref{lem:forexa} holds for general $N\in\naturals$, \cref{lem:lipexm} is readily extended to sums, products, and Kronecker products of more than two matrices.    This extension
allows to deal, inter alia, with singular value decompositions and eigendecompositions in a manner akin to \cref{ex:qr}. Another interesting example, which can be worked out  using the same arguments as in \Cref{ex:qr} with \cref{it:sum} in \cref{lem:lipexm} in place of \cref{it:product} in \cref{lem:lipexm}, is the recovery of matrices that are  sums of  low-rank and  sparse matrices.

\section{Recurrent Iterated Function Systems}\label{sec:RIFS}

We now demonstrate how our theory can be applied to sets of fractal nature,  which  do not fall into the rich  class of rectifiable sets. 
Specifically, we investigate attractor sets of  recurrent iterated function systems defined as follows \cite{baelha89}.  
Let $\setK$ be a compact subset of $(\reals^m,\lVert\,\cdot\,\rVert_2)$ and fix $n\in\naturals$. 
 For $i=1,\dots,n$, let $w_i\colon \setK\to\setK$ be similitudes of contractivity $s_i\in[0,1)$, i.e., 
 \begin{align}
 \lVert w_i(x)-w_i(y) \rVert_2 = s_i \lVert x-y \rVert_2, \quad\text{for all $x,y\in\setK$ and $i=1,\dots,n$,}
 \end{align}
 and designate $\vecw=\tp{(w_1,\dots,w_n)}$. 
 Finally, let $\matP\in [0,1]^{n\times n}$  with entries $p_{i,j}$ in the $i$-th row and $j$-th column.   
  The triple $(\setK,\vecw,\matP)$ is referred to as a recurrent iterated function system.  In what follows, we assume that 
  $\matP$ is 
 \begin{enumerate}[label=\roman*)]
 \item  row-stochastic, i.e.,  
\begin{align}\label{eq:rowstochastic}
 \sum_{j=1}^n p_{i,j} =1, \quad \text{for $i\in\{1,\dots,n\}$, and}
 \end{align}
 \item irreducible, i.e., for every $i,j\in\{1,\dots,n\}$, there exist $i_1,\dots, i_m\in \{1,\dots,n\}$ such that $i_1=i$, $i_m=j$, and 
 \begin{align}\label{eq:irreducible}
 p_{i_1,i_2}\,p_{i_2,i_3}\dots p_{i_{m-1},i_m} >0. 
\end{align}
 \end{enumerate}
Further, define the connectivity matrix $\matC\in\{0,1\}^{n\times n}$ with entries $c_{i,j}$ in the $i$-th row and $j$-th column according to 
\begin{align}
c_{i,j}=
\begin{cases}
1,&\quad\text{if $p_{i,j}>0$}\\
0,&\quad\text{if $p_{i,j}=0$,}
\end{cases} \quad\text{for $i,j\in\{1,\dots,n\}$}
\end{align} 
and set $I(i)=\{j\in\{1,\dots,n\}: c_{i,j}=1\}$ for $i\in\{1,\dots,n\}$. Note that $\matP$ is irreducible if and only if  $\matC$ is irreducible.  
 For every recurrent iterated function system $(\setK,\vecw,\matP)$, 
 there exist unique nonempty compact sets $\setA_1,\dots,\setA_n\subseteq \setK$ satisfying \cite[Corollary 3.5]{baelha89}
 \begin{align} \label{eq:setAi}
\setA_i= \bigcup_{j\in I(i) } w_i(\setA_j),\quad\text{for $i=1,\dots,n$.}
 \end{align}
 The set 
 \begin{align}\label{eq:setUattractor}
 \setU=(\setA_1,\dots,\setA_n) \subseteq \reals^{m\times n}
 \end{align}
is called the attractor set of the recurrent iterated function system $(\setK,\vecw,\matP)$.     
We say that the sets $\setA_i$ in \cref{eq:setAi} are nonoverlapping if, for every $i\in\{1,\dots,n\}$, 
\begin{align}
\setA_j\cap\setA_k =\emptyset, \quad\text{for all $j,k\in I(i)$ with  $j\neq k$.}
\end{align}
To apply the recovery thresholds from  \cref{th1,th2} to   attractor sets $\setU$ according to  \cref{eq:setUattractor},  we  
need  the following dimension result: 
\begin{theorem}\label{thm:dA} \cite[Theorem 4.1]{baelha89}
Let $(\setK,\vecw,\matP)$ be a recurrent iterated function system with $\matP$ satisfying \cref{eq:rowstochastic} and \cref{eq:irreducible}. 
For every $t\in(0,\infty)$, define the diagonal matrix $\matS(t)=\diag(s_1^t,s_2^t,\dots,s_n^t)$, where $s_i$ is the contractivity of the similitude $w_i$, for $i=1,\dots,n$. Let $\setU=(\setA_1,\dots,\setA_n)$ be the attractor set of  $(\setK,\vecw,\matP)$ and suppose that the sets $\setA_1,\dots,\setA_n$ are  nonoverlapping.  Finally, let $d$ be the unique positive number such that $1$ is an eigenvalue of $\matS(d)\matC$ of maximum modulus (cf. \cite[Perron-Frobenius Theorem]{baelha89}). Then, it holds that  
\begin{align}
\max \big\{ \overline{\dim}_\mathrm{B}(\setA_i) : i=1,\dots,n \big\} = d.
\end{align}
\end{theorem}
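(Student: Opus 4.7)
The statement is taken verbatim from \cite[Theorem 4.1]{baelha89}, so strictly speaking the paper invokes it as a black box. Let me sketch, nevertheless, how I would approach a self-contained proof following the standard Moran-type dimension arguments adapted to the recurrent setting.

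The plan is to prove two matching bounds, $\max_i \overline{\dim}_\mathrm{B}(\setA_i) \leq d$ and $\max_i \overline{\dim}_\mathrm{B}(\setA_i) \geq d$, by unfolding the recurrence \cref{eq:setAi}. Iterating that identity $k$ times yields
\begin{align}
\setA_{i_0} = \bigcup_{(i_0,i_1,\dots,i_k)} w_{i_0} \circ w_{i_1} \circ \cdots \circ w_{i_{k-1}} (\setA_{i_k}),
\end{align}
where the union is over admissible paths, i.e., sequences with $c_{i_{l-1},i_l}=1$ for $l=1,\dots,k$. Since each $w_{i_l}$ is a similitude of contractivity $s_{i_l}$, the $k$-fold composition is a similitude of contractivity $s_{i_0}s_{i_1}\cdots s_{i_{k-1}}$, so each cylinder piece has diameter at most $D \cdot s_{i_0}\cdots s_{i_{k-1}}$ with $D=\max_j \operatorname{diam}(\setA_j)$. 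The combinatorics of admissible paths is governed by $\matS(t)\matC$: the column vector $v^{(k)}(t)$ with entries $\sum_{\text{paths of length }k \text{ from }i}(s_{i_0}\cdots s_{i_{k-1}})^t$ satisfies $v^{(k)}(t)=(\matS(t)\matC)^k\veczero^{(0)}$ with $v^{(0)}(t)=(1,\dots,1)^{\mathsf T}$. Since $\matP$, and hence $\matC$, is irreducible and the entries of $\matS(t)\matC$ depend continuously and monotonically on $t$, Perron--Frobenius theory yields a continuous, strictly decreasing spectral radius $\rho(t)$ with $\rho(d)=1$ as the defining property of $d$.

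For the upper bound, fix any $t>d$, so $\rho(t)<1$. For each $\delta>0$, stop every admissible path at the first time $k$ at which $s_{i_0}\cdots s_{i_{k-1}} \leq \delta/D$; the resulting cylinders cover $\setA_i$ and have diameter at most $\delta$. A standard supermartingale-type argument, exploiting $\rho(t)<1$ together with the Perron eigenvector of $\matS(t)\matC$, shows that the sum of $(s_{i_0}\cdots s_{i_{k-1}})^t$ over all stopped cylinders starting at $i$ is bounded by a constant independent of $\delta$. Combined with the lower bound $(s_{i_0}\cdots s_{i_{k-1}})^t \geq (s_{\min})^t (\delta/D)^t$ coming from the minimality of the stopping time, this gives $N_\delta(\setA_i) \leq C\, \delta^{-t}$, and letting $t\downarrow d$ yields $\overline{\dim}_\mathrm{B}(\setA_i)\leq d$.

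For the lower bound, the nonoverlap hypothesis together with injectivity of the $w_i$ propagates to all finite iterations, so the cylinders at any stopping depth are pairwise disjoint. Pick $t<d$; then $\rho(t)>1$, and the Perron eigenvector of $\matS(t)\matC$ can be used to distribute mass on admissible paths in such a way that the resulting Borel measure $\mu_i$ on $\setA_i$ satisfies $\mu_i(B_\delta(x)) \leq C\,\delta^{t}$ for every ball, by a Billingsley--Young style estimate that compares the scale $\delta$ to the product $s_{i_0}\cdots s_{i_{k-1}}$ along the unique admissible path containing $x$. A mass distribution principle then forces $N_\delta(\setA_i) \gtrsim \delta^{-t}$ for at least one $i$, and letting $t\uparrow d$ completes the proof.

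The main obstacle I anticipate is the coordination of stopping times across paths with different contraction products, which is what necessitates the Perron--Frobenius machinery on $\matS(t)\matC$ rather than a direct geometric counting argument; the nonoverlap assumption is essential in the lower bound to rule out cancellation between different branches of the tree of admissible paths, and is the single ingredient that prevents the argument from being a purely spectral computation.
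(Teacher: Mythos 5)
The paper does not prove this statement at all: it is imported verbatim as \cite[Theorem 4.1]{baelha89}, so there is no internal proof to compare against, and your sketch is essentially a reconstruction of the argument in that source (Moran-type covering combined with Perron--Frobenius theory for $\matS(t)\matC$). At the level of detail given, the outline is sound: the unfolding of \cref{eq:setAi} respects the paper's indexing convention (the composition $w_{i_0}\circ\cdots\circ w_{i_{k-1}}$ applied to $\setA_{i_k}$, with contraction product $s_{i_0}\cdots s_{i_{k-1}}$), the stopping-time covering gives the upper bound, and the measure built from the Perron eigenvector gives the lower bound. A few points you should be aware of, since the two steps you merely name carry all the weight. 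First, the ``supermartingale-type argument'' is concretely the identity $u_{i_0}=\sum (s_{i_0}\cdots s_{i_{k-1}})^{t}\,u_{i_k}$ over any stopping set, where $u$ is the strictly positive right Perron eigenvector of $\matS(t)\matC$ (strict positivity is exactly where irreducibility enters); this is what bounds the section sum uniformly in $\delta$, and it also requires $s_{\min}>0$ for the step $N_\delta(\setA_i)\leq C\delta^{-t}$, whereas the paper only assumes $s_i\in[0,1)$. Second, in the lower bound the mass distribution principle bounds $\dim_{\mathrm H}$ from below, which suffices only because $\dim_{\mathrm H}\leq\overline{\dim}_{\mathrm B}$ by \cref{eq:HMB}; you should say so, and your phrase ``the unique admissible path containing $x$'' needs the propagated pairwise disjointness of same-depth cylinders (which follows, as you note, from the nonoverlap hypothesis plus injectivity of the similitudes, and in fact yields positive separation by compactness, which is what the Billingsley estimate actually uses). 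Finally, $v^{(k)}(t)=(\matS(t)\matC)^k\veczero^{(0)}$ should read $v^{(k)}(t)=(\matS(t)\matC)^k v^{(0)}(t)$.
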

One obtains the following immediate consequence: 
\begin{corollary}\label{cor:dA}
Under the assumptions of \Cref{thm:dA}, the attractor set\\ $\setU=(\setA_1,\dots,\setA_n)$ satisfies 
\begin{align}\label{eq:46}
\overline{\dim}_\mathrm{B}(\setU) \leq nd. 
\end{align}
\end{corollary}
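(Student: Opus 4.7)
My plan is to identify the attractor set $\setU \subseteq \reals^{m\times n}$ with the Cartesian product $\setA_1 \times \setA_2 \times \dots \times \setA_n \subseteq (\reals^m)^n$ via the natural correspondence assigning a matrix to the tuple of its columns (this identification is an isometry between $(\reals^{m\times n}, \lVert\,\cdot\,\rVert_2)$ and the Euclidean space $((\reals^m)^n, \lVert\,\cdot\,\rVert_2)$, so it preserves covering numbers and hence the upper Minkowski dimension).

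Once the identification is made, the proof reduces to controlling the upper Minkowski dimension of a finite Cartesian product in terms of the dimensions of its factors. The paper already invokes the product inequality for the upper Minkowski dimension in the proof of \Cref{lem:rect2U} (citing Falconer \cite[Equation (7.9)]{fa14}), which states
\begin{align}
\overline{\dim}_\mathrm{B}(\setE \times \setF) \leq \overline{\dim}_\mathrm{B}(\setE) + \overline{\dim}_\mathrm{B}(\setF)
\end{align}
for bounded sets $\setE, \setF$ in Euclidean spaces. Iterating this $n-1$ times (noting that each $\setA_i$ is compact and hence bounded, so the product is also bounded), I obtain
\begin{align}
\overline{\dim}_\mathrm{B}(\setU) = \overline{\dim}_\mathrm{B}(\setA_1 \times \dots \times \setA_n) \leq \sum_{i=1}^n \overline{\dim}_\mathrm{B}(\setA_i).
\end{align}

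Finally, \Cref{thm:dA} gives $\overline{\dim}_\mathrm{B}(\setA_i) \leq \max\{\overline{\dim}_\mathrm{B}(\setA_j) : j = 1,\dots,n\} = d$ for every $i$, so the sum above is bounded by $nd$, yielding \cref{eq:46}. There is no real obstacle here: the only subtlety worth spelling out is confirming that the identification of $\reals^{m\times n}$ with $(\reals^m)^n$ (or rather with $\reals^{mn}$) preserves the Euclidean norm used in the definition of $N_\delta(\cdot)$, so that the product bound from \cite{fa14} transfers verbatim. The rest is a direct combination of \Cref{thm:dA} with the product inequality.
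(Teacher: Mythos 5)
Your proof is correct and is exactly the argument the paper intends: the paper's one-line proof invokes \Cref{thm:dA} together with the product inequality for upper Minkowski dimension from \cite[Equation (7.9)]{fa14}, which is precisely your iterated bound $\overline{\dim}_\mathrm{B}(\setA_1\times\dots\times\setA_n)\leq\sum_{i}\overline{\dim}_\mathrm{B}(\setA_i)\leq nd$. You merely spell out the (harmless) identification of $\setU\subseteq\reals^{m\times n}$ with the Cartesian product of the $\setA_i$, which the paper leaves implicit in the notation $\setU=(\setA_1,\dots,\setA_n)$.
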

\begin{proof}
Follows from \Cref{thm:dA} and the product formula \cite[Equation (7.9)]{fa14}.  
\end{proof}

We next present a simple example application of \Cref{thm:dA} and \Cref{cor:dA}, which can  easily be extended to higher dimensions.  
\begin{example} 
Let $s\in (0,1/2)$ and consider the similitudes $w_i\colon [0,1]^2\to[0,1]^2$  of contractivity $s$ defined as $w_i(\vecx) =s\,\vecx+\vecb_i$, for $i=1,\dots,4$, where 
$\vecb_1=\tp{(0,0)}$,  $\vecb_2=\tp{(1-s,0)}$, $\vecb_3=\tp{(0,1-s)}$, and $\vecb_4=\tp{(1-s,1-s)}$.  
Now,  let $\matP\in [0,1]^{4\times 4}$ be a row-stochastic matrix  
and  suppose that $p_{i,i}=0$, for $i=1,\dots, 4$, and $p_{i,j}>0$, for $i,j\in \{1,\dots,4\}$ with $i\neq j$. 
Further, let $\setA_1,\dots\setA_4$ be as in \cref{eq:setAi} and $\setU$ as in 
\cref{eq:setUattractor}. 
By construction, the sets $\setA_i$ are nonoverlapping as  \cref{eq:setAi} implies 
\begin{align}
\setA_i \subseteq w_i([0,1]^2)\quad\text{ for $i=1,\dots, 4$}
\end{align}
and the $w_i's$ have  pairwise disjoint  codomains.  
 Next,  note that, owing to \cite[Theorem 1.3.22]{hojo13}, the characteristic polynomial  of the all-ones matrix 
\begin{align}
\matJ=
\begin{pmatrix}
1&1&1&1\\1&1&1&1\\1&1&1&1\\1&1&1&1
\end{pmatrix}
\end{align}
is given by 
 \begin{align}\label{eq:pJ}
 p_\matJ(x)=\det(\matJ-x\matI)=(x-4)x^3,
 \end{align}
  where 
$\matI=\diag(1,\,1,\,1,\,1)$. We conclude that the characteristic polynomial  of the matrix $\matS(t)\matC$ equals 
\begin{align}
p_{\matS(t)\matC}(x)
&=p_{s^{t}\matC}(x)\\
&=s^{4t}\det(\matC-s^{-t}x\matI)\\
&=s^{4t}\det(\matJ-(s^{-t}x+1)\matI)\label{eq:useCI}\\
&= s^{4t} p_\matJ(s^{-t}x+1)\\
&=s^{4t} (s^{-t}x-3)(s^{-t}x+1)^3\label{eq:pSC},
\end{align}  
where \cref{eq:useCI} follows from $\matC=\matJ-\matI$ and in \cref{eq:pSC} we applied \cref{eq:pJ}. 
Hence, the eigenvalue of maximum modulus of $\matS(t)\matC$ is $\lambda_{\mathrm{max}}=3s^{t}$. Setting  $t=\log (1/3)/\log(s)$ therefore yields $\lambda_{\mathrm{max}}=1$ so that 
\begin{align}
\max\big\{ \overline{\dim}_\mathrm{B}(\setA_i) : i=1,\dots,4 \big\} = \frac{\log (1/3)}{\log(s)} 
\end{align}  
owing to \Cref{thm:dA}  
and hence 
\begin{align}
\overline{\dim}_\mathrm{B}(\setU)\leq \frac{4\log (1/3)}{\log(s)} 
\end{align}
thanks to \Cref{cor:dA}. 
\end{example}

The upper bound in \cref{eq:46}  now leads to the following recovery thresholds: 

\begin{theorem}\label{th1RIRF}(Recovery of matrices taking  values in attractor sets)
Let $\setU$ be the attractor set of a recurrent iterated function system satisfying the assumptions of \Cref{thm:dA}. Then, the following statements hold. 
\begin{enumerate}[label=\roman*)]
\item\label{en:RIFS1}
For $k > 2nd$ and 
Lebesgue a.a. 
$((\veca_1 \dots \veca_k), (\vecb_1 \dots \vecb_k))$ 
$\in \reals^{m\times k}\times \reals^{n\times k}$, every  
$\matX\in\setU$ can be recovered uniquely from the rank-1 measurements
\begin{align}
\tp{(\tp{\veca_1}\matX\vecb_1\ \dots\ \tp{\veca_k}\matX\vecb_k)}.
\end{align}
\item\label{en:RIFS2}
Let  $\beta \in \big(0, \big(1-\frac{2nd}{k}\big)\big)$ with $k>2nd$.
Then, recovery in \cref{en:RIFS1} can be accomplished by a $\beta$-H\"older continuous mapping  $g$. 
\item\label{en:RIFS3}
Let $\rmatX$ be a random matrix satisfying $\opP[\rmatX\in\setU]=1$.
Then, for Lebesgue a.a.   
$((\veca_1 \dots \veca_k), (\vecb_1 \dots \vecb_k))  \in\reals^{m\times k}\times \reals^{n\times k}$,  
there exists 
a Borel-measurable mapping $g\colon \reals^k\to\reals^{m\times n}$ satisfying  
\begin{align}
\opP\Big[g\Big(\tp{(\tp{\veca_1}\rmatX\vecb_1\ \dots\ \tp{\veca_k}\rmatX\vecb_k)}\Big)\neq\rmatX\Big]=0
\end{align}
provided that  $k > nd$. 
\item\label{en:RIFS4}
Let $\rmatX$ be a random matrix satisfying $\opP[\rmatX\in\setU]=1$, fix $\varepsilon>0$, and let $\beta \in \big(0,\big(1-\frac{nd}{k}\big)\big)$ with $k > nd$. Then, for
Lebesgue a.a.  
 $((\veca_1 \dots \veca_k), (\vecb_1 \dots \vecb_k))  \in\reals^{m\times k}\times \reals^{n\times k}$,  
there exists 
a $\beta$-H\"older continuous mapping $g\colon \reals^k\to\reals^{m\times n}$ satisfying  
\begin{align}
\opP\Big[g\Big((\tp{{\tp{\veca_1}}\rmatX\vecb_1\ \dots\ \tp{\veca_k}\rmatX\vecb_k)}\Big)\neq\rmatX\Big] \leq \varepsilon.
\end{align}
\end{enumerate}
\end{theorem}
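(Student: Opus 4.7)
The plan is to observe that Theorem \ref{th1RIRF} follows as a direct corollary of our main results \cref{th1,th2} once we have adequate dimensional bounds on the attractor set $\setU$. First I would establish that $\setU=(\setA_1,\dots,\setA_n)\subseteq\reals^{m\times n}$ is a compact subset of $\reals^{m\times n}$: since the $\setA_i$ are compact by construction (they are the unique fixed sets of the recurrent iterated function system), the Cartesian product arrangement that defines $\setU$ as a subset of $\reals^{m\times n}$ is a compact set, hence both Borel and bounded. This verifies the technical hypotheses on $\setU$ needed by \cref{th1item2} of \cref{th1} and by both parts of \cref{th2}.

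Next I would invoke \cref{cor:dA} to obtain the key dimensional estimate
\begin{align}
\overline{\dim}_\mathrm{B}(\setU) \leq nd,
\end{align}
and then apply \cref{lem:rect2U} to deduce
\begin{align}
\dim_\mathrm{H}(\setU-\setU) \leq \overline{\dim}_\mathrm{B}(\setU-\setU) \leq 2\,\overline{\dim}_\mathrm{B}(\setU) \leq 2nd.
\end{align}
The Hausdorff dimension of $\setU$ itself satisfies $\dim_\mathrm{H}(\setU)\leq \overline{\dim}_\mathrm{B}(\setU)\leq nd$ thanks to \cref{eq:HMB}.

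Armed with these bounds, the four statements fall out by direct application of our main theorems. For \cref{en:RIFS1}, the hypothesis $k>2nd$ implies $\dim_\mathrm{H}(\setU-\setU)<k$, so \cref{th1item1} of \cref{th1} yields Lebesgue-a.a.\ recovery. For \cref{en:RIFS2}, the condition $\beta<1-2nd/k$ gives $\overline{\dim}_\mathrm{B}(\setU-\setU)/(1-\beta)\leq 2nd/(1-\beta)<k$, which is precisely the assumption \cref{eq:assdimB2a} of \cref{th2item1} in \cref{th2}, producing the desired $\beta$-H\"older continuous recovery mapping. For \cref{en:RIFS3}, using $\setU$ Borel together with $\dim_\mathrm{H}(\setU)\leq nd<k$ allows us to invoke \cref{th1item2} of \cref{th1}. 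Finally, for \cref{en:RIFS4}, the condition $\beta<1-nd/k$ ensures $\overline{\dim}_\mathrm{B}(\setU)/(1-\beta)<k$, matching \cref{eq:assdimB2} in \cref{th2item2} of \cref{th2}.

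Since the argument is essentially a chain of cross-references, there is no genuine obstacle here; the real work has already been done in \cref{th1,th2} and in establishing the dimension bound of \cref{cor:dA}. The only point requiring slight attention is the verification that $\setU$, viewed as a subset of $\reals^{m\times n}$, inherits compactness (and thus Borel-measurability and boundedness) from the compactness of the constituent sets $\setA_1,\dots,\setA_n\subseteq\setK$. This is immediate from the product topology on $\reals^{m\times n}\cong (\reals^m)^n$.
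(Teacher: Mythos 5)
Your proposal is correct and follows essentially the same route as the paper: bound $\overline{\dim}_\mathrm{B}(\setU)\leq nd$ via \cref{cor:dA}, pass to $\dim_\mathrm{H}(\setU-\setU)\leq\overline{\dim}_\mathrm{B}(\setU-\setU)\leq 2nd$ via \cref{lem:rect2U} and $\dim_\mathrm{H}(\setU)\leq nd$ via \cref{eq:HMB}, and then invoke the corresponding parts of \cref{th1,th2}. Your explicit verification that the attractor set is compact (hence Borel and bounded) is a detail the paper leaves implicit, but it does not change the argument.
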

\begin{proof}
With  $\overline{\dim}_\mathrm{B}(\setU) \leq nd$ from  \Cref{cor:dA},  we have  
\begin{align}
\dim_\mathrm{H}(\setU-\setU) \leq \overline{\dim}_\mathrm{B}(\setU-\setU)\leq 2 \overline{\dim}_\mathrm{B}(\setU) \leq 2nd
\end{align} 
thanks  to  \Cref{lem:rect2U}. The statements in \cref{en:RIFS1}--\cref{en:RIFS4} now follow readily  from the corresponding parts of \Cref{th1,th2}.  
\end{proof}

\section{Proof of \cref{th1}}\label{sec:proofth1}
\Cref{th1item1} is by linearity of the mapping defined in \eqref{eq:linearity1}--\eqref{eq:linearity2} combined with \cref{prp:ns} applied to 
the set $\setU-\setU$. 
 
The proof of \cref{th1item2}  follows along the same lines as  the proof of \cite[Theorem II.1]{albole19}. We therefore present a proof sketch only. 
First, note that
by \cite[Lemma 2.3]{bagusp19}, there exist compact sets $\setU_i\subseteq \setU$, $i\in\naturals$, such that $\opP[\rmatX\in\setV]=1$, where 
\begin{align}\label{eq:VUI}
\setV=\bigcup_{i\in\naturals}\setU_i. 
\end{align} 
Next, consider the ``encoder'' mapping%
\footnote{$\matA$ and $\matB$ denote the matrices with $\veca_i$ and $\vecb_i$, respectively, in their $i$-th column.} 
\begin{align}
e\colon \reals^{m\times k}\times \reals^{n\times k} \times \reals^{m\times n}&\to \reals^k\\
\big(\matA, \matB,  \matV \big)&\mapsto  \tp{(\tp{\veca_1}\matV\vecb_1\ \dots\ \tp{\veca_k}\matV\vecb_k)}.  
\end{align}
With the decomposition of $\setV$ in \cref{eq:VUI},   argumentation as in \cite[Section V.A]{albole19} (with the mapping $\lVert\vecy-\matA\vecv\rVert_2$ 
in \cite[(139)--(140)]{albole19} replaced by  $\lVert\vecy-e(\matA,\matB,\matV)\rVert_2$) 
implies the existence of a  measurable mapping  
\begin{align}
\hat g \colon\reals^{m\times k}\times \reals^{n\times k}\times\reals^k&\to\reals^{m\times n}\\
\matA\times\matB\times\vecy&\mapsto \matX
\end{align}
such that
\begin{align}\label{eq:gprop}
e\big(\matA, \matB, \hat g(\matA, \matB,  \vecy )\big) = \vecy,\, \text{for all $\matA\in\reals^{m\times k}$, $\matB\in\reals^{n\times k}$ and $\vecy\in e(\{\matA\} \times \{\matB\} \times \setV)$.} 
\end{align}
Moreover, the mapping $\hat g $ is guaranteed to deliver an $\matX \in \setV$ that is consistent if at least one such consistent $\matX \in \setV$ exists, otherwise an error is declared by delivering an error symbol not contained in $\setV$. 
Next, for every $\matA\in\reals^{m\times k}$ and $\matB\in\reals^{n\times k}$, 
let  $p_\mathrm{e}(\matA,\matB)$ denote the probability of error defined as 
\begin{align}
p_\mathrm{e}(\matA,\matB)
& =\opP[\hat g(\matA, \matB, e(\matA, \matB,  \rmatX )) \neq\rmatX]. 
\label{eq:peab}
\end{align} 
We now   show that $p_\mathrm{e}(\matA, \matB)=0$ for Lebesgue a.a. $(\matA, \matB)$.  
We  have 
\begin{align}
& \int p_\mathrm{e}(\matA,\matB) \,\mathrm d\lebmeasure(\matA,\matB) \\ 
&=\opE\big[\lebmeasure\big(\big\{(\matA,\matB): 
g\big(\matA,\matB, e(\matA, \matB,  \rmatX ) \big)\neq\rmatX
\big\}\big)\ind{\setV}(\rmatX)\big]\label{eq:useuniona}\\
&\leq\opE\big[\lebmeasure\big(\big\{(\matA,\matB):\{\widetilde{\matV}\in \setV_\rmatX: e(\matA, \matB,  \widetilde{\matV} ) =\matzero\} \neq\{\matzero\}\big\}\big)\big], \label{eq:finaltheorem}
\end{align}
where 
 \cref{eq:useuniona} follows from Fubini's theorem \cite[Theorem 1.14]{ma99} together with  $\opP[\rmatX\in\setV]=1$, and 
in \cref{eq:finaltheorem},
we set  $\setV_\matX=\{\matV-\matX:\matV \in \setV\}$ 
and used the fact that, by \cref{eq:gprop}, $\matV:=\hat g\big(\matA,\matB, e(\matA, \matB,  \matX )\big)\neq\matX$ with $\matX\in\setV$ implies that  $\matV\in\setV\!\setminus\!\{\matX\}$ with 
\begin{align}
e(\matA, \matB,  \matX )=e(\matA, \matB,  \matV), 
\end{align}
i.e., $e(\matA, \matB,  \matV-\matX )=\veczero$. 
Finally, since 
$\mathscr{H}^k(\setV_\matX)=\mathscr{H}^k(\setV)$ by the translation invariance of $\mathscr{H}^k$ and $\mathscr{H}^k(\setV)=0$ as a consequence of $\setV\subseteq \setU$ and $\dim_\mathrm{H}(\setU)<k$, the expectation in \cref{eq:finaltheorem} is equal to zero owing to \cref{prp:ns}. Finally, for fixed $\matA,\matB$, set $g=\hat g(\matA,\matB,\cdot)$.\qed

\section{Proof of \cref{prp:ns}}\label{prp:nsproof} 
For every $j\in\naturals$, set 
\begin{align}
\setA(j)&=\underbrace{\setB_{m}(\veczero,j)\times\dots\times \setB_{m}(\matzero,j)}_{k\ \text{times}}\quad \text{and}\label{eq:Al}\\
\setB(j)&=\underbrace{\setB_{n}(\veczero,j)\times\dots\times \setB_{n}(\matzero,j)}_{k\ \text{times}}. \label{eq:Bl}
\end{align} 
By countable subadditivity of  Lebesgue measure, it suffices to show that 
\begin{align}\label{eq:proptoshow2}
\big\{\matX\in\setU\!\setminus\!\{\matzero\}: \tp{(\tp{\veca_1}\matX\vecb_1\ \dots\ \tp{\veca_k}\matX\vecb_k)}=\veczero\big\}=\emptyset,
\end{align}
for  Lebesgue  a.a.  $((\veca_1 \dots \veca_k), (\vecb_1 \dots \vecb_k))  \in\setA(j)\times\setB(j)$ and all $j\in\naturals$.
By \cref{lem:probzero2} below, \cref{eq:proptoshow2} then holds, for all $j\in\naturals$, with probability one if the deterministic matrices $\big((\veca_1 \dots \veca_k), (\vecb_1 \dots \vecb_k)\big)  \in\setA(j)\times\setB(j)$ are replaced by independent random matrices with columns $\rveca_i$, $i=1,\dots,k$, independent and uniformly distributed on $\setB_m(\veczero,j)$,  and columns $\rvecb_i$, $i=1,\dots,k$, independent and uniformly distributed on $\setB_n(\veczero,j)$. 
By countable subadditivity of  Lebesgue measure, this finally
implies that  \cref{eq:proptoshow} can be violated only on a set of Lebesgue measure zero, which concludes the proof.
 
\begin{lemma}\label{lem:probzero2}
Let $s>0$ and take $\rmatA=(\rveca_1\dots\rveca_k)$ and $\rmatB=(\rvecb_1\dots\rvecb_k)$ to be  independent random matrices with columns $\rveca_i$, $i=1,\dots,k$, independent and uniformly distributed on $\setB_m(\veczero,s)$,  and columns $\rvecb_i$, $i=1,\dots,k$, independent and uniformly distributed on $\setB_n(\veczero,s)$. 
Consider $\setU\subseteq \reals^{m\times n}$ with   $\dim_\mathrm{H}(\setU)<k$.  
Then, 
\begin{align}
P:=\opP\big[\exists \matX\in\setU\!\setminus\!\{\veczero\}:\tp{(\tp{\rveca_1}\matX\rvecb_1\ \dots\ \tp{\rveca_k}\matX\rvecb_k)}=\matzero\big]=0. 
\end{align}
\end{lemma}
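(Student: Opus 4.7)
The plan is to decompose $\setU\setminus\{\matzero\}$ into countably many pieces that are uniformly bounded away from $\matzero$, cover each piece by small balls furnished by $\dim_\mathrm{H}(\setU)<k$, and absorb each ball via an anticoncentration estimate on the bilinear forms $\tp{\rveca_j}\matY\rvecb_j$ combined with the independence of the pairs $(\rveca_j,\rvecb_j)$ across $j$. Concretely, set $\setU_l=\setU\cap\{\matX\in\reals^{m\times n}:\lVert\matX\rVert_2\geq 1/l\}$, so that $\setU\setminus\{\matzero\}=\bigcup_{l\in\naturals}\setU_l$; by countable subadditivity it suffices to show that, for every fixed $l\in\naturals$,
\begin{align*}
P_l:=\opP\bigl[\exists\,\matX\in\setU_l:\, \tp{\rveca_j}\matX\rvecb_j=0\text{ for all }j=1,\dots,k\bigr]=0.
\end{align*}
Fix $\alpha\in(\dim_\mathrm{H}(\setU),k)$ and $q\in(\alpha/k,1)$. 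Since $\mathscr{H}^\alpha(\setU_l)\leq\mathscr{H}^\alpha(\setU)=0$, for every $\varepsilon,\delta_0>0$ there exists a countable cover $\setU_l\subseteq\bigcup_i\setB_{m\times n}(\matY_{l,i},\delta_{l,i})$ with $\matY_{l,i}\in\setU_l$, $\delta_{l,i}\leq\delta_0$, and $\sum_i\delta_{l,i}^\alpha<\varepsilon$ (obtained by replacing each set in a Hausdorff cover of diameter $\leq\delta_0$ by a ball of comparable radius centered at a point of the set and discarding empty intersections).

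If the event defining $P_l$ is witnessed inside the $i$-th ball, i.e., by some $\matX\in\setB_{m\times n}(\matY_{l,i},\delta_{l,i})\cap\setU_l$, then
\begin{align*}
\lvert\tp{\rveca_j}\matY_{l,i}\rvecb_j\rvert=\lvert\tp{\rveca_j}(\matY_{l,i}-\matX)\rvecb_j\rvert\leq s^2\delta_{l,i},\qquad j=1,\dots,k,
\end{align*}
because $\lVert\rveca_j\rVert_2,\lVert\rvecb_j\rVert_2\leq s$. Denote this joint event by $B_{l,i}$. By independence of $(\rveca_j,\rvecb_j)$ across $j$ together with the stronger anticoncentration estimate to be established as \cref{lem:com1}, which I expect to read, for $\matY\neq\matzero$,
\begin{align*}
\opP\bigl[\lvert\tp{\rveca_j}\matY\rvecb_j\rvert\leq t\bigr]\leq C(m,n,s,q)\,(t/\lVert\matY\rVert_2)^{q},
\end{align*}
and using $\lVert\matY_{l,i}\rVert_2\geq 1/l$, we get $\opP[B_{l,i}]\leq C^k(ls^2\delta_{l,i})^{qk}$. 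Since $qk>\alpha$ and $\delta_{l,i}\leq\delta_0$, the union bound yields
\begin{align*}
P_l\leq\sum_i\opP[B_{l,i}]\leq C^k(ls^2)^{qk}\delta_0^{qk-\alpha}\sum_i\delta_{l,i}^{\alpha}<C^k(ls^2)^{qk}\delta_0^{qk-\alpha}\varepsilon.
\end{align*}
Letting $\varepsilon\downarrow 0$ with $l,s,k,q,\delta_0$ fixed forces $P_l=0$, and hence $P\leq\sum_l P_l=0$.

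The principal obstacle is the anticoncentration estimate itself: an exponent $q$ arbitrarily close to $1$ is required to translate the Hausdorff hypothesis $\dim_\mathrm{H}(\setU)<k$ into the summability condition $qk>\alpha$. The easy approach of conditioning on $\rvecb_j$ and bounding the resulting linear functional of $\rveca_j$ using the density of a single uniform-on-ball projection, then optimizing against the anticoncentration of $\lVert\matY\rvecb_j\rVert_2$, yields only $\opP[\lvert\tp{\rveca_j}\matY\rvecb_j\rvert\leq t]=\mathcal{O}(\sqrt{t/\lVert\matY\rVert_2})$, which would recover solely the weaker threshold $\dim_\mathrm{H}(\setU)<k/2$ and is precisely the gap between this argument and that of \cite{bagusp19}. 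The sharpening exploits the full bilinear structure of $\tp{\rveca_j}\matY\rvecb_j$, typically by diagonalizing $\matY$ through its singular value decomposition, using the rotation-invariance of the uniform distribution on the ball, and recognizing that the density of a product of two independent uniform-on-ball coordinate projections has only a logarithmic singularity at the origin, hence is almost linearly integrable.
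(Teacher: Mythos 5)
Your proposal is correct and follows essentially the same route as the paper: truncate $\setU\setminus\{\matzero\}$ away from the origin, exploit $\mathscr{H}^{\alpha}(\setU)=0$ for some $\alpha\in(\dim_\mathrm{H}(\setU),k)$ to produce covers by balls centered in the set with $\sum_i\delta_i^{\alpha}$ arbitrarily small, transfer the vanishing of the measurements at a point of a ball to near-vanishing at its center via the triangle inequality, and finish with a union bound against an anticoncentration estimate for $\tp{\rveca}\matY\rvecb$ raised to the $k$-th power by independence across $j$ (the paper's \cref{lem:com1,lem:com}, whose proof is exactly the SVD/rotation-invariance/log-singularity argument you describe). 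The only cosmetic differences are that the paper truncates $\sigma_1(\matX)$ from both sides and keeps the sharp $\delta(1+\log(1/\delta))$ form of the bound, absorbing the logarithm at the end via $\varepsilon_i^{k-\kappa}\log^k(1/\varepsilon_i)\to 0$, whereas you pre-absorb it into $\delta^{q}$ with $q$ close to $1$ and consequently need only a one-sided truncation.
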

\begin{proof}
For every $L\in\naturals$, let 
\begin{align}\label{eq:defUl}
\setU_{L}=\Big\{\matX\in\setU:  \tfrac{1}{L}< \sigma_1(\matX)<L\Big\}  
\end{align}
and set 
\begin{align}
P_{L} =\opP\big[\exists \matX\in\setU_{L} : \tp{(\tp{\rveca_1}\matX\rvecb_1\ \dots\ \tp{\rveca_k}\matX\rvecb_k)}=\matzero\big]. 
\label{PLr}
\end{align}
By the union bound, we have  
\begin{align}\label{eq:seriesprob}
P\leq
\sum_{L\in\naturals}P_{L}.  
\end{align}
We now fix $L\in\naturals$ arbitrarily and  prove  that $P_{L}=0$.  
Let  $\kappa=(k+\dim_\mathrm{H}(\setU))/2$. 
As $k>\dim_\mathrm{H}(\setU)$ by assumption, it follows that $\dim_\mathrm{H}(\setU)<\kappa<k$.  
In particular, $\kappa>\dim_\mathrm{H}(\setU)$ implies, by \cite[Equation (3.11)]{fa14}, that $\mathscr{H}^\kappa(\setU)=0$  and in turn  $\mathscr{H}^\kappa(\setU_L)=0$ 
by monotonicity of $\mathscr{H}^\kappa$.  
Thus, 
 $\mathscr{M}^\kappa(\setU_L)=0$ by \cite[Section~3.4]{fa14}, where the measure $\mathscr{M}^\kappa$ is defined according to 
\begin{align}
\mathscr{M}^{\kappa}(\setV) =\lim_{d\to 0}\mathscr{M}_d^{\kappa}(\setV)
\end{align} 
with 
\begin{align}
\mathscr{M}_d^{\kappa}(\setV)
&=\inf\Bigg\{\sum_{i\in\naturals}\varepsilon_i^\kappa : \setV\subseteq\bigcup_{i\in\naturals}\setB_{m\times n}\big(\matX_i,\tfrac{\varepsilon_i}{2}\big)\Bigg\}, \quad\text{for  all $d>0$},   
\end{align}
where the infimum is taken over all possible ball centers  $\matX_i\in\reals^{m\times n}$ and radii $\varepsilon_i\in (0,d)$, $i\in\naturals$. 
Since $\mathscr{M}_d^{\kappa}(\setU_L)$ is nonnegative and monotonically nondecreasing as $d\to 0$, $\mathscr{M}^\kappa(\setU_L)=0$ implies $\mathscr{M}_d^{\kappa}(\setU_L)=0$, for all $d>0$. 
Now, fix $d >0$ and  $\varepsilon \in (0,(\sqrt{k}L)^{-\kappa})$   arbitrarily.  
As $\mathscr{M}_d^{\kappa}(\setU_L)=0$, there must 
exist ball centers $\matX_i\in\reals^{m\times n}$, $i\in\naturals$, and radii $\varepsilon_i$, $i\in\naturals$,
such that 
\begin{align}
\setU_L\subseteq \bigcup_{i\in\naturals}\setB_{m\times n}\big(\matX_i,\tfrac{\varepsilon_i}{2}\big)\label{eq:existcenters1}
\end{align}
and
\begin{align}\label{eq:epsilon}
\sum_{i\in\naturals}\varepsilon_i^\kappa <\varepsilon. 
\end{align}
As \cref{eq:existcenters1,eq:epsilon} continue to hold upon removal of all $i$ that satisfy 
\begin{equation}\setU_L\cap\setB_{m\times n}\big(\matX_i,\tfrac{\varepsilon_i}{2}\big)=\emptyset,\end{equation}
we can assume, w.l.o.g., that
\begin{align}
\setU_L\cap\setB_{m\times n}\big(\matX_i,\tfrac{\varepsilon_i}{2}\big)\neq \emptyset, \quad\text{for all $i\in\naturals$}. \label{eq:existcenters2}
\end{align}
By doubling the radius, we can further construct a covering that has all its ball centers in $\setU$.
Concretely, by \cref{eq:existcenters2}, for every $i\in \naturals$, there exists $\matY_i\in \setU_L\cap\setB_{m\times n}(\matX_i,\varepsilon_i/2)$, and we have
$\setB_{m\times n}(\matX_i,\varepsilon_i/2) \subseteq \setB_{m\times n}(\matY_i,\varepsilon_i)$.
Thus, by \cref{eq:existcenters1},
\begin{align}\label{eq:UL2}
\setU_L\subseteq \bigcup_{i\in\naturals}\setB_{m\times n}(\matY_i,\varepsilon_i).
\end{align}
With the definition of $\setU_L$ in \cref{eq:defUl}, we now obtain for the shifted ball centers
\begin{align}\label{eq:boundYi}
\tfrac{1}{L}<\sigma_1(\matY_i) <L, \quad\text{for all $i\in\naturals$}.
\end{align}
A union bound argument applied to \cref{PLr} in combination with \cref{eq:UL2} yields
\begin{align}\label{eq:UBB}
P_{L} &\leq \sum_{i\in\naturals}\opP\big[\exists \matX\in\setB_{m\times n}(\matY_i,\varepsilon_i) : \tp{(\tp{\rveca_1}\matX\rvecb_1\ \dots\ \tp{\rveca_k}\matX\rvecb_k)}=\matzero\big]. 
\end{align}
To bound the individual probabilities on the right-hand side of \cref{eq:UBB}, we proceed as follows.
Suppose that $\matX\in\setB_{m\times n}(\matY_i,\varepsilon_i)$ for some $i\in\naturals$. 
Then, we have 
\begin{align}
&\lVert\tp{(\tp{\rveca_1}\matY_i\rvecb_1\ \dots\ \tp{\rveca_k}\matY_i\rvecb_k)}\rVert_2\label{eq:boundLS1}\\ 
&\leq \lVert\tp{(\tp{\rveca_1}(\matX-\matY_i)\rvecb_1\ \dots\ \tp{\rveca_k}(\matX-\matY_i)\rvecb_k)}\rVert_2
+\lVert\tp{(\tp{\rveca_1}\matX\rvecb_1\ \dots\ \tp{\rveca_k}\matX\rvecb_k)}\rVert_2\\
&\leq\sqrt{\sum_{j=1}^k \lVert\rveca_j\rVert_2^2\lVert\matX-\matY_i\rVert_2^2\lVert\rvecb_j\rVert_2^2}
+\lVert\tp{(\tp{\rveca_1}\matX\rvecb_1\ \dots\ \tp{\rveca_k}\matX\rvecb_k)} \rVert_2\\
&\leq s^2\sqrt{k}\varepsilon_i +\lVert\tp{(\tp{\rveca_1}\matX\rvecb_1\ \dots\ \tp{\rveca_k}\matX\rvecb_k)} \rVert_2, \label{eq:boundLS}
\end{align}
where in \cref{eq:boundLS} we used that $\rveca_j$ and $\rvecb_j$  are uniformly distributed on $\setB_m(\veczero,s)$ and  $\setB_n(\veczero,s)$, respectively,   and $\matX\in \setB_{m\times n}(\matY_i,\varepsilon_i)$ by assumption. 
Thus, the event that there exists $\matX\in\setB_{m\times n}(\matY_i,\varepsilon_i)$ satisfying
\begin{align}
\tp{(\tp{\rveca_1}\matX\rvecb_1\ \dots\ \tp{\rveca_k}\matX\rvecb_k)}=\matzero 
\end{align}
implies  that
$\lVert\tp{(\tp{\rveca_1}\matY_i\rvecb_1\ \dots\ \tp{\rveca_k}\matY_i\rvecb_k)}\rVert_2\ \leq s^2\sqrt{k}\varepsilon_i$. 
Hence, we can further upper-bound $P_{L}$ according to
\begin{align}
P_{L} 
&\leq \sum_{i\in\naturals}\opP\big[\rVert\tp{(\tp{\rveca_1}\matY_i\rvecb_1\ \dots\ \tp{\rveca_k}\matY_i\rvecb_k)}\rVert_2\ \leq s^2\sqrt{k}\varepsilon_i\big]\label{eq:final0}\\
&\leq  k^{\frac{k}{2}}\sum_{i\in\naturals} \varepsilon_i^{k}\frac{2^{\frac{k(m+n)}{2}}}{\sigma_1(\matY_i)^k  }\Bigg(1+\log \Bigg(\frac{\sigma_1(\matY_i)}{\sqrt{k}\varepsilon_i}\Bigg)\Bigg)^k\label{eq:final1}
\\
&\leq C \sum_{i\in\naturals} \varepsilon_i^k\Bigg(1+\log \Bigg(\frac{L}{\sqrt{k}\varepsilon_i}\Bigg)\Bigg)^k\\
&= C \sum_{i\in\naturals} \varepsilon_i^\kappa  \varepsilon_i^{k-\kappa}\Bigg(1+\log \Bigg(\frac{L}{\sqrt{k}\varepsilon_i}\Bigg)\Bigg)^k\label{eq:final2}
\end{align}
with 
\begin{align}
C= 2^{\tfrac{k(m+n)}{2}}\big(L\sqrt{k}\big)^k,
\end{align}
where \cref{eq:final1} is by  \cref{lem:com1} below for $\delta=s^2\sqrt{k}\varepsilon_i$ and $\matX=\matY_i$ upon noting that 
\begin{align}
s^2\sqrt{k}\varepsilon_i 
&< s^2\sqrt{k}\varepsilon^{1/\kappa}\label{eq:bounde1}\\ 
&< \frac{s^2}{L}\label{eq:bounde2}\\
&<  \sigma_1(\matY_i)s^2,\quad\text{for all $i\in\naturals$}. \label{eq:bounde3} 
\end{align} 
Here, \cref{eq:bounde1} is by \cref{eq:epsilon}, in \cref{eq:bounde2} we used $\varepsilon<(\sqrt{k}L)^{-\kappa}$ which holds by assumption, and 
\cref{eq:bounde3}  follows from \cref{eq:boundYi}. 
As  the $\log$-term in  \cref{eq:final2} is dominated by $\varepsilon_i^{k-\kappa}$ for $\varepsilon_i \to 0$ thanks to $k>\kappa$,  \cref{eq:final2}   tends to zero for  $\varepsilon\to 0$ by \cref{eq:epsilon}.
We can therefore conclude that $P_L=0$, which, as $L$ was arbitrary, by \cref{eq:seriesprob}, implies $P=0$.  
\end{proof}

\begin{lemma}\label{lem:com1} 
Let $\rmatA=(\rveca_1\dots \rveca_k)$ and $\rmatB=(\rvecb_1\dots \rvecb_k)$ be independent random matrices, with columns $\rveca_i$, $i=1,\dots,k$, independent and uniformly distributed on $\setB_m(\veczero,s)$ and  columns $\rvecb_i$, $i=1,\dots,k$, independent and uniformly distributed on $\setB_n(\veczero,s)$. 
Suppose that $\matX\in\reals^{m\times n}\!\setminus\!\{\matzero\}$.
Then, we have  
\begin{align}
\opP\big[\big\lVert\tp{(\tp{\rveca_1}\matX\rvecb_1\ \dots\ \tp{\rveca_k}\matX\rvecb_k)}\big\rVert_2\leq \delta\big]\leq
\delta^{k}\frac{2^{\frac{k(m+n)}{2}}}{\sigma_1(\matX)^k s^{2k} }\Bigg(1+\log \Bigg(\frac{s^2\sigma_1(\matX)}{\delta}\Bigg)\Bigg)^k,
\end{align}
for all $\delta\leq \sigma_1(\matX)s^2$. 
\end{lemma}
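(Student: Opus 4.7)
The plan is to reduce to a small-ball estimate for a single scalar and then invoke independence across the $k$ measurements. Since the pairs $(\rveca_i,\rvecb_i)$, $i=1,\dots,k$, are mutually independent, so are the random scalars $T_i:=\tp{\rveca_i}\matX\rvecb_i$. Moreover, $\lVert\tp{(T_1\dots T_k)}\rVert_2\leq\delta$ forces $|T_i|\leq\delta$ for every $i$, and hence
\begin{align*}
\opP\Big[\big\lVert\tp{(T_1\dots T_k)}\big\rVert_2\leq\delta\Big]\leq \prod_{i=1}^k \opP[|T_i|\leq\delta].
\end{align*}
It therefore suffices to establish the single-factor estimate and raise it to the $k$-th power.

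For a single pair, we would condition first on $\rvecb_1$. Writing $\vecw:=\matX\rvecb_1$, the variable $T_1=\tp{\rveca_1}\vecw$ is a linear functional of $\rveca_1$, and the rotational invariance of the uniform distribution on $\setB_m(\veczero,s)$ implies that $\tp{\vecw}\rveca_1/\lVert\vecw\rVert_2$ has the same distribution as $s$ times the first coordinate of a point uniformly distributed on $\setB_m(\veczero,1)$. That coordinate has the explicit density $C_{m-1}(1-t^2)^{(m-1)/2}/C_m$ (with $C_j$ the volume of the unit ball in $\reals^j$), uniformly bounded by $c_m:=C_{m-1}/C_m$, which yields
\begin{align*}
\opP[|T_1|\leq\delta\mid\rvecb_1]\leq \min\!\big(1,\ 2c_m\delta/(s\lVert\matX\rvecb_1\rVert_2)\big).
\end{align*}

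Next, we would lower-bound $\lVert\matX\rvecb_1\rVert_2$ via the SVD of $\matX$: if $\vecv_1$ is a top right singular vector, then $\lVert\matX\rvecb_1\rVert_2\geq\sigma_1(\matX)\lvert\tp{\vecv_1}\rvecb_1\rvert$. By rotational invariance again, $W:=\lvert\tp{\vecv_1}\rvecb_1\rvert/s$ has the distribution of the absolute value of the first coordinate of uniform on $\setB_n(\veczero,1)$, with density bounded by $2c_n$ on $[0,1]$. Setting $K:=2c_m\delta/(s^2\sigma_1(\matX))$, splitting the integral $\opE[\min(1,K/W)]=\int_0^1\min(1,K/w)f_W(w)\,dw$ at $w=K$ (and invoking the trivial bound $\opP[\,\cdot\,]\leq 1$ in the easy regime $K\geq 1$, which is consistent with the standing hypothesis $\delta\leq\sigma_1(\matX)s^2$) gives
\begin{align*}
\opP[|T_1|\leq\delta]\leq 2c_n K\bigl(1+\log(1/K)\bigr).
\end{align*}

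Substituting the value of $K$ back, bounding $4c_mc_n$ crudely by $2^{(m+n)/2}$ (this is extremely loose since $c_mc_n$ grows only polynomially in $m,n$), and absorbing the residual constants inside the logarithm yields the single-factor estimate claimed in the lemma; the full statement then follows by multiplication. The principal obstacle is that $\lVert\matX\rvecb_1\rVert_2$ can be arbitrarily small with positive probability, so the conditional bound $\delta/\lVert\matX\rvecb_1\rVert_2$ is not $\opE$-integrable on its own; the truncation at $1$ is what restores integrability, and its integration against the bounded density of $W$ is precisely the source of the $(1+\log(s^2\sigma_1(\matX)/\delta))$ correction in the final inequality.
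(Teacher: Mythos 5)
Your proposal is correct and follows essentially the same route as the paper: the paper likewise reduces to $\opP[|\tp{\rveca}\matX\rvecb|\leq\delta]^{k}$ via independence and then proves the single-pair bound (its Lemma 6.3, quoted from Li et al.) by rotating with the SVD, reducing to one coordinate of each of $\rveca$ and $\rvecb$, and integrating $\min\{s,\delta/(\sigma_1(\matX)a_1)\}$, which is exactly your $\opE[\min(1,K/W)]$ computation in the language of Fubini and ball volumes rather than conditional marginal densities; both yield the same constant $D_{m,n}=4V(m-1,1)V(n-1,1)/(V(m,1)V(n,1))$ and the same logarithmic factor. Your handling of the two regimes $K\leq 1$ and $K\geq 1$ and the absorption of $1/(2c_m)\leq 1$ into the logarithm are sound under the standing hypothesis $\delta\leq\sigma_1(\matX)s^{2}$.
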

\begin{proof} 
We have 
\begin{align}
&\opP\big[\big\lVert\tp{(\tp{\rveca_1}\matX\rvecb_1 \dots \tp{\rveca_k}\matX\rvecb_k)}\big\rVert_2\leq \delta\big]\\
&=\opP\big[\sum_{i=1}^k(\tp{\rveca}_i\matX\rvecb_i)^2\leq \delta^2\big]\\
&\leq \opP\big[|\tp{\rveca}_i\matX\rvecb_i|\leq \delta,\ \text{for}\ i=1,\dots,k\big]\\
&=
\opP\big[|\tp{\rveca}\matX\rvecb|\leq \delta\big]^{k} \label{eq:unioncom1}\\
&\leq\delta^{k}\frac{2^{\frac{k(m+n)}{2}}}{\sigma_1(\matX)^k s^{2k} }\Bigg(1+\log \Bigg(\frac{s^2\sigma_1(\matX)}{\delta}\Bigg)\Bigg)^k,
\label{eq:applycom1}
\end{align}
where in \cref{eq:unioncom1}  $\rveca$ and  $\rvecb$ are independent with $\rveca$ uniformly distributed on $\setB_m(\veczero,s)$ and $\rvecb$ uniformly distributed on $\setB_n(\veczero,s)$ and, therefore, we can apply \cref{lem:com} below to obtain  \cref{eq:applycom1}.
\end{proof}

\begin{lemma}\cite[Lemma 17]{lilebr17}\footnote{Since the assumption $\delta\leq \sigma_1(\matX)s^2$ is missing in \cite[Lemma 17]{lilebr17} we  present the proof of the lemma for completeness. A slightly weaker form of this result was first presented in \cite[Lemma 5]{ristbo15}.  
}\label{lem:com}
Let $\rveca$ and $\rvecb$ be independent random vectors, with  $\rveca$ uniformly distributed on $\setB_m(\veczero,s)$ and 
$\rvecb$ uniformly distributed on $\setB_n(\veczero,s)$, and  
suppose that $\matX\in\reals^{m\times n}\!\setminus\!\{\matzero\}$.  
Then, we have  
\begin{align}
\opP[|\tp{\rveca}\matX\rvecb|\leq \delta]
&\leq 
\delta\frac{D_{m,n}}{\sigma_1(\matX)s^2 }\Bigg(1+\log \Bigg(\frac{s^2\sigma_1(\matX)}{\delta}\Bigg)\Bigg), \quad\text{for all $\delta\leq \sigma_1(\matX)s^2$,}
\end{align}
where\footnote{We use the convention $V(0,s)=1$, for all $s\in \reals_{+}$.}
\begin{align}
D_{m,n}
&=\frac{4V(n-1,1)V(m-1,1)}{V(m,1)V(n,1)}\\
&\leq 2^{\frac{m+n}{2}}. 
\label{eq:D}
\end{align}
\end{lemma}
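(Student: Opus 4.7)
\textbf{Proof plan for \Cref{lem:com}.}

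The plan is to exploit rotational invariance of the uniform distribution on a ball together with the singular value decomposition to reduce the problem to a one-dimensional calculation. Write $\matX=\matU\matSigma\tp{\matV}$ with $\sigma_1=\sigma_1(\matX)$. Since $\tp{\matU}\rveca$ and $\tp{\matV}\rvecb$ are uniformly distributed on $\setB_m(\veczero,s)$ and $\setB_n(\veczero,s)$ respectively, we may replace $\rveca,\rvecb$ by these rotated versions without changing the distribution of $\tp{\rveca}\matX\rvecb$. Denote the first coordinates of the rotated vectors by $\tilde{a}_1$ and $\tilde{b}_1$. Condition next on $\rveca$ and set $\vecc=\tp{\matX}\rveca=\matV\matSigma\tp{\matU}\rveca$, so that $\tp{\rveca}\matX\rvecb=\tp{\vecc}\rvecb$ and, crucially, $\lVert\vecc\rVert_2=\lVert\matSigma\tp{\matU}\rveca\rVert_2\geq\sigma_1\lvert\tilde{a}_1\rvert$.

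The second step is to bound the conditional probability via the density of a single coordinate of a uniform-on-ball sample. A direct computation shows that the first coordinate of a uniform sample on $\setB_n(\veczero,s)$ has density $f_n(t)=V(n-1,\sqrt{s^2-t^2})/V(n,s)$, which is maximized at $t=0$ with value $V(n-1,1)/(s\,V(n,1))$. Rotational invariance of $\rvecb$ implies $\tp{\vecc}\rvecb\stackrel{d}{=}\lVert\vecc\rVert_2\,b_1$, and therefore
\begin{align}
\opP[\lvert\tp{\vecc}\rvecb\rvert\leq\delta\mid\rveca]\leq \min\!\Bigg(1,\,\frac{2\delta\,V(n-1,1)}{\lVert\vecc\rVert_2\,s\,V(n,1)}\Bigg).
\end{align}
Combining this with the lower bound $\lVert\vecc\rVert_2\geq\sigma_1\lvert\tilde{a}_1\rvert$ and setting $C=2\delta\,V(n-1,1)/(\sigma_1 s\,V(n,1))$ yields $\opP[\lvert\tp{\rveca}\matX\rvecb\rvert\leq\delta]\leq\opE[\min(1,C/\lvert\tilde{a}_1\rvert)]$.

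The third step is to evaluate this one-dimensional expectation by splitting according to $\{\lvert\tilde{a}_1\rvert\leq C\}$ and $\{\lvert\tilde{a}_1\rvert>C\}$. On the first event, using the same density bound for $\tilde{a}_1$, we get $\opP[\lvert\tilde{a}_1\rvert\leq C]\leq 2C\,V(m-1,1)/(s\,V(m,1))$. On the second event, we integrate $C/\lvert t\rvert$ against the bounded density to obtain $2C\,V(m-1,1)/(s\,V(m,1))\cdot\log(s/C)$ when $C\leq s$ (and $0$ otherwise). Summing the two contributions and substituting the value of $C$ produces exactly $\frac{D_{m,n}\delta}{\sigma_1 s^2}\bigl(1+\log(s/C)\bigr)$. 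The assumption $\delta\leq\sigma_1(\matX)s^2$ ensures $\log(\sigma_1 s^2/\delta)\geq 0$, and the elementary inequality $V(n,1)\leq 2\,V(n-1,1)$ (valid for all $n\geq 1$, since $V(n,1)/V(n-1,1)=\sqrt{\pi}\,\Gamma((n+1)/2)/\Gamma(n/2+1)\leq 2$) allows us to upper-bound $\log(s/C)$ by $\log(\sigma_1 s^2/\delta)$, yielding the claimed estimate. Finally, $D_{m,n}\leq 2^{(m+n)/2}$ follows from standard gamma-function estimates.

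The main obstacle is producing the logarithmic factor: a crude bound via the maximum density for both $\rveca$ and $\rvecb$ would lose integrability near $\tilde{a}_1=0$, so the $\log$ must be generated by integrating $1/\lvert\tilde{a}_1\rvert$ against the bounded density above the threshold $C$. Verifying that the remaining multiplicative constants combine into $D_{m,n}$ and that the log argument can indeed be replaced by $\sigma_1 s^2/\delta$ (which hinges precisely on $V(n,1)\leq 2V(n-1,1)$ and on the assumption $\delta\leq\sigma_1 s^2$) is where the calculation has to be carried out carefully.
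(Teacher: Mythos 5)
Your argument is correct and follows essentially the same route as the paper's proof: SVD plus rotational invariance to reduce everything to a one-dimensional integral over $a_1$ of $\min(\mathrm{const},\mathrm{const}/\lvert a_1\rvert)$, which is what produces the logarithm. The only cosmetic difference is that you obtain the conditional bound by projecting $\rvecb$ onto $\tp{\matX}\rveca$ and using the marginal density of a single ball coordinate (so that your threshold $C$ carries the factor $V(n-1,1)/V(n,1)$ and you need the—correct—inequality $V(n,1)\leq 2V(n-1,1)$ to recover the stated log argument), whereas the paper measures the admissible interval of $b_1$ directly and places the crossover at $\delta/(\sigma_1 s)$, which makes $\delta\leq\sigma_1(\matX)s^2$ exactly the condition that the crossover lies in $[0,s]$.
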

\begin{proof} 
We start by applying Fubini's Theorem \cite[Theorem 1.14]{ma99} and rewriting
\begin{align}
\opP[|\tp{\rveca}\matX\rvecb|\leq \delta]
&=\frac{1}{V(m,s)V(n,s)}\int_{\setB_m(\veczero,s)}h(\veca)\,\mathrm d\lebmeasure(\veca)\label{eq:comstep1}
\end{align}
with 
\begin{align}
h(\veca)
&=\int_{\setB_n(\veczero,s)}\ind{\{\vecb\in\reals^n : |\tp{\veca}\matX\vecb|\leq \delta\}}(\vecb)\,\mathrm d\lebmeasure(\vecb).\label{eq:h1}
\end{align}
Let $\matX=\matU\matSigma\matV$ be a singular value decomposition of $\matX$, where 
$\matU\in\reals^{m\times m}$ and 
$\matV\in\reals^{n\times n}$ are  orthogonal matrices, and 
\begin{align}
\matSigma=
\begin{pmatrix}
\matD&\matzero\\
\matzero&\matzero
\end{pmatrix}
\in\reals^{m\times n}
\end{align}
with $\matD=\diag(\sigma_1(\matX)\ldots\sigma_r(\matX))$ and  $r=\rank(\matX)$.  
Using the fact that Lebesgue measure on  $\setB_m(\veczero,s)$ and $\setB_n(\veczero,s)$ is invariant under rotations, we can write
\begin{align}
\opP[|\tp{\rveca}\matX\rvecb|\leq \delta]
&=\frac{1}{V(m,s)V(n,s)}\int_{\setB_m(\veczero,s)}h(\matU\veca)\,\mathrm d\lebmeasure(\veca)\label{eq:comstep2}
\end{align}
and 
\begin{align}
h(\matU\veca)
&=\int_{\setB_n(\veczero,s)}\ind{\{\vecb\in\reals^n : |\tp{\veca}\matSigma\vecb|\leq \delta\}}(\vecb)\,\mathrm d\lebmeasure(\vecb).\label{eq:h2}
\end{align} 
We now make the dependence on the largest eigenvalue $\sigma_1(\matX)$ explicit 
according to
\begin{align}
h(\matU\veca)
& = \int_{\setB_{n-1}(\veczero,s)}\int_{-s}^s \ind{\{b_1\in\reals: |\tp{\veca}\matSigma\vecb|\leq \delta, \lVert\vecb\rVert \leq s\}}(b_1)\,\mathrm d\lebmeasure(b_1)\,\mathrm d\lebmeasure\big(\tp{(b_2\dots b_n)}\big) \\
&\leq \int_{\setB_{n-1}(\veczero,s)} g\big(\tp{(b_2\dots b_n)}\big)\,\mathrm d\lebmeasure\big(\tp{(b_2\dots b_n)}\big)\label{eq:boundhUa}
\end{align}
with 
\begin{align}
g\big(\tp{(b_2\dots b_n)}\big)
&= \min\Bigg\{2s,\int_{-\infty}^\infty \ind{\{b_1\in\reals: |\sum_{i=1}^r\sigma_i(\matX)a_ib_i|\leq \delta\}}(b_1)\,\mathrm d\lebmeasure(b_1)\Bigg\}\label{eq:hua1}\\
&
= \min\Bigg\{2s,\int_{-\infty}^\infty \ind{\{b_1\in\reals: |\sigma_1(\matX)a_1b_1|\leq \delta\}}(b_1)\,\mathrm d\lebmeasure(b_1)\Bigg\}\label{eq:hua2}\\
&=  2\min\Bigg\{s,\frac{\delta}{\sigma_1(\matX)|a_1|}\Bigg\}. \label{eq:hua3}
\end{align} 
Using  \cref{eq:boundhUa} and \cref{eq:hua1}--\cref{eq:hua3} in \cref{eq:comstep2}, we obtain
\begin{align}
\opP[|\tp{\rveca}\matX\rvecb|\leq \delta]
&\leq \frac{D_{m,n}}{s^2}\int_{0}^s \min\Bigg\{s,\frac{\delta}{\sigma_1(\matX) a_1}\Bigg\}\,\mathrm d\lebmeasure(a_1)\\
&=\frac{\delta D_{m,n}}{\sigma_1(\matX) s^2}\Bigg(1+\log \Bigg(\frac{s^2\sigma_1(\matX)}{\delta}\Bigg)\Bigg),
\end{align}
for all $\delta\leq \sigma_1(\matX)s^2$.
The upper bound on $D_{m,n}$ follows from $2^{k/2}<V(k,1)<2^k$, for all $k\in\naturals$.
\end{proof}

\section{Proof of \cref{th2}}\label{sec:proofth2}
We first prove \cref{th2item1}. 
Consider the mapping 
\begin{align}
h\colon \setU&\to h(\setU)\subseteq\reals^k\\
\matX&\mapsto \tp{(\tp{\veca_1}\matX\vecb_1\ \dots\ \tp{\veca_k}\matX\vecb_k)}. 
\end{align} 
Application of \cref{prp:nsreg} to $\setU-\setU$ establishes the existence of a $c>0$ such that 
\begin{align}
\lVert h(\matU-\matV)\rVert_2  \geq c\lVert\matU-\matV\rVert_2^{1/\beta},    \quad\text{for all $\matU,\matV\in\setU$,}  
\end{align}
for Lebesgue a.a. $((\veca_1 \dots \veca_k), (\vecb_1 \dots \vecb_k))  \in\reals^{m\times k}\times \reals^{n\times k}$.
Hence, by \cite[Lemma 2]{striagbo15}, $h$ admits a  $\beta$-H\"older continuous inverse $h^{-1}\colon h(\setU)\to\setU$, which  can be extended to the desired   $\beta$-H\"older continuous mapping $g$ on $\reals^k$ owing to  \cite[Theorem 1, Item ii)]{mi70}. 

The proof of \cref{th2item2} follows along the same lines as that of \cite[Theorem 2]{striagbo15}. We therefore present a proof sketch only. 
By \cite[Proposition 2.6]{fa14}, we can assume, w.l.o.g., that $\setU$ is compact.   
Consider the sets $\setA, \setA_j\subseteq \reals^{m\times k}\times \reals^{n\times k}\times  \reals^{m\times n}$ defined according to 
\begin{align}\label{eq:defA}
\setA&=\Bigg\{\big(\matA,\matB,\matX\big) : \inf \Bigg\{\frac{\lVert\tp{(\tp{\veca_1}\matU\vecb_1\ \dots\ \tp{\veca_k}\matU\vecb_k)}\rVert_2}{\lVert\matU\rVert_2^{1/\beta}}: \matU\in\setU_\matX\!\setminus\!\{\matzero\} \Bigg\}=0  \Bigg\}  
\end{align}
and 
\begin{align}\label{eq:defAj}
\setA_j&=\Bigg\{\big(\matA,\matB,\matX\big) : \inf \Bigg\{\frac{\lVert\tp{(\tp{\veca_1}\matU\vecb_1\ \dots\ \tp{\veca_k}\matU\vecb_k)}\rVert_2}{\lVert\matU\rVert_2^{1/\beta}}: \matU\in\setU_\matX\!\setminus\!\{\matzero\} \Bigg\}>\tfrac{1}{j}\Bigg\},
\end{align}
for all $j\in\naturals$, where 
\begin{align}
\setU_\matX=\{\matU-\matX:\matU\in\setU\}, \quad\text{for all $\matX\in\reals^{m\times n}$.} 
\end{align}
By the same arguments as used in \cite[Section VI]{striagbo15}, one can show that $\setA$ is a measurable set. 
Application of Fubini's Theorem \cite[Theorem 1.14]{ma99} therefore yields 
\begin{align}\label{eq:allpyFubini}
\int \opP[(\matA,\matB,\rmatX)\in\setA]\,\mathrm d \lebmeasure(\matA,\matB) 
&= \opE[ \lebmeasure\{(\matA,\matB): (\matA,\matB,\rmatX)\in\setA\} ].
\end{align}
As the right-hand side of \cref{eq:allpyFubini} equals zero owing to  \cref{prp:nsreg}, it follows that 
\begin{align}\label{eq:PropA}
 \opP[(\matA,\matB,\rmatX)\in\setA]=0, \quad\text{for Lebesgue a.a. $(\matA,\matB)$.}
\end{align} 
Since the complement of $\setA$, denoted by $\setA^\mathrm{c}$, can be written as 
\begin{align}
\setA^\mathrm{c}=\bigcup_{j\in\naturals}\setA_j,
\end{align}
application of \hspace{1sp}\cite[Lemma 3.4, Item (a)]{ba95} together with \cref{eq:PropA} yields
\begin{align}\label{eq:prob1}
\lim_{j\to \infty} \opP[(\matA,\matB,\rmatX)\in\setA_j]=1, \quad\text{for Lebesgue a.a. $(\matA,\matB)$.}
\end{align}
Let $\setC\subseteq \reals^{m\times k}\times  \reals^{n\times k}$ denote the set of matrices $(\matA,\matB)$ for which \cref{eq:prob1} holds, and fix $\varepsilon>0$ arbitrarily.  
Then, for every $(\matA,\matB)\in\setC$, there must exist a $J(\matA,\matB)\in\naturals$ such that 
\begin{align}\label{eq:PropAJ}
 \opP[(\matA,\matB,\rmatX)\in\setA_{J(\matA,\matB)}]\geq 1-\varepsilon.
\end{align}
Next, for every $(\matA,\matB)\in\setC$,  let 
\begin{align}
\setU_{\matA,\matB}=\{\matX\in\setU:(\matA,\matB,\matX)\in\setA_{J(\matA,\matB)}\}. 
\end{align}
Since $\opP[\rmatX\in\setU]=1$ by assumption, \cref{eq:PropAJ} yields  
\begin{align}\label{eq:PropAU}
 \opP[\rmatX\in\setU_{\matA,\matB}]\geq 1-\varepsilon, \quad\text{for all $(\matA,\matB)\in\setC$.}
\end{align}
Now, consider $(\matA,\matB)\in\setC$ and fix  $\matU,\matV\in\setU_{\matA,\matB}$  with  $\matU\neq \matV$ but arbitrary otherwise. 
It follows  that  $\matU-\matV\in\setU_\matV\!\setminus\!\{\matzero\}$ and $(\matA,\matB,\matV)\in\setA_{J(\matA,\matB)}$, and    
the definition of $\setA_{J(\matA,\matB)}$ (see \cref{eq:defAj}) yields
\begin{align}
\lVert\matU-\matV\rVert_2^\frac{1}{\beta}\leq J(\matA,\matB)\lVert\tp{(\tp{\veca_1}(\matU-\matV)\vecb_1\ \dots\ \tp{\veca_k}(\matU-\matV)\vecb_k)}\rVert_2. 
\end{align} 
By  \cite[Lemma 2]{striagbo15}, we can therefore conclude that, for  every $(\matA,\matB)\in\setC$, the mapping 
\begin{align}
f_{\matA,\matB}\colon\setU_{\matA,\matB}&\to \{\tp{(\tp{\veca_1}\matX\vecb_1\ \dots\ \tp{\veca_k}\matX\vecb_k)}:\matX\in \setU_{\matA,\matB}\} \\
\matX&\mapsto \tp{(\tp{\veca_1}\matX\vecb_1\ \dots\ \tp{\veca_k}\matX\vecb_k)}
\end{align}
is injective with $\beta$-H\"older continuous inverse $f_{\matA,\matB}^{-1}$. 
Finally, for  every $(\matA,\matB)\in\setC$, the mapping $f_{\matA,\matB}^{-1}$ can be extended to the desired $\beta$-H\"older continuous mapping $g$ 
on $\reals^k$ by \cite[Theorem 1, Item ii)]{mi70}. 
\qed
\section{Proof of \cref{prp:nsreg}}\label{nsregproof}
For every $j\in\naturals$, 
let  $\setA(j)$ and $\setB(j)$ be as in \cref{eq:Al} and \cref{eq:Bl}, respectively. 
By countable subadditivity of  Lebesgue measure, it suffices to show that 
\begin{align}\label{eq:proptoshowreg2}
\inf \Bigg\{\frac{\lVert\tp{(\tp{\veca_1}\matX\vecb_1\ \dots\ \tp{\veca_k}\matX\vecb_k)}\rVert_2}{\lVert\matX\rVert_2^{1/\beta}}: \matX\in\setU\!\setminus\!\{\matzero\} \Bigg\}>0,
\end{align}
for Lebesgue  a.a.  $\big((\veca_1 \dots \veca_k), (\vecb_1 \dots \vecb_k)\big)  \in\setA(j)\times\setB(j)$ and  
all $j\in\naturals$.
Owing to \cref{lem:probzero2reg} below, \cref{eq:proptoshowreg2} then holds, for all $j\in\naturals$, with probability 1 if the deterministic matrices  $\big((\veca_1 \dots \veca_k), (\vecb_1 \dots \vecb_k)\big)  \in\setA(j)\times\setB(j)$ are replaced by independent random matrices with columns $\rveca_i$, $i=1,\dots,k$, independent and uniformly distributed on $\setB_m(\veczero,j)$,  and columns $\rvecb_i$, $i=1,\dots,k$, independent and uniformly distributed on $\setB_n(\veczero,j)$. 
By countable subadditivity of  Lebesgue measure, this finally implies that \cref{eq:proptoshowreg} can be violated only on a set of Lebesgue measure zero, which finalizes the proof.
 
\begin{lemma}\label{lem:probzero2reg}
Let $s>0$ and take $\rmatA=(\rveca_1\ \dots\ \rveca_k)$ and $\rmatB=(\rvecb_1\ \dots\ \rvecb_k)$ to be  independent random matrices with columns $\rveca_i$, $i=1,\dots,k$, independent and uniformly distributed on $\setB_m(\veczero,s)$,  and columns $\rvecb_i$, $i=1,\dots,k$, independent and uniformly distributed on $\setB_n(\veczero,s)$. 
Consider a nonempty and bounded set $\setU\subseteq\reals^{m\times n}$, 
and suppose that there exists a $\beta\in (0,1)$ such that 
\begin{align}
\frac{\overline{\dim}_\mathrm{B}(\setU)}{k} < 1-\beta. 
\end{align}
Then, 
\begin{align}
\opP\Bigg[
\inf \Bigg\{\frac{\lVert\tp{(\tp{\rveca_1}\matX\rvecb_1\ \dots\ \tp{\rveca_k}\matX\rvecb_k)}\rVert_2}{\lVert\matX\rVert_2^{1/\beta}}: \matX\in\setU\!\setminus\!\{\matzero\} \Bigg\}>0 
\Bigg]=1. 
\end{align}
\end{lemma}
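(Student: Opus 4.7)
The plan is to adapt the shell-decomposition-plus-covering argument used in the proof of \cref{lem:probzero2}, upgrading its qualitative conclusion that $\tp{(\tp{\rveca_1}\matX\rvecb_1\ \dots\ \tp{\rveca_k}\matX\rvecb_k)}$ is nonzero to a quantitative lower bound on its norm in terms of $\lVert\matX\rVert_2^{1/\beta}$. Writing $h(\matX) = \tp{(\tp{\rveca_1}\matX\rvecb_1\ \dots\ \tp{\rveca_k}\matX\rvecb_k)}$ for short, the goal is to produce, with probability one, a random constant $c>0$ for which $\lVert h(\matX)\rVert_2 \geq c\, \lVert\matX\rVert_2^{1/\beta}$ holds for every $\matX \in \setU\!\setminus\!\{\matzero\}$. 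The hypothesis $\overline{\dim}_\mathrm{B}(\setU)/k < 1-\beta$ lets me fix an intermediate exponent $\eta$ with $\overline{\dim}_\mathrm{B}(\setU) < \eta < k(1-\beta)$, whose slack makes the eventual union-bound series geometrically summable.

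First, I would decompose $\setU\!\setminus\!\{\matzero\}$ into dyadic annuli $\setU_\ell = \{\matX \in \setU : 2^{-\ell-1} < \lVert\matX\rVert_2 \leq 2^{-\ell}\}$, $\ell \geq \ell_0$, where the lower bound $\ell_0$ exists because $\setU$ is bounded. For a parameter $c>0$ (eventually sent to $0$), pick the scale $\delta_\ell = \kappa\, c\, 2^{-\ell/\beta}$ with a constant $\kappa=\kappa(s,k)$ arranged so that $s^2\sqrt{k}\delta_\ell \leq c\, 2^{-\ell/\beta}$. By definition of $\overline{\dim}_\mathrm{B}$, there is a constant $C_\eta$ with $N_{\delta_\ell}(\setU_\ell)\leq C_\eta\, \delta_\ell^{-\eta}$ for every $\ell\geq\ell_0$; cover $\setU_\ell$ by such balls centered at points $\matY_{\ell,i}\in\setU_\ell$ (doubling the radius if necessary to push centers into $\setU_\ell$, as in \cref{eq:existcenters2}--\cref{eq:UL2}). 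The trivial estimate $\lVert h(\matX) - h(\matY_{\ell,i})\rVert_2 \leq s^2\sqrt{k}\lVert\matX - \matY_{\ell,i}\rVert_2$ (exactly the bound appearing in \cref{eq:boundLS1}--\cref{eq:boundLS}) then implies that on the event $\lVert h(\matY_{\ell,i})\rVert_2 \geq 2c\, 2^{-\ell/\beta}$ for every center in $\setU_\ell$, one has $\lVert h(\matX)\rVert_2 \geq c\, 2^{-\ell/\beta} \geq c\, \lVert\matX\rVert_2^{1/\beta}$ throughout $\setU_\ell$.

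Apply \cref{lem:com1} at each center with $\delta = 2c\, 2^{-\ell/\beta}$, using $\sigma_1(\matY_{\ell,i}) \geq \lVert\matY_{\ell,i}\rVert_2/\sqrt{\min(m,n)} \geq c_1 2^{-\ell}$ to control the center's leading singular value; the hypothesis $\delta \leq \sigma_1(\matY_{\ell,i})s^2$ of \cref{lem:com1} reduces to $c \leq c_2\, 2^{\ell(1/\beta - 1)}$, which holds uniformly in $\ell\geq\ell_0$ as soon as $c$ is small. Combining this concentration bound with a union bound over the $\leq C_\eta \delta_\ell^{-\eta}$ centers yields
\begin{align*}
\opP[\mathcal{E}_\ell(c)] \leq C_3\, c^{k-\eta}\, 2^{\ell(\eta - k(1-\beta))/\beta}(1 + \ell)^k,
\end{align*}
where $\mathcal{E}_\ell(c)$ is the event that some center in the $\ell$-th shell violates the lower bound above. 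Since $\eta < k(1-\beta)$ makes the $\ell$-exponent strictly negative, the polylogarithmic $(1+\ell)^k$ is dominated and $\sum_{\ell\geq\ell_0}\opP[\mathcal{E}_\ell(c)] \leq C_4\, c^{k-\eta}$, with $k-\eta>0$ because $\eta<k(1-\beta)<k$.

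On the complement of $\bigcup_\ell \mathcal{E}_\ell(c)$ the desired infimum is at least $c$, so $\opP[\inf \geq c] \geq 1 - C_4\, c^{k-\eta}$; writing $\{\inf > 0\} = \bigcup_{n\in\naturals} \{\inf \geq 1/n\}$ and passing to the limit $n\to\infty$ gives the claim. The principal bookkeeping obstacle is verifying the hypothesis $\delta \leq \sigma_1(\matY_{\ell,i})s^2$ of \cref{lem:com1} uniformly in $\ell$, and tracking the $(1+\log(\cdot))^k$ polylogarithmic factors so that they remain dominated by the geometric decay in $\ell$; once the exponents $\eta$ and $\delta_\ell$ are chosen as above, the remainder is a quantitative sharpening of the proof of \cref{lem:probzero2}.
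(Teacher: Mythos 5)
Your argument is correct, and it shares the two load-bearing ingredients with the paper's proof: a covering of the small-norm part of $\setU$ at a scale calibrated to the exponent $1/\beta$, followed by the concentration bound of \cref{lem:com1} at each (recentered) ball center and a union bound whose summability comes exactly from the gap $\overline{\dim}_\mathrm{B}(\setU)<k(1-\beta)$. The organization, however, is genuinely different. The paper works with the nested sets $\setU_j=\setU\setminus\setB_{m\times n}(\matzero,2^{-\beta j})$, proves via Borel--Cantelli that almost surely $\lVert h(\matX)\rVert_2\geq 2^{-j}$ on $\setU_j$ for all large $j$, and then invokes a separate bridging result (\cref{lem:lemballs}, imported from the lossless-compression literature) to convert that ``eventually, for all $j$'' statement into positivity of the infimum. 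You instead slice $\setU\setminus\{\matzero\}$ into disjoint dyadic annuli in $\lVert\matX\rVert_2$, introduce the explicit threshold $c$, and obtain the quantitative tail bound $\opP[\inf<c]\lesssim c^{k-\eta}$ directly, after which continuity from below finishes the proof with no Borel--Cantelli and no auxiliary lemma. Your route is self-contained and yields a rate for the tail of the infimum as a by-product; the paper's route externalizes the $\varepsilon$--$\delta$ bookkeeping into \cref{lem:lemballs} and keeps the probabilistic estimate single-indexed. Two small points to tidy up in a full write-up: the constant in your covering bound $N_{\delta_\ell}(\setU_\ell)\leq C_\eta\delta_\ell^{-\eta}$ must be taken uniform over the finitely many scales with $\delta_\ell$ above the threshold where the Minkowski-dimension estimate kicks in (standard, since $\setU$ is bounded); and the logarithmic factor from \cref{lem:com1} contributes an extra $(1+\log(1/c))^k$ to your constant $C_4$, so the tail bound is really $c^{k-\eta}(1+\log(1/c))^k$ --- this still vanishes as $c\to 0$ because $k>\eta$, so the conclusion stands.
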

\begin{proof}
Since $\setU$ is bounded by assumption, there exists a $K>0$ such that 
\begin{align}\label{eq:defK}
\sigma_1(\matX)\leq K, \quad\text{for all $\matX\in\setU$.}
\end{align}
For every $j\in\naturals$, set
\begin{align}\label{eq:defUj}
\setU_j=\setU\!\setminus\!\setB_{m\times n}\big(\matzero,2^{-\beta j}\big). 
\end{align} 
Then, \cref{eq:defK} together with  \cref{eq:defUj}, upon using $\sigma_1(\matX) \geq \lVert\matX\rVert_{2}/\rank(\matX)$,
yields
\begin{align}\label{eq:boundsigma}
\frac{2^{-\beta j}}{\sqrt{m}}\leq \sigma_1(\matX)\leq K, \quad\text{for all $\matX\in\setU_j$ and $j\in\naturals$.}
\end{align}
By \cref{lem:lemballs} below, it is sufficient to show that  
\begin{align} \label{eq:xyxy1}
\opP\big[\exists J: \lVert\tp{(\tp{\rveca_1}\matX\rvecb_1\ \dots\ \tp{\rveca_k}\matX\rvecb_k)}\rVert_2\geq 2^{-j},
 \text{ for all } \matX\in\setU_j, j\geq J
\big]=1. 
\end{align}
This will be established by arguing as follows. Suppose we can prove that there exists a $J\in\naturals$ such that 
\begin{align}\label{eq:xyxy2}
\sum_{j=J}^\infty\opP\big[\exists \matX\in \setU_j:\lVert\tp{(\tp{\rveca_1}\matX\rvecb_1\ \dots\ \tp{\rveca_k}\matX\rvecb_k)}\rVert_2< 2^{-j} 
\big]<\infty.
\end{align}
Then, the Borel-Cantelli Lemma \cite[Theorem 2.3.1]{du10} implies 
\begin{align}
\opP\big[\exists \matX\in \setU_j:\lVert\tp{(\tp{\rveca_1}\matX\rvecb_1\ \dots\ \tp{\rveca_k}\matX\rvecb_k)}\rVert_2< 2^{-j},
\ \text{for infinitely many $j\in\naturals$}\big]=0, 
\end{align}
which, in turn, implies \cref{eq:xyxy1}. 

It remains to establish \cref{eq:xyxy2}, which will be effected through a covering argument. For every $j\in \naturals$, consider the covering ball center
$\matY_i^{(j)}\in\setU_j$ such that 
\begin{align}
\setU_j\subseteq\bigcup_{i=1}^{N_{\setU_j}(2^{-j})} \setB_{m\times n}\Big(\matY_i^{(j)},2^{-j}\Big). 
\end{align}
A union bound argument then yields
\begin{align}
&\opP\big[\exists \matX\in \setU_j:\lVert\tp{(\tp{\rveca_1}\matX\rvecb_1\ \dots\ \tp{\rveca_k}\matX\rvecb_k)}\rVert_2< 2^{-j} \big]\label{eq:zuu1}\\
&\leq \sum_{i=1}^{N_{\setU_j}(2^{-j})}\opP\Big[\exists \matX\in \setB_{m\times n}\big(\matY_i^{(j)},2^{-j}\big):\lVert\tp{(\tp{\rveca_1}\matX\rvecb_1\ \dots\ \tp{\rveca_k}\matX\rvecb_k)}\rVert_2< 2^{-j} \Big].\label{eq:useunion}
\end{align}
Next, choose $J_1\in\naturals$ such that
\begin{align}
2^{-J_1(1-\beta)}\leq \frac{s^2}{(1+s^2\sqrt{k})\sqrt{m}}.\label{eq:J1def} 
\end{align} 
This implies $(1+s^2\sqrt{k})2^{-j}\leq \frac{2^{-\beta j}}{\sqrt{m}} s^2$, for all  $j\geq J_1$, 
and thus, by \cref{eq:boundsigma}, $(1+s^2\sqrt{k})2^{-j}\leq \sigma_1(\matX)s^2$, for all $\matX\in\setU_j$.
Hence, for all $j\geq J_1$, 
we can bound each summand in \cref{eq:useunion} according to
\begin{align}
& \opP\Big[\exists \matX\in \setB_{m\times n}\big(\matY_i^{(j)},2^{-j}\big):\lVert\tp{(\tp{\rveca_1}\matX\rvecb_1\ \dots\ \tp{\rveca_k}\matX\rvecb_k)}\rVert_2< 2^{-j} \Big]
\label{eq:boundptj1}
\\
&\leq \opP\Bigg[\Bigg\lVert\tp{\Big(\tp{\rveca_1}\matY^{(j)}_i\rvecb_1\ \dots\ \tp{\rveca_k}\matY^{(j)}_i\rvecb_k\Big)}\Bigg\rVert_2< (1+s^2\sqrt{k})2^{-j} \Bigg]\label{eq:usetriangle}
\\
&\leq (1+s^2\sqrt{k})^{k} 2^{-jk}\frac{2^{\frac{k(m+n)}{2}}}{\sigma_1\big(\matY^{(j)}_i\big)^k s^{2k} }\Bigg(1+\log \Bigg(\frac{s^2\sigma_1\big(\matY^{(j)}_i\big)}{(1+s^2\sqrt{k})2^{-j}}\Bigg)\Bigg)^k
\label{eq:applycom} 
\\
&\leq (s^{-2}+\sqrt{k})^{k} m^{\frac{k}{2}} 2^{-jk(1-\beta)}2^{\frac{k(m+n)}{2}}\Bigg(1+\log\Bigg( \frac{s^2K}{1+s^2\sqrt{k}}\Bigg)+j\log 2\Bigg)^k,
\label{eq:boundptj4}
\end{align}
where \cref{eq:usetriangle} is by \cref{eq:boundLS1}--\cref{eq:boundLS} for $\varepsilon_i=2^{-j}$, 
in \cref{eq:applycom} we applied \cref{lem:com1} with $\delta=(1+s^2\sqrt{k})2^{-j}$  and $\matX = \matY^{(j)}_i$,
and in \cref{eq:boundptj4} we used \cref{eq:boundsigma}. 
Inserting \cref{eq:boundptj1}--\cref{eq:boundptj4} into \cref{eq:zuu1}--\cref{eq:useunion} results in 
\begin{align}
&\opP\big[\exists \matX\in \setU_j:\lVert\tp{(\tp{\rveca_1}\matX\rvecb_1\ \dots\ \tp{\rveca_k}\matX\rvecb_k)}\rVert_2< 2^{-j} \big]\label{eq:use11}
\\
&\leq C N_\setU(2^{-j}) 
2^{-jk(1-\beta)} (D+j\log 2)^k, \quad\text{for all $j\geq J_1$},  \label{eq:use12}
\end{align}
with 
\begin{align}
C&=(s^{-2}+\sqrt{k})^{k} m^{\frac{k}{2}}  2^{\frac{k(m+n)}{2}}
\end{align}
and
\begin{align} 
D&= 1+\log \Bigg(\frac{s^2K}{1+s^2\sqrt{k}}\Bigg).
\end{align}
Next, let 
\begin{align}
d=\frac{\overline{\dim}_\mathrm{B}(\setU) +k(1-\beta)}{2},  
\end{align}
which implies  $\overline{\dim}_\mathrm{B}(\setU)<d<k(1-\beta)$ (see \cref{eq:assdimB}). 
By \cref{eq:dimbupper} we have
\begin{align}
\overline{\dim}_\mathrm{B}(\setU) = \inf_{\ell\in\naturals}\sup_{j\geq \ell}\frac{\log N_\setU\big(2^{-j}\big)}{\log\big(2^{j}\big)}.
\end{align}
Thus, as a consequence of $d>\overline{\dim}_\mathrm{B}(\setU)$, there exists a $J_2\in\naturals$
such that 
\begin{align}
N_\setU\big(2^{-j}\big) \leq 2^{jd}, \quad\text{for all $j\geq J_2$}. \label{eq:boundmink}
\end{align}
Now set $J=\max(J_1,J_2)$. Then, we have 
\begin{align}
&\sum_{j=J}^\infty \opP\big[\exists \matX\in \setU_j:\lVert\tp{(\tp{\rveca_1}\matX\rvecb_1\ \dots\ \tp{\rveca_k}\matX\rvecb_k)}\rVert_2< 2^{-j} \big]\\
&\leq C \sum_{j=J}^\infty N_\setU(2^{-j}) 2^{-jk(1-\beta)}(D+j\log 2)^k\label{eq:usebound}\\
&\leq C \sum_{j=J}^\infty 2^{-j(k(1-\beta)-d)} (D+j\log 2)^k\label{eq:usebound2}\\
&<\infty,\label{eq:usebound3}
\end{align}
where in \cref{eq:usebound} we used \cref{eq:use11}--\cref{eq:use12},  \cref{eq:usebound2}  is by \cref{eq:boundmink}, and 
\cref{eq:usebound3} follows from $d<k(1-\beta)$. 
\end{proof}

\begin{lemma}\label{lem:lemballs}
Consider a nonempty and bounded set $\setU\subseteq\reals^{m\times n} \setminus \{\matzero\}$ and let $f\colon \setU\to\reals^k$.   
Fix $\beta\in (0,1)$, and suppose that there exists a $J\in\naturals$ such that 
\begin{align}
\lVert f(\matX)\rVert_2\geq 2^{-j}, \quad\text{for all $\matX\in\setU\!\setminus\!\setB_{m\times n}\big(\matzero,2^{-\beta j}\big)$ and $j\geq J$.}
\end{align} 
Then, we have 
\begin{align}
\inf \Bigg\{\frac{\lVert f(\matX)\rVert_2}{\lVert\matX\rVert_2^{1/\beta}}: \matX\in\setU \Bigg\} >0. 
\end{align}
\end{lemma}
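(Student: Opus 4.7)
The plan is to bound the ratio $\lVert f(\matX)\rVert_2/\lVert\matX\rVert_2^{1/\beta}$ uniformly from below by dichotomizing on $\lVert\matX\rVert_2$ and choosing a dyadic level that matches the H\"older-type scaling built into the hypothesis. First I would use boundedness of $\setU$ to fix $R>0$ with $\lVert\matX\rVert_2\leq R$ for all $\matX\in\setU$, and then split $\setU$ according to whether $\lVert\matX\rVert_2\geq 2^{-\beta J}$ or $\lVert\matX\rVert_2<2^{-\beta J}$.

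For the ``large'' part, the hypothesis applied at the single level $j=J$ yields $\lVert f(\matX)\rVert_2\geq 2^{-J}$ for every such $\matX$; combined with $\lVert\matX\rVert_2^{1/\beta}\leq R^{1/\beta}$, this produces the uniform lower bound $2^{-J}/R^{1/\beta}$. For the ``small'' part, for each $\matX$ I would pick the unique integer $j\geq J$ with $2^{-\beta(j+1)}\leq\lVert\matX\rVert_2<2^{-\beta j}$. Since $\matX\notin\setB_{m\times n}(\matzero,2^{-\beta(j+1)})$ and $j+1\geq J$, the hypothesis at level $j+1$ yields $\lVert f(\matX)\rVert_2\geq 2^{-(j+1)}$, while $\lVert\matX\rVert_2^{1/\beta}<2^{-j}$; dividing gives $\lVert f(\matX)\rVert_2/\lVert\matX\rVert_2^{1/\beta}\geq 1/2$ uniformly in $\matX$.

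Combining the two cases yields $\inf\{\lVert f(\matX)\rVert_2/\lVert\matX\rVert_2^{1/\beta}:\matX\in\setU\}\geq\min\{2^{-J}/R^{1/\beta},\,1/2\}>0$, as desired. There is no real obstacle here; the only delicate point is choosing the dyadic level $j$ so that the powers of $2$ in numerator and denominator cancel up to an absolute constant, and the factor $\beta$ in the hypothesis is precisely tuned for that cancellation.
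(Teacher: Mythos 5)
Your proof is correct. The paper itself gives no argument here---it simply invokes \cite[Lemma 3]{striagbo15} through vectorization---so you have in effect supplied the elementary proof that the citation hides: the dichotomy between $\lVert\matX\rVert_2\geq 2^{-\beta J}$ (where the hypothesis at the single level $j=J$ plus boundedness gives $2^{-J}/R^{1/\beta}$) and $\lVert\matX\rVert_2<2^{-\beta J}$ (where the dyadic shell $2^{-\beta(j+1)}\leq\lVert\matX\rVert_2<2^{-\beta j}$ with $j\geq J$ exists because $\matzero\notin\setU$, and the hypothesis at level $j+1$ gives the uniform constant $1/2$). All the small points check out: the balls in the hypothesis are open, so $\lVert\matX\rVert_2\geq 2^{-\beta(j+1)}$ indeed places $\matX$ outside $\setB_{m\times n}(\matzero,2^{-\beta(j+1)})$, and raising $\lVert\matX\rVert_2<2^{-\beta j}$ to the power $1/\beta>0$ preserves the inequality. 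This is a clean, self-contained replacement for the outsourced lemma.
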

\begin{proof}
Follows from \cite[Lemma 3]{striagbo15} through vectorization. 
\end{proof}

\appendix
\section{Proof of \cref{LemmaExarec2}}\label{ProofLemmaExarec2}

\Cref{exarec} follows from \cite[Lemma III.1, Item i)]{albole19} through vectorization. 

In order to prove \cref{exasum2}, we first note that the sets $\setU_i$ participating in $\setU$ are all $s$-rectifiable by \cref{exarec}. 
To see that a finite union of $s$-rectifiable sets is $s$-rectifiable, we first prove
the statement for two sets and then note that the generalization to finitely many sets follows by induction.
Let  $\setA$ and  $\setB$ be $s$-rectifiable.
By the  definition of  rectifiability,  
there exist compact sets  $\setC, \setD\subseteq\reals^s$  and  Lipschitz mappings $\varphi\colon\setC\to\reals^{m\times n}$ and  $\psi\colon\setD\to\reals^{m\times n}$ such that 
$\setA=\varphi(\setC)$ and $\setB=\psi(\setD)$. 
As the sets  $\setC$ and $\setD$ are compact, there exists a constant $R>0$ such that $\setC \cup \setD \subseteq \setB_s(\veczero,R)$.
The set $\setD+\{3R\}$ is thus disjoint from $\setC$. 
We now define the function
\begin{align*}
  \tilde{\varphi} \colon 
  \setC \cup (\setD+\{3R\}) & \to \reals^{m\times n}
  \\
  \vecx & \mapsto 
    \begin{cases}
      \varphi(\vecx), & \vecx\in \setC \\
      \psi(\vecx-3R), & \vecx \in \setD+\{3R\}
    \end{cases}.
\end{align*}
The set $\setC \cup (\setD+\{3R\}) \subseteq\reals^s$ is compact as the union of compact sets
and $\tilde{\varphi}\big(\setC \cup (\setD+\{3R\})\big) = \setA\cup \setB$.
It remains to establish that $\tilde{\varphi}$ is Lipschitz. 
Indeed, for  vectors $\vecx, \vecy\in \setC$ the Lipschitz property follows from the Lipschitz property of $\varphi$.
Analogously, for $\vecx, \vecy\in \setD+\{3R\}$ the Lipschitz property is inherited from that of $\psi$.
For $\vecx\in \setC$ and $\vecy\in \setD+\{3R\}$, we have that $\lVert\vecx - \vecy\rVert \geq R$ and 
$\lVert\tilde{\varphi}(\vecx) - \tilde{\varphi}(\vecy)\rVert\leq 2\max_{\vecz\in \setC \cup (\setD+\{3R\})} \lVert\tilde{\varphi}(\vecz)\rVert=:M$.
Thus, 
$
\lVert\tilde{\varphi}(\vecx) - \tilde{\varphi}(\vecy)\rVert
\leq 
\frac{M}{R}\lVert\vecx - \vecy\rVert
$ and we obtain Lipschitz continuity of $\tilde{\varphi}$ with Lipschitz constant given by the maximum of $\frac{M}{R}$ and the Lipschitz constants of $\varphi$ and $\psi$.

To prove \cref{exaProdrec2}, let $\setU \in \reals^{m_1\times n_1}$ be $s$-rectifiable and $\setV \in \reals^{m_2\times n_2}$ $t$-rectifiable. 
By the  definition of  rectifiability, there exist compact sets  $\setC\subseteq\reals^s$ and $\setD\subseteq\reals^t$  and  Lipschitz mappings $\varphi\colon\setC\to\reals^{m_1\times n_1}$ and  
$\psi\colon\setD\to\reals^{m_2\times n_2}$ such that 
$\setU=\varphi(\setC)$ and $\setV=\psi(\setD)$. We can therefore write  
$\setU\times \setV=(\varphi\times \psi)(\setC\times \setD)$ with $\setC\times \setD\subseteq\reals^{s+t}$ compact and $\varphi\times \psi\colon \setC\times \setD\to\reals^{m_1\times n_1} \times \reals^{m_2\times n_2}$ Lipschitz. 

It remains to establish \cref{exaC12}. Let $\setK$ be a  compact subset of an $s$-dimensional $C^1$-submanifold  $\setM\subseteq\reals^{m\times n}$. The statement is trivial if 
$\setK=\emptyset$. We hence assume that  $\setK$ is nonempty.  
 By \cite[Definition 5.3.1]{krpa08}, we can write 
\begin{align}
\setM=\bigcup_{\matX\in\setM}\varphi_\matX(\setU_\matX), 
\end{align}
where, for every $\matX\in\setM$, $\setU_\matX\subseteq\reals^s$ is open, and $\varphi_\matX\colon\setU_\matX\to \reals^{m\times n}$ is a one-to-one $C^1$-map satisfying 
$\matX\in\varphi_\matX(\setU_\matX)$ and 
$\varphi_\matX(\setU_\matX)=\setV_\matX\cap\setM$ with $\setV_\matX\subseteq\reals^{m\times n}$ open. 
Since there exists a real analytic diffeomorphism between $\reals^s$ and $\setB_s(\veczero,1)$ \cite[Lemma K.10]{albole19}, we  can assume, w.l.o.g., that the sets $\setU_\matX$ are all bounded.  
 As $\setK\subseteq\setM$ is compact by assumption, 
 there must exist a finite  set $\{\matX_i: i=1,\dots, N\}\subseteq \setM$
such that 
\begin{align}
\setK\subseteq \bigcup_{i=1}^N \varphi_{\matX_i}(\setU_{\matX_i}) 
\end{align}
and $\setV_{\matX_i} \cap\setK\neq\emptyset$, for $i=1,\dots, N$. 
With the set $\{\matX_i: i=1,\dots, N\}\subseteq \setM$, we can now write  
\begin{align}
\setK
&=\bigcup_{i=1}^{N}(\varphi_{\matX_i}(\setU_{\matX_i})\cap\setK)\\
&=\bigcup_{i=1}^{N}\varphi_{i}(\setU_{i})\label{eq:stepa1}\\
&=\bigcup_{i=1}^{N}\varphi_{i}(\overline{\setU}_{i}),\label{eq:stepa2}   
\end{align}
where in \cref{eq:stepa1} we set $\varphi_i=\varphi_{\matX_i}$ and  $\setU_{i}=\setU_{\matX_i}\cap\varphi_i^{-1}(\setK)$, and  \cref{eq:stepa2} is by $\setK=\overline{\setK}$ and the continuity of 
$\varphi_i$. 
The claim now follows from \cref{exasum2} applied to \ref{eq:stepa2}.


\bibliographystyle{emss.bst}
\bibliography{references_RKSB}









\end{document}